\newcommand{\timespent}{\raisebox{-2pt}{\showclock{7}{25}}}
\tikzstyle{every picture}+=[>=stealth,initial text=]
\tikzstyle{accepting}=[accepting by arrow]
\colorlet{irrelevant}{black!60}
\tikzstyle{interval duration}=[draw,dotted,thick,<->]
\tikzstyle{interval length}=[draw,>=angle 90,>-<]
\title{Interrupt Timed Automata: Verification and Expressiveness\thanks{Parts of this paper have been published in the proceedings of FoSSaCS'09~\cite{berard09} and Time'10~\cite{berard10}}}
\author{B\'eatrice B\'erard \and Serge Haddad \and Mathieu Sassolas}
\institute{B\'eatrice B\'erard \and Mathieu Sassolas \at Universit\'e Pierre \& Marie Curie, LIP6/MoVe, CNRS UMR 7606, Paris, France\\ \email{\{beatrice.berard $\mid$ mathieu.sassolas\}@lip6.fr} \and Serge Haddad \at \'Ecole Normale Sup\'erieure de Cachan, LSV, CNRS UMR 8643, INRIA, Cachan, France\\ \email{haddad@lsv.ens-cachan.fr}}
\date{\today}
\journalname{FMSD}
\titlerunning{Interrupt Timed Automata: Verification and Expressiveness}
\authorrunning{B. B\'erard, S. Haddad, M. Sassolas}
\newcommand{\rel}{\bowtie}
\newcommand{\N}{\ensuremath{\mathbb{N}}}
\newcommand{\Q}{\ensuremath{\mathbb{Q}}}
\newcommand{\R}{\ensuremath{\mathbb{R}}}
\newcommand{\A}{\mathcal{A}}
\newcommand{\Ba}{\mathcal{B}}
\newcommand{\C}{\mathcal{C}}
\newcommand{\M}{\mathcal{M}}
\newcommand{\I}{\mathcal{I}}
\newcommand{\T}{\mathcal{T}}
\newcommand{\G}{\mathcal{G}}
\newcommand{\La}{\mathcal{L}}
\newcommand{\U}{\mathcal{U}}
\newcommand{\tr}{\xrightarrow}
\newcommand{\fee}{\varphi}
\newcommand{\eps}{\varepsilon}
\newcommand{\vect}[1]{\mathbf{#1}}
\newcommand{\sem}[1]{\left\llbracket#1\right\rrbracket}
\newcommand{\etat}[2]{%
  \begin{tabular}{c}
    \(#1\) \\ 
    \(#2\)    
  \end{tabular}
}
\newcommand{\timedtrans}[3]{%
  \begin{tabular}{c}
    \(#1\) \\ 
    \(#2\) \\ 
    \(#3\) 
  \end{tabular}
}
\newcommand{\timedtransnoreset}[2]{%
  \begin{tabular}{c}
    \(#1\) \\ 
    \(#2\)    
  \end{tabular}
}
\newcommand{\until}{\ensuremath{{\sf \,U\,}}}
\newcommand{\since}{\ensuremath{{\sf \,S\,}}}
\newcommand{\untilsub}[1]{\ensuremath{{\sf \,U_{#1}\,}}}
\newcommand{\always}{\ensuremath{{\sf A\,}}}
\newcommand{\expath}{\ensuremath{{\sf E\,}}}
\newcommand{\lastoc}[1]{\ensuremath{{\mbox{\TriangleLeft}_{#1}}}}    
\newcommand{\nextoc}[1]{\ensuremath{{\mbox{\TriangleRight}_{#1}}}} 
\newcommand{\globally}{\ensuremath{{\sf G}}}
\newcommand{\eventually}{\ensuremath{{\sf F}}}
\newcommand{\ctl}{\textsf{CTL}\xspace}
\newcommand{\ctlstar}{\textsf{CTL}$^*$\xspace}
\newcommand{\tctl}{\textsf{TCTL}\xspace}
\newcommand{\tctlcint}{\textsf{TCTL}$_c^{\textrm{int}}$\xspace}
\newcommand{\tctlp}{\textsf{TCTL}$_p$\xspace}
\newcommand{\ltl}{\textsf{LTL}\xspace}
\newcommand{\mtl}{\textsf{MTL}\xspace}
\newcommand{\mitl}{\textsf{MITL}\xspace}
\newcommand{\scl}{\textsf{SCL}\xspace}
\newcommand{\propp}{\ensuremath{\mathsf{p}}}
\newcommand{\propq}{\ensuremath{\mathsf{q}}}
\newcommand{\proph}{\ensuremath{\mathsf{h}}}
\newcommand{\ruler}{\raisebox{-1pt}{%
\rule{0.1ex}{0.75em}\rule{1.5em}{0.1ex}\hspace{-1.5em}\rule[0.7em]{1.5em}{0.1ex}\rule{0.1ex}{0.75em}\hspace{-1.5em}%
\hspace{0.125em}{\rule{0.1ex}{0.9ex}}%
\hspace{0.25em}{\rule{0.1ex}{0.9ex}}%
\hspace{0.25em}{\rule{0.1ex}{1.25ex}}%
\hspace{0.25em}{\rule{0.1ex}{0.9ex}}%
\hspace{0.25em}{\rule{0.1ex}{0.9ex}}%
\hspace{0.5em}%
}}
\def\subparagraph{\@startsection{subparagraph}{5}{\z@}%
    {-13pt plus-8pt minus-4pt}{\z@}{\normalsize\bfseries\itshape}}
\newcommand{\MS}[1]{\relax}
\newcommand{\BB}[1]{\relax}
\newcommand{\SH}[1]{\relax}
\begin{document}
\maketitle

\begin{abstract}
  We introduce the class of Interrupt Timed Automata (ITA), a subclass
  of hybrid automata well suited to the description of timed
  multi-task systems with interruptions in a single processor
  environment.

  While the reachability problem is undecidable for hybrid automata we
  show that it is decidable for ITA. More precisely we prove that the
  untimed language of an ITA is regular, by building a finite
  automaton as a generalized class graph.  We then establish that the
  reachability problem for ITA is in NEXPTIME and in PTIME when the
  number of clocks is fixed.  To prove the first result, we define a subclass
  ITA$_-$ of ITA, and show that (1) any ITA can be reduced to a
  language-equivalent automaton in ITA$_-$ and (2) the reachability
  problem in this subclass is in NEXPTIME (without any class graph).

  In the next step, we investigate the verification of real time
  properties over ITA. We prove that model checking \scl, a fragment
  of a timed linear time logic, is undecidable. On the other hand, we
  give model checking procedures for two fragments of timed branching
  time logic.
  
  We also compare the expressive power of classical timed automata and
  ITA and prove that the corresponding families of accepted languages
  are incomparable.  The result also holds for languages accepted by
  controlled real-time automata (CRTA), that extend timed automata. We
  finally combine ITA with CRTA, in a model which encompasses both
  classes and show that the reachability problem is still
  decidable. Additionally we show that the languages of ITA are
  neither closed under complementation nor under intersection.

\keywords{Hybrid automata, timed automata, multi-task systems,
  interrupts, decidability of reachability, model checking, real-time
  properties.}

\end{abstract}

\section{Introduction}

\subsection{Context}

The model of timed automata (TA), introduced in~\cite{alur94a}, has
proved very successful due to the decidability of several important
verification problems including reachability and model checking. A
timed automaton consists of a finite automaton equipped with real
valued variables, called clocks, which evolve synchronously with time,
during the sojourn in states.  When a discrete transition occurs,
clocks can be tested by guards, which compare their values with
constants, and reset. The decidability results were obtained through the
construction of a finite partition of the state space into regions,
leading to a finite graph which is time-abstract bisimilar to the
original transition system, thus preserving reachability.

Consider several tasks executing on a single processor (possibly
scheduled beforehand, although this step is beyond the scope of this
paper).  As a result, tasks are intertwined and may interrupt one
another~\cite{silberschatz08}. Since the behaviour of such systems may
depend on the current execution times of the tasks, a timed model
should measure these execution times, which involves clock suspension
in case of interruptions. Unfortunately, timed automata lack this
feature of clock suspension, hence more expressive models should be
considered.

Hybrid automata (HA) have subsequently been proposed as an extension
of timed automata~\cite{maler92}, with the aim to increase the
expressive power of the model. In this model, clocks are replaced by
variables which evolve according to a differential
equation. Furthermore, guards consist of more general constraints on
the variables and resets are extended into (possibly non
deterministic) updates. This model is very expressive, but
reachability is undecidable in HA.  The simpler model obtained by
allowing clocks to be stopped and resumed, stopwatch automata (SWA),
would be sufficient to model task interruptions in a processor.
However, reachability is also undecidable for SWA~\cite{cassez00}.
Many classes have been defined, between timed and hybrid automata, to
obtain the decidability of this problem.

Task automata~\cite{fersman07} and suspension
automata~\cite{mcmanis94} model explicitly the scheduling of
processes.  Some classes restrict the use of variation of clock rate
in hybrid automata to achieve decidability.  Examples of such classes
are systems with piece-wise constant derivatives~\cite{asarin95},
controlled real-time automata~\cite{Zielonka}. Guards may also be
restricted, as in multi-rate or rectangular automata~\cite{alur95},
some integration graphs~\cite{kesten99}, or polygonal hybrid
systems~\cite{asarin07}.  Restricting reset may also lead to
decidability as in the hybrid automata with strong
resets~\cite{bouyer08} or initialized stopwatch automata~\cite{hen98}.
O-minimal hybrid systems~\cite{lafferriere99,lafferriere01} provide
algebraic constraints on hybrid systems to yield decidability.
Extensions of timed automata to release some constraints were also
considered, as in some updatable timed automata~\cite{bouyer04}.

While untimed properties like reachability and
\ltl~\cite{pnueli77,sistla85} or \ctl model
checking~\cite{emerson82,queille82,Cim02}, are useful for such models,
real time verification consider more precise requirements, for
instance quantitative response time properties. Therefore, timed
extensions of these logics have been defined.  In the case of linear
time logics, verification of the most natural extension
\mtl~\cite{koymans90} is undecidable on TA.  However, several
decidable fragments such as \mitl~\cite{alur96} and
\scl~\cite{raskin97} have subsequently been defined.  In the case of
timed variants of branching time logics, different versions of Timed
\ctl (\tctl)~\cite{alur93,HNSY94} have been defined.  Model checking
procedures on TA for both versions of \tctl have been developed and
implemented in several tools~\cite{uppaal04,kronos98}.

\subsection{Contributions}

In this paper, we define a subclass of hybrid automata, called
Interrupt Timed Automata (ITA), well suited to the description of
multi-task systems with interruptions in a single processor
environment.

\paragraph{The ITA model.}
In an ITA, the finite set of control states is organized according to
\emph{interrupt levels}, ranging from $1$ to $n$, with exactly one
active clock for a given level. The clocks from lower levels are
suspended and those from higher levels are not yet defined (thus have
arbitrary value $0$). On the transitions, guards are linear
constraints using only clocks from the current level or the levels
below and the relevant clocks can be updated by linear expressions,
using clocks from lower levels.
Finally, each state has a policy (lazy, urgent or delayed) that rules
the sojourn time. This model is rather expressive since it combines
variables with rate $1$ or $0$ (usually called stopwatches) and linear
expressions for guards or updates.  The ITA model is formally defined
in Section~\ref{sec:background}.

\paragraph{Reachability problem.}
As said before, the reachability problem is undecidable for automata
with stopwatches~\cite{hen98,cassez00,brihaye06}.  However, we prove
that it is decidable for ITA.

More precisely, we first show that the untimed language of an ITA is
effectively regular (Section~\ref{sec:regular}). The corresponding
procedure significantly extends the classical region construction
of~\cite{alur94a} by associating with each state a family of orderings
over linear expressions. This construction yields a decision algorithm
for reachability in $2$-EXPTIME, and PTIME when the number of clocks
is fixed. This should be compared to TA with 3 clocks for which
reachability is PSPACE complete~\cite{courcoubetis92}.

We define a slight restriction of the model, namely ITA$_-$, which
forbids updates of clocks other than the one of the current level.  We
prove that for any ITA one can build an equivalent ITA$_-$ w.r.t.\
language equivalence, whose size is at most exponential w.r.t.\
the size of the ITA and polynomial when the number of clocks is fixed.
Based on the existence of a bound for the length of the minimal
reachability path, we then show that reachability on ITA$_-$ can be
decided in NEXPTIME without any class graph construction.
This yields a NEXPTIME procedure for reachability in ITA (Section~\ref{sec:reachcomplexity}).

\paragraph{Model checking over ITA.}
We then focus on the verification of real time properties for ITA
(Section~\ref{sec:modelchecking}), expressed in timed extensions of
\ltl and \ctl.

First we show that the model checking of timed (linear time) logic
\mitl~\cite{alur96} is undecidable.  Actually, even the fragment
\scl~\cite{raskin97} cannot be verified on ITA, while the
corresponding verification problem over TA is PSPACE-complete.

We then consider two fragments of the timed (branching time) logic
\tctl, introduced in~\cite{HNSY94} and also studied later from the
expressiveness point of view~\cite{bouyer05}.  The first one,
\tctlcint, contains formulas involving comparisons of model clocks as
atomic propositions. In this logic, it is possible to express
properties like: \emph{(P1) a safe state is reached before spending 3
  t.u. in handling some interruption}.  Decidability is obtained by a
generalized class graph construction in 2-EXPTIME (PTIME if the
number of clocks is fixed).  Since the corresponding fragment cannot
refer to global time, we consider a second fragment, \tctlp, in which
we can reason on minimal or maximal delays.  Properties like
\emph{(P2) the system is error free for at least 50 t.u.} or
\emph{(P3) the system will reach a safe state within 7 t.u.}  can be
expressed. In this case, the decidability procedure has a complexity
in NEXPTIME for the existential fragment and 2-EXPTIME for the
universal fragment (respectively NP and co-NP if the number of clocks
is fixed).

\paragraph{Expressiveness.}
We also study the expressive power of the class ITA
(Section~\ref{sec:exp}), in comparison with the original model of
timed automata and the more general controlled real-time automata
(CRTA) proposed in~\cite{Zielonka}. In CRTA, clocks and states are
colored and a time rate is associated with every state.  During the
visit of a state, all clocks colored by the color of the state evolve
with the state rate while the others do not evolve.  We prove that the
corresponding families of languages ITL and TL, as well as ITL and
CRTL, are incomparable.  Additionally we show that ITL is neither
closed under complementation nor under intersection.

\paragraph{Extensions.}
We finally investigate compositions of ITA and other timed models
(Section~\ref{sec:combination}).  In the first composition, a
synchronous product of an ITA and a TA, we prove that the reachability
problem becomes undecidable.  We then define a more appropriate
product of ITA and CRTA. The CRTA part describes a basic task at an
implicit additional level $0$.  For this extended model denoted by
ITA$^+$, we show that reachability is still decidable with the same
complexity and in PSPACE when the number of clocks is fixed.

\section{Interrupt Timed Automata}\label{sec:background}

\subsection{Notations}
The sets of natural, rational and real numbers are denoted
respectively by $\N$, $\Q$ and $\R$.  A \emph{timed word} over an
alphabet $\Sigma$ is a finite sequence $w=(a_1,\tau_1) \ldots
(a_n,\tau_n)$ where $a_i$ is in $\Sigma$ and $(\tau_i)_{1 \leq i \leq
  n}$ is a non-decreasing sequence of real numbers.  The \emph{length}
of $w$ is $n$ and the \emph{duration} of $w$ is $\tau_n$.

For a finite set $X$ of clocks, a linear expression over $X$ is a term
of the form $\sum_{x \in X} a_x \cdot x + b$ where $b$ and $(a_x)_{x
  \in X}$ are in $\Q$. We denote by $\C(X)$ the set of constraints
obtained by conjunctions of atomic propositions of the form $C \rel
0$, where $C$ is a linear expression over $X$ and $\rel \,\in
\{>,\geq,=,\leq,<\}$. The subset $\C_0(X)$ of $\C(X)$ contains
constraints of the form $x +b \rel 0$. An \emph{update} over $X$ is a
conjunction (over $X$) of assignments of the form $x := C_x$, where
$x$ is a clock and $C_x$ is a linear expression over $X$.  The set of
all updates over $X$ is written $\U(X)$, with $\U_0(X)$ for the subset
containing only assignments of the form $x := 0$ (reset) or of the
form $x := x$ (no update). For a linear expression $C$ and an update
$u$, the expression $C[u]$ is obtained by ``applying'' $u$ to $C$,
\textit{i.e.} substituting each $x$ by $C_x$ in $C$, if $x := C_x$ is
the update for $x$ in $u$. For instance, for the set of two clocks $X
= \{x_1, x_2\}$, expression $C= x_2 -2x_1 + 3$ and update $u$ defined
by $x_1 := 1 \wedge x_2 := 2x_1 +1$, applying $u$ to $C$ yields the
expression $C[u] = 2x_1 + 2$.

A clock valuation is a mapping $v : X \mapsto \R$, with $\vect{0}$ the
valuation where all clocks have value $0$.  The set of all clock
valuations is $\R^X$ and we write $v \models \fee$ when valuation $v$
satisfies the clock constraint $\fee \in \C(X)$. For a valuation $v$,
a linear expression $C$ and an update $u$, the value $v(C)$ is
obtained by replacing each $x$ in $C$ by $v(x)$ and the valuation
$v[u]$ is defined by $v[u](x) = v(C_x)$ for $x$ in $X$ if $x := C_x$
is the update for $x$ in $u$. Observe that an update is performed
simultaneously on all clocks. For instance, let $X = \{x_1, x_2,
x_3\}$ be a set of three clocks. For valuation $v = (2, 1.5, 3)$ and
update $u$ defined by $x_1 := 1 \wedge x_2 := x_2 \wedge x_3 := 3x_2 -
x_1$, applying $u$ to $v$ yields the valuation $v[u] = (1, 1.5, 2.5)$.
\subsection{Models of timed systems}

The model of ITA is based on the principle of multi-task systems with
interruptions, in a single processor environment.  We consider a set
of tasks with different priority levels, where a higher level task
represents an interruption for a lower level task. At a given level,
exactly one clock is active (rate $1$), while the clocks for tasks of
lower levels are suspended (rate $0$), and the clocks for tasks of
higher levels are not yet activated and thus contain value $0$. The
mechanism is illustrated in \figurename~\ref{fig:levels}, where
irrelevant clock values are greyed.  An example of such behavior can
be produced by the ITA depicted in \figurename~\ref{fig:italevels},
which describes a system that answer requests according to their
priority. It starts by receiving a request for a \emph{main} task of
priority $1$. The treatment of this task can be interrupted by tasks
of priority $2$ or $3$, depending on how far the system is in the
execution of the main task.  Tasks of priority $2$ and $3$ may
generate errors (modeled by an interruption of higher level), after
which the system recovers.  On this system, deciding if it is possible
-- or always the case -- that the main task is executed in less than a
certain amount of time would give an insight on the quality of service
of the system.  \MS{Reviewer 2, Long commentaire 4. \`A relire.}

\begin{figure}
\centering
\begin{tikzpicture}[scale=0.8,node distance=1cm,auto]
\node[anchor=east] (n1) at (0,0) {level 1};
\node (r1) at ($(n1) + (2,0)$) {$\forall i, x_i:=0$};
\node[anchor=east] (n2) [above of=n1] {level 2};
\node[anchor=east] (n3) [above of=n2] {level 3};
\node[anchor=east] (n4) [above of=n3] {level 4};
\node[inner sep=0pt] (p1) at ($(r1.east) + (1.5,0)$) {};

\path[->,draw=black,thick] (r1.east) -- (p1);
\path[->,draw=black,thick,dashed] (p1) -- (n3 -| p1);
\path[->,draw=black,thick] (n3 -| p1) -- ++(2.1,0) node[inner sep=0pt](p2) {};
\path[->,draw=black,thick,dashed] (p2) -- (n4 -| p2);
\path[->,draw=black,thick] (n4 -| p2) -- ++(1.7,0) node[inner sep=0pt](p3) {};
\path[->,draw=black,thick,dashed] (p3) -- (n1 -| p3);
\path[->,draw=black,thick] (n1 -| p3) -- ++(2.2,0) node[inner sep=2pt,anchor=west](p4) {\dots};
\node[text width=1.1cm,anchor=south east] at ($(n1 -| p3) + (0,-0.2)$) {\small $x_4 := 0$ $x_3 := 0$ $x_2 := 0$};

\tikzstyle{leg}=[text width=1cm, inner ysep=-10pt, inner xsep=-3pt]
\node[leg,anchor=east] (vleg) at (0,-1.5) {\parbox{\textwidth}{\[\left[\begin{array}{c} x_1 \\ x_2 \\ x_3 \\ x_4 \end{array}\right]\]}};
\node[leg,text width=0.75cm] (leg0) at (vleg -| r1.east)  {\parbox{\textwidth}{\[\left[\begin{array}{c} 0\\ \color{irrelevant}{0}\\ \color{irrelevant}{0}\\ \color{irrelevant}{0}\end{array}\right]\]}};
\node[leg] (leg1) at (vleg -| p1) {\parbox{\textwidth}{\[\left[\begin{array}{c} 1.5\\ \color{irrelevant}{0}\\\color{irrelevant}{0}\\ \color{irrelevant}{0}\end{array}\right]\]}};
\node[leg] (leg2) at (vleg -| p2) {\parbox{\textwidth}{\[\left[\begin{array}{c} 1.5\\ 0\\2.1\\\color{irrelevant}{0}\end{array}\right]\]}};
\node[leg] (leg3) at (vleg -| p3) {\parbox{\textwidth}{\[\left[\begin{array}{c} 1.5\\ 0\\ 2.1 \\ 1.7\end{array}\right]\]}};
\node[leg] (leg4) at (vleg -| p4.west) {\parbox{\textwidth}{\[\left[\begin{array}{c} 3.7\\ \color{irrelevant}{0}\\ \color{irrelevant}{0}\\ \color{irrelevant}{0}\end{array}\right]\]}};

\path[->] (leg0) edge node {$1.5$} (leg1);
\path[->] (leg1) edge node {$2.1$} (leg2);
\path[->] (leg2) edge node {$1.7$} (leg3);
\path[->] (leg3) edge node {$2.2$} (leg4);
\end{tikzpicture}
\caption{Interrupt levels and clocks in an ITA.}
\label{fig:levels}
\end{figure}
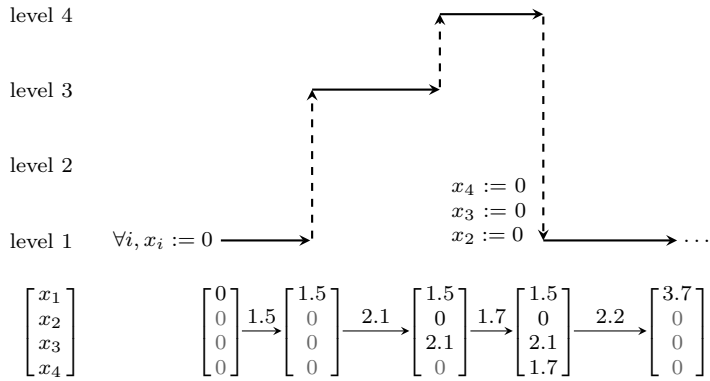

\begin{figure}
\centering
\begin{tikzpicture}[auto,node distance=4cm]
\node[state] (q0) at (0,0) {$q_1,1$};
\node[state,accepting,node distance=5cm,right of=q0] (q1) {$q_2,1$};
\node[state,initial,node distance=5cm,left of=q0] (qi) {$q_0,1$};
\node[state] (q2) at (0,2.5) {$q_3,2$};
\node[state] (q3) at (0,-2.5) {$q_4,3$};
\node[state,right of=q2] (q2err) {$q_5,4$};
\node[state,right of=q3] (q3err) {$q_6,4$};

\path[->] (q0) edge node {$3 \leq x_1 \leq 5$, \textit{answer\_prio1}} (q1);
\path[->] (qi) edge node {\textit{request\_prio1}, $x_1:=0$} (q0);
\path[->] (q0) edge[bend left] node {$x_1 \leq 1$, \textit{request\_prio2}} (q2);
\path[->] (q2) edge[bend left] node {$1 \leq x_2 \leq 2$, \textit{answer\_prio2}} (q0);
\path[->] (q0) edge[bend right,swap] node {$x_1 \leq 2$, \textit{request\_prio3}} (q3);
\path[->] (q3) edge[bend right,swap] node {$2 \leq x_3 \leq 3$, \textit{answer\_prio3}} (q0);
\path[->] (q2) edge[bend left=10] node {\textit{error}} (q2err);
\path[->] (q2err) edge[bend left=10] node {$x_4 \leq 2$, \textit{recover}} (q2);
\path[->] (q3) edge[bend left=10] node {\textit{error}} (q3err);
\path[->] (q3err) edge[bend left=10] node {$x_4 \leq 2$, \textit{recover}} (q3);

\end{tikzpicture}
%
%
%
\caption{An ITA that produces -- among others -- the behavior represented in \figurename~\ref{fig:levels}.}
\label{fig:italevels}
\end{figure}
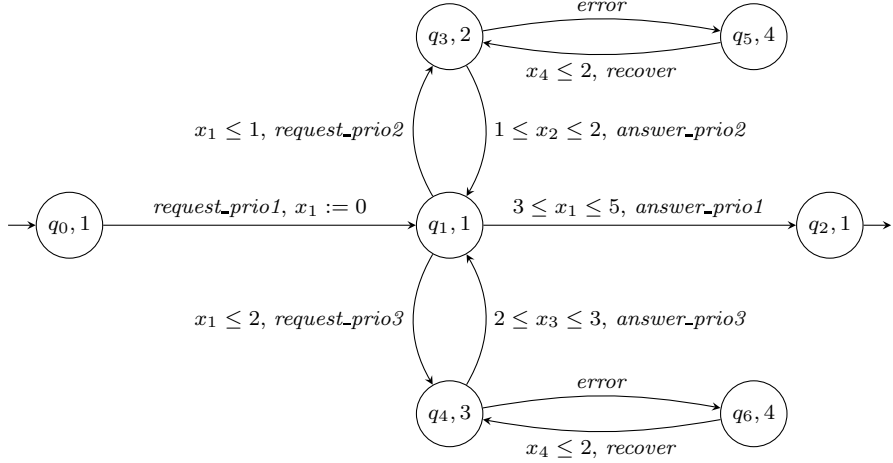

Enabling of a transition depends on the clocks valuation. The enabling
conditions, called \emph{guards}, are linear constraints on the clock
values of levels lower than or equal to the current level: the ones
that are relevant before the firing of the transition.  Additionally,
a transition can update the values of the clocks.  If the transition
decreases (resp. increases) the level, then each clock which is
relevant after (resp. before) the transition can either be left
unchanged or take a linear expression of clocks of strictly lower
level.

Along with its level, each state has a timing policy which indicates
whether time may (Lazy, default), may not (Urgent) or must (Delayed)
elapse in a state. Note that in TA, this kind of policy can be
enforced by an additional clock while this is not possible here
because there is a single clock per level.  This additional feature is
needed for the definition and further use of the model of ITA$_-$ (see
Section~\ref{sec:reachcomplexity}).  Note that the class graph
construction of Section~\ref{sec:regular} is still valid without them.
\MS{Reviewer 1, Rq 2. Done}

We also add a labeling of
states with atomic propositions, in view of interpreting logic
formulas on these automata.  In the sequel, the level of a transition
is the level of its source state. We also say that a transition is
lazy (resp. urgent, delayed) if the policy of its source state is lazy
(resp. urgent, delayed).

\begin{definition}
  An \emph{interrupt timed automaton} is a tuple $\A=\langle\Sigma,
  AP, Q, q_0, F,pol, X, \lambda,$ $lab, \Delta\rangle$, where:
\begin{itemize}
\item $\Sigma$ is a finite alphabet, $AP$ is a set of atomic
  propositions
	\item $Q$ is a finite set of states, $q_0$ is the initial
        state, $F \subseteq Q$ is the set of final states,
      \item $pol: Q \rightarrow \{Lazy,Urgent,Delayed\}$ is the timing
        policy of states,
	\item $X=\{x_1, \ldots, x_n\}$ consists of $n$ interrupt clocks,
	\item the mapping $\lambda : Q \rightarrow \{1, \ldots, n\}$
          associates with each state its level and we call
          $x_{\lambda(q)}$ the \emph{active clock} in state $q$. The
          mapping $lab : Q \rightarrow 2^{AP}$ labels each state with
          a subset of $AP$ of atomic propositions,
	\item $\Delta \subseteq Q \times \C(X) \times (\Sigma \cup
          \{\eps\}) \times \U(X) \times Q$ is the set of transitions.
          Let $q \tr{\fee, a, u} q'$ in $\Delta$ be a transition with
          $k=\lambda(q)$ and $k'=\lambda(q')$. The guard $\fee$ is a
          conjunction of constraints $\sum_{j=1}^k a_jx_j +b \rel 0$
          (involving only clocks from levels less than or equal to
          $k$). The update $u$ is of the form $\wedge_{i=1}^{n} x_i :=
          C_i$ with:
	\begin{itemize}
        \item if $k > k'$, \textit{i.e.} the transition decreases the
          level, then for $1 \leq i \leq k'$, $C_i$ is either of the
          form $\sum_{j=1}^{i-1} a_jx_j+b$ or $C_i=x_i$ (unchanged
          clock value) and for $i > k'$, $C_i=0$;
\MS{Reviewer 1, Rq 1. Done.}
        \item if $k \leq k'$ then for $1 \leq i \leq k$, $C_i$ is of
          the form $\sum_{j=1}^{i-1} a_jx_j +b$ or $C_i=x_i$, and for
          $i > k$, $C_i=0$.
	\end{itemize}
\end{itemize}
\end{definition}
A configuration $(q,v,\beta)$ of the associated transition system
consists of a state $q$ of the ITA, a clock valuation $v$ and a
boolean value $\beta$ expressing whether time has elapsed since the
last discrete transition.  This third component is needed to define
the semantics according to the policies.  \MS{Reviewer 1, Rq 5. Done.}
\begin{definition} 
\label{def:semantics}
The semantics of an ITA $\A$ is defined by the (timed) transition
system $\T_{\A}= (S, s_0, \rightarrow)$.  The set $S$ of
configurations is $\left\{\!(q,v,\beta) \mid q \in Q, \ v \in \R^X, \
  \beta \in \{\top,\bot\} \!\right\}\!$, with initial configuration
$s_0=(q_0, \vect{0},\bot)$. The relation $\rightarrow$ on $S$ consists
of two types of steps:
\begin{description}[font=\em]
\item[Time steps:] Only the active clock in a state can evolve, all
  other clocks are suspended.  For a state $q$ with active clock
  $x_{\lambda(q)}$, a time step of duration $d>0$ is defined by
  $(q,v,\beta) \tr{d} (q, v',\top)$ with
  $v'(x_{\lambda(q)})=v(x_{\lambda(q)})+ d$ and $v'(x)=v(x)$ for any
  other clock $x$. A time step of duration $0$ leaves the system
  $\T_{\A}$ in the same configuration.  When $pol(q)=Urgent$, only
  time steps of duration $0$ are allowed from $q$.
\item[Discrete steps:] A discrete step $(q, v,\beta) \tr{a} (q', v',\bot)$
  can occur if there exists a transition $q \tr {\fee, a, u} q'$ in
  $\Delta$ such that $v \models \fee$ and $v' = v[u]$.  When
  $pol(q)=Delayed$ and $\beta=\bot$, discrete steps are forbidden.
\end{description}
\end{definition}
The labeling function $lab$ is naturally extended to configurations by
$lab(q,v,\beta)= lab(q)$.

\smallskip An ITA $\A_1$ is depicted in \figurename~\ref{fig:exita1},
with two interrupt levels (and two interrupt clocks).  A geometric
view is given in figure~\ref{fig:traj}, with a possible trajectory:
first the value of $x_1$ increases from $0$ in state $q_0$ (horizontal
line) and, after transition $a$ occurs, its value is frozen in state
$q_1$ while $x_2$ increases (vertical line) until reaching the line
$x_2 = -\frac{1}{2}x_1 + \frac{1}{2}$. The light grey zone defined by $\left(0
  < x_1 < 1, \ 0 < x_2 < -\frac{1}{2}x_1 + \frac{1}{2}\right)$ corresponds to the set of valuations reachable in state $q_1$ and from which state $q_2$ is reachable.
\MS{Reviewer 1, Rq 6. Done.}

\begin{figure}[ht]
\centering
\subfigure[An ITA $\A_1$ with two interrupt levels]{\label{fig:exita1}
\begin{tikzpicture}[node distance=2.5cm,auto]
\node[state,initial] (q0) at (0,0) {$q_0,1$};
\node[state] (q1) [node distance=2.75cm,above right of=q0] {$q_1,2$};
\node[state,accepting] (q2) [right of=q1] {$q_2,2$};

\path[->] (q0) edge node {\timedtrans{x_1 < 1}{a}{(x_2:=0)}} (q1);
\path[->] (q1) edge node (tr) {\timedtransnoreset{x_1 + 2 x_2 = 2}{b}} (q2);
\end{tikzpicture}
}\hfill{~}
\subfigure[A possible trajectory in $\A_1$]{\label{fig:traj}
\begin{tikzpicture}[scale=1.5]
\path[draw=black,->] (0,0) -- (2.25,0) node[anchor=west] {$x_1$};
\path[draw=black,->] (0,0) -- (0,1.75) node[anchor=south east] {$x_2$};
\node[anchor=north east] at (0,0) {$0$};

\path[draw=black] (0,1) -- (2,0);
\path[draw=black] (1,0) -- (1,0.5);
\node[anchor=north] at (1,0) {$1$};
\node[anchor=north] at (2,0) {$2$};
\node[anchor=east] at (0,1) {$1$};

\node[anchor=north] (a) at (0.7,0) {\large$a$};
\node[anchor=south west] (b) at (0.7,0.65) {\large$b$};
\path[draw=black,very thick] (0,0) -- (a.north);
\path[fill=black!25] (0.05,0.05) -- (0.95,0.05) -- (0.95,0.475) -- (0.05,0.925) -- cycle;
\path[draw=black,very thick] (a.north) -- (b.south west);
\fill(a.north) circle (0.05);
\fill (b.south west) circle (0.05);
\end{tikzpicture}
}

\caption{An example of ITA and a possible execution.}
\label{fig:exita1traj}
\end{figure}
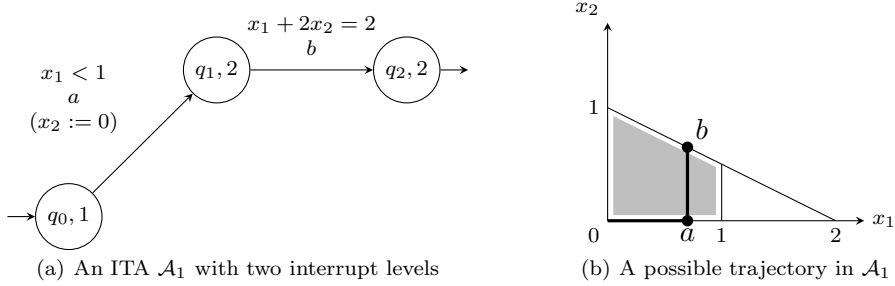

\bigskip
We now briefly recall the classical model of Timed Automata
(TA)~\cite{alur94a} as well as the model of Controlled Real-Time
Automata (CRTA)~\cite{Zielonka}. Note that in both models, timing
policies can be enforced by clock constraints.

\begin{definition}
  A \emph{timed automaton} is a tuple $\A=\langle\Sigma, Q, q_0, F, X,
  \Delta\rangle$, where $\Sigma$, $Q$, $q_0$, $F$ are defined as in an
  ITA, $X$ is a set of clocks and the set of transitions is $\Delta
  \subseteq Q \times \C_0(X) \times (\Sigma \cup \{\eps\}) \times
  \U_0(X) \times Q$, with guards in $\C_0(X)$ and updates in
  $\U_0(X)$.
\end{definition}

The semantics of a timed automaton is also defined as a timed
transition system, with the set $Q \times \R^X$ of configurations (no
additional boolean value). Discrete steps are similar to those of ITA
but in time steps, all clocks evolve with same rate $1$: $(q,v) \tr{d}
(q, v')$ iff for each clock $x$ in  $X$,  $v'(x) = v(x) + d$.

\bigskip Controlled Real-Time Automata extend TA with the
following features: the clocks and the states are partitioned
according to colors belonging to a set $\Omega$ and with every state
is associated a rational velocity.  When time elapses in a state, the
set of active clocks (i.e. with the color of the state) evolve with
rate equal to the velocity of the state while other clocks remain
unchanged.  For sake of clarity, we now propose a slightly
simplified version of CRTA.

\begin{definition}
\label{def:crta}
A CRTA $\A=(\Sigma, Q, q_0, F, X, up, low, vel, \lambda, \Delta)$ on a
finite set $\Omega$ of colors is defined by:
\begin{itemize}
\item $\Sigma$, the alphabet of actions,
\item $Q$, the set of states, with $q_0 \in Q$ the initial state and
  $F \subseteq Q$ the set of final states,
\item $X$ the set of clocks,
\item mappings $up$ and $low$ associate with each clock respectively 
an upper and a lower bound,
\item $vel: Q \mapsto \Q$ the velocity mapping,
\MS{Reviewer 2, Rq 1. Done}
\item $\lambda : X \uplus Q \mapsto \Omega$ the coloring mapping and
\item $\Delta \subseteq Q \times \C_0(X) \times (\Sigma \cup \{\eps\})
  \times \U_0(X) \times Q$ the set of transitions, with guards in
  $\C_0(X)$ and updates in $\U_0(X)$.
\end{itemize}
Moreover, the lower and upper bound mappings satisfy $low(x) \leq 0
\leq up(x)$ for each clock $x \in X$, and $low(x) \leq b \leq up(x)$
for each constant $b$ such that $x \rel b$ is a constraint in $\A$.
\end{definition}

The original semantics of CRTA is rather involved in order to obtain
decidability of the reachability problem.  It ensures that entering a
state $q$ in which clock $x$ is active, the following conditions on
the clock bounds hold : if $vel(q) > 0$ then $x \geq low(x)$ and if
$vel(q) < 0$ then $x \leq up(x)$. Instead (and equivalently) we add a
syntactical restriction which ensures this behavior.  For instance, if
a transition with guard $\fee$ and reset $u$ enters state $q$ with
$vel(q)<0$ and if $x$ is the only clock such that
$\lambda(x)=\lambda(q)$, then we replace this transition by two other
transitions: the first one has guard $\fee \wedge x > up(x)$ and adds
$x:=0$ to the reset condition $u$, the other has guard $\fee \wedge x
\leq up(x)$ and reset $u$. In the general case where $k$ clocks have
color $\lambda(q)$, this leads to $2^k$ transitions.  With this
syntactical condition, again the only difference from ITA concerns a
time step of duration $d$, defined by $(q,v) \tr{d} (q, v')$, with
$v'(x)=v(x)+ vel(q)d$ if $\lambda(x)=\lambda(q)$ and $v'(x)=v(x)$
otherwise.

\bigskip A run of an automaton $\A$ in ITA, TA or CRTA is a finite or
infinite path in the associated timed transition system $\T_{\A}$,
where (possibly null) time steps and discrete steps alternate. An
\emph{accepting run} is a finite run starting in $s_0$ and ending in a
configuration associated with a state of $F$. For such a run with
label $d_1 a_1 d_2 \ldots d_n a_n$, we say that the word $(a_1,d_1)
(a_2,d_1+d_2)\ldots (a_n,d_1+\cdots+d_n)$ (where $\eps$ actions are
removed) is accepted by $\A$. The set $\La(\A)$ contains the timed
words accepted by $\A$ and $Untimed(\La(\A))$, the untimed language of
$\A$, contains the projections onto $\Sigma^*$ of the timed words in
$\La(\A)$. Interrupt Timed Languages or
ITL (resp. Timed Languages or TL and Controlled Real-Time Languages or
CRTL) denote the family of timed languages accepted by an ITA (resp. a
TA and a CRTA).

For instance, the language $L_1$ accepted by the ITA $\A_1$ in
\figurename~\ref{fig:exita1} is \[L_1 = \La(\A_1) = \{ (a,\tau)(b, 1 +
\frac{\tau}{2}) \mid 0 \leq \tau <1 \}.\]

Languages of infinite timed words accepted by B\"uchi or Muller conditions could be studied 
but this analysis should address technical issues such as Zeno runs and infinite sequences of $\eps$-transitions.
\MS{Reviewer 1, Rq 7. Done.}

In the context of model-checking, we also consider \emph{maximal runs}
which are either infinite or such that no discrete step is possible
from the last configuration. The set of maximal runs starting from
configuration $s$ is denoted by $Exec(s)$. Since maximal runs can be
finite or infinite, we do not exclude Zeno behaviors. We use the
notion of (totally ordered) positions (which allow to consider several
discrete actions simultaneously) along a maximal run~\cite{HNSY94}:
for a run $\rho$, we denote by $<_{\rho}$ the strict order over
positions. For position $\pi$ along $\rho$, the corresponding
configuration is denoted by $s_{\pi}$, the prefix of $\rho$ up to
$\pi$ is written $\rho^{\leq \pi}$ and its duration,
$Dur\left(\rho^{\leq\pi}\right)$, is the sum of all delays along the
finite run $\rho^{\leq \pi}$.  Similarly, the suffix of $\rho$
starting from $\pi$ is denoted by $\rho^{\geq \pi}$.  For two
positions $\pi \leq_{\rho} \pi'$, the subrun of $\rho$ between these
positions is written $\rho_{[\pi,\pi']}$, its duration is
$Dur\left(\rho^{\leq \pi'}\right)-Dur\left(\rho^{\leq \pi}\right)$.
The length of $\rho$, denoted by $|\rho|$, is the number of discrete
transitions occurring in $\rho$.

\section{Regularity of untimed ITL}
\label{sec:regular}
We prove in this section that the untimed language of an ITA is
regular.  Similarly to TA (and to CRTA), the proof is based on the
construction of a (finite) class graph which is time abstract
bisimilar to the transition system $\T_{\A}$.
This result also holds for infinite words with standard B\"uchi conditions.
As a consequence, we obtain decidability of the reachability problem, as well as decidability for plain \ctlstar model-checking.
\MS{Reviewer 2, Rq 4 and 7. Done}

The construction
of classes is much more involved than in the case of TA. More
precisely, it depends on the expressions occurring in the guards and
updates of the automaton (while in TA it depends only on the maximal
constant occurring in the guards).  We associate with each state $q$ a
set of expressions $Exp(q)$ with the following meaning. The values of
clocks giving the same ordering of these expressions correspond to a
class.  In order to define $Exp(q)$, we first build a family of sets
\MS{Reviewer 2, Rq 2. Done}
$\{E_k\}_{1\leq k\leq n}$.  Then $Exp(q)= \bigcup_{k\leq \lambda(q)}
E_k$ (recall that $\lambda(q)$ is the index of the active clock in
state $q$).  Finally in Theorem~\ref{prop:reach} we show how to build
the class graph which proves the regularity of the untimed language.
This immediately yields a reachability procedure given in Proposition~\ref{prop:reach}.

\subsection{Construction of $\{E_k\}_{k\leq n}$}
\label{subsec:contructionexpressions}

\medskip We first introduce an operation, called \emph{normalization},
on expressions relative to some level. As explained in the
construction below, this operation will be used to order expression
values at a given level.

\begin{definition}[Normalization]
Let $C=\sum_{i\leq k}a_ix_i+b$ be an expression over $X_k= \{x_i
\mid i \leq k\}$, the \emph{$k$-normalization} of $C$, ${\tt
  norm}(C,k)$, is defined by:
\begin{itemize}
	\item if $a_k\neq 0$ 
then ${\tt norm}(C,k)=x_k+(1/a_k)(\sum_{i<k}a_ix_i+b)$;
	\item else ${\tt norm}(C,k)=C$. 
\end{itemize}
\end{definition}

Since guards are linear expressions with rational constants, we can
assume that in a guard $C \rel 0$ occurring in a transition outgoing
from a state $q$ with level $k$, the expression $C$ is either
$x_k+\sum_{i<k}a_ix_i+b$ (by $k$-normalizing the expression and if
necessary changing the comparison operator) or $\sum_{i<k}a_ix_i+b$.
It is thus written as $\alpha x_k+\sum_{i<k}a_ix_i+b$, with $\alpha
\in \{0,1\}$.

\bigskip The construction of $\{E_k\}_{k\leq n}$ proceeds top down
from level $n$ to level $1$ after initializing $E_k=\{x_k,0\}$ for all
$k$.  As we shall see below, when handling the level $k$, we add new
terms to $E_i$ for $1\leq i\leq k$.
These expressions are the ones needed to compute a (pre)order on the expressions in $E_k$.
\MS{Reviewer 1, Rq 10. Done}
\begin{itemize}
\item At level $k$, first for every expression $\alpha
x_k+\sum_{i<k}a_ix_i+b$ (with $\alpha \in \{0,1\}$) occurring in a
guard of an edge leaving a state of level $k$, we add
$-\sum_{i<k}a_ix_i-b$ to $E_k$.
\item Then we iterate the following procedure until no new term is
added to any $E_i$ for $1\leq i\leq k$.
	\begin{enumerate}
        \item Let $q \tr{\fee,a,u} q'$ with $\lambda(q)\geq k$ and
          $\lambda(q')\geq k$. Let $C \in E_{k}$, then we add $C[u]$
          to $E_{k}$ (recall that $C[u]$ is the expression obtained by
          applying update $u$ to $C$).
      \item Let $q \tr{\fee,a,u} q'$ with $\lambda(q) < k$ and
        $\lambda(q') \geq k$.  Let $C$ and $C'$ be two different
        expressions in $E_{k}$. We compute $C''={\tt
          norm}(C[u]-C'[u],\lambda(q))$, choosing an arbitrary order
        between $C$ and $C'$ in order to avoid redundancy. Let us
        write $C''$ as $\alpha
        x_{\lambda(q)}+\sum_{i<\lambda(q)}a_ix_i+b$ with $\alpha \in
        \{0,1\}$. Then we add $-\sum_{i<\lambda(q)}a_ix_i-b$ to
        $E_{\lambda(q)}$.
	\end{enumerate}
\end{itemize}

We illustrate this construction of expressions for the automaton
$\A_1$ of \figurename~\ref{fig:exita1}.
Initially, we have $E_1 = \{0,x_1\}$ and $E_2 = \{0,x_2\}$.
When treating level $2$, first, expression $-\frac12 x_1 + 1$ is added to $E_2$ as normalization of the guard $x_1 + 2x_2=2$.
Then transition labeled by $a$ updates $x_2$ (by reseting it to $0$).
As a result, we have to add to $E_1$ all differences of expressions of $E_2$ updated by $x_2 := 0$.
This only produces expression $-\frac12 x_1 +1 - 0$ which is normalized into $x_1 - 2$; thus expression $2$ is added to $E_1$.
When treating level $1$, expression $1$ from the guard of transition $a$ is added to $E_1$.
As a result, we obtain $E_1 = \{x_1, 0, 1, 2\}$ and $E_2 = \{x_2, 0, -\frac12 x_1 +1\}$.
\MS{Reviewer 1, Rq 9. Done}

\begin{lemma}
\label{prop:terminate}
The construction procedure of $\{E_k\}_{k\leq n}$ terminates and the
size of every $E_k$ is bounded by $(E+2)^{2^{n(n-k+1)}+1}$ where
$E$ is the size of the edges of the ITA.
\end{lemma}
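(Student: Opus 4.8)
The plan is to argue termination and the size bound together, by induction on the levels processed in decreasing order $k=n,n-1,\dots,1$. The key structural observation is that the construction at level $k$ only ever adds new terms to the sets $E_i$ with $i\le k$, and that the terms added to $E_k$ itself are of two kinds: (i) normalized guard expressions, whose number is bounded by $E$, and (ii) images $C[u]$ of existing expressions $C\in E_k$ under an update $u$ labelling a transition between two states both of level $\ge k$ — and each such $C[u]$ is again an expression over $X_k$. Since an update can be applied repeatedly, I first need to check that iterating $C\mapsto C[u]$ does not generate infinitely many distinct expressions. Here the crucial point is the shape constraint on updates in ITA: in a transition of level $k'\ge k$, each clock $x_i$ with $i\le k$ is set either to $x_i$ (unchanged) or to a linear expression $\sum_{j<i}a_jx_j+b$ over \emph{strictly lower} clocks. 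Hence applying $u$ to an expression $C=\sum_{i\le k}c_ix_i+d$ can only ``push coefficients downward'': the coefficient of $x_k$ in $C[u]$ equals the coefficient of $x_k$ in $C$ unless $u$ updates $x_k$, in which case it becomes $0$ and stays $0$. So after at most $k$ successive updates the expression stabilizes into a constant, and more importantly the set of expressions reachable from $E_k$ under all updates is finite — its size is at worst the number of subsets/choices one can make, giving a bound that is exponential in $|E_k|$ and in the number of edges.

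Concretely, I would track the size of $E_k$ level by level. Start with $|E_k|=2$. Adding guard expressions contributes at most $E$ new terms, so before the iteration $|E_k|\le E+2$. The iteration in step (1) closes $E_k$ under the finitely many updates $u$ occurring on edges of level $\ge k$; since each update is idempotent-on-its-support in the sense above and there are at most $E$ updates, one pass over all pairs (expression, update) at most multiplies the size, and the fixpoint is reached after finitely many passes, yielding a bound of the form $|E_k|\le (E+2)^{f(k)}$ for some $f(k)$ I will extract from the recursion. Then step (2) of the iteration, triggered by edges with $\lambda(q)<k\le\lambda(q')$, feeds new terms \emph{downward} into $E_{\lambda(q)}$: for each ordered pair of distinct expressions in $E_k$ one normalizes their difference (after an update) and adds its ``lower part'' to $E_{\lambda(q)}$. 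This contributes at most $\binom{|E_k|}{2}\le |E_k|^2$ new terms to each $E_i$ with $i<k$ — and this is where the double-exponential tower comes from, since $|E_k|$ is already exponential in the number of higher levels, and squaring it and then closing again at level $i$ iterates the exponential.

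The bookkeeping I would do: let $s_k$ be the final size of $E_k$. From the analysis, $s_k$ is bounded by $(E+2)$ (initial guard contribution) times a factor accounting for (a) closure under updates at level $k$ and (b) the at most $n$ levels $k'>k$ that each dump $O(s_{k'}^2)$ terms into $E_k$ before level $k$ is processed. Unwinding, $s_k \le (E+2)^{2^{n(n-k+1)}+1}$: the exponent $n(n-k+1)$ comes from the $n-k+1$ levels from $n$ down to $k$ inclusive, each contributing a factor of $n$ in the exponent-of-the-exponent because squaring $s_{k'}$ roughly doubles its exponent and there are $n$ edges to iterate over. Termination is then immediate: every $E_k$ is finite, only finitely many terms can ever be added, and the iteration adds a new term each round or halts, so it halts after finitely many rounds.

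\textbf{Main obstacle.} The delicate part is not termination per se — that follows once one sees updates push coefficients strictly downward — but pinning down the \emph{exact} exponent $2^{n(n-k+1)}+1$ rather than merely ``some tower''. This requires carefully setting up the recurrence for $s_k$ in terms of $s_{k+1},\dots,s_n$, being honest about how many times the fixpoint iteration loops (each loop can at worst re-square, and there are boundedly many loops because the ambient set is finite), and checking that the contributions from step~(2) of lower-level-to-higher-level edges compose correctly through the $k$-normalization. I would handle this by proving the bound by downward induction on $k$: assuming $s_{k'}\le (E+2)^{2^{n(n-k'+1)}+1}$ for all $k'>k$, bound the terms dumped into $E_k$ from above by $n\cdot E\cdot s_{n}^2$, bound the closure under level-$k$ updates, and verify the arithmetic $2^{n(n-k)+\text{const}} \le 2^{n(n-k+1)}$ goes through with room to spare.
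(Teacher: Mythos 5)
Your plan follows essentially the same route as the paper: within the stage for level $k$, separate step (1) (closure of $E_k$ under the updates of edges that stay at level $\geq k$) from step (2) (which only feeds lower sets), bound the closure by bounding how long a chain $C,\,C[u_1],\,C[u_1][u_2],\dots$ can keep changing, using the observation that once $x_k$ has been substituted away it never reappears, then charge step (2) with quadratically many normalized differences per lower level, and conclude by a downward induction whose target bound has ample arithmetic slack. The genuine gap is in the chain-length claim that carries your termination argument: it is not true that ``after at most $k$ successive updates the expression stabilizes into a constant.'' Updates of lower clocks can change the expression both before and after the single update that eliminates $x_k$, and a lower clock that has disappeared from the expression can be reintroduced by the update of a higher one. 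For $k=2$, take $C=x_2+x_1$ and apply $x_1:=1$, then $x_2:=x_1$, then $x_1:=2$: the expression changes three times, i.e.\ $2^k-1$ times, and nesting this pattern shows that $2^k-1$ changes are attainable in general, so no bound linear in $k$ can be proved. The correct argument --- the paper's --- splits a changing chain at the unique update of $x_k$ and recurses on the two halves (formally via K\"onig's lemma on the forest of added expressions), giving branch length at most $2^k$ and hence $|E_k|\leq |E_k^0|\,(E+2)^{2^k}$ after step (1), where $E_k^0$ is the content of $E_k$ when the stage for level $k$ starts (guard expressions plus the terms dumped by the stages of higher levels).

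The error is local and fixable: since the exponent $2^{n(n-k+1)}+1$ is a deliberate overestimate, replacing your ``at most $k$ updates'' by the correct ``at most $2^k$ changes'' still lets your downward induction close (the paper's intermediate bound is $(E+2)^{2^{n(n-k)+1}+2^k+n+3}$, which the slack absorbs). Two smaller points to tighten. First, the step (2) contribution to each lower set should carry a factor for the number of edges, as in the paper's $(E+2)\cdot |E_k|(|E_k|-1)/2$, not just $|E_k|^2$; this is harmless but should appear in the recurrence. Second, your explanation that each level contributes ``a factor of $n$ in the exponent of the exponent'' because of the number of edges is not what actually happens: per level the double exponent essentially doubles (squaring $p_{k+1}$) plus lower-order terms such as $2^k$ and $n$, and the increment $n$ in the closed form is slack chosen to make the statement clean; the induction you propose does verify it, but the heuristic should not be leaned on. Finally, stabilization of a chain need not be ``into a constant''; all that is needed is that no further update changes the expression.
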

\begin{proof}
  Given some $k$, we prove the termination of the stage relative to
  $k$. Observe that the second step only adds new expressions to
  $E_{k'}$ for $k'<k$. Thus the two steps can be ordered.  Let us
  prove the termination of the first step of the saturation
  procedure. We define $E_{k}^{0}$ as the set $E_k$ at the beginning
  of this stage and $E_{k}^{i}$ as this set after insertion of the
  $i^{th}$ item in it. With each added item $C[u]$ can be associated
  its \emph{father} $C$. Thus we can view $E_{k}$ as an increasing
  forest with finite degree (due to the finiteness of the edges) and
  finitely many roots. Assume that this step does not terminate. Then
  we have an infinite forest and by K\"onig lemma, it has an infinite
  branch $C_0,C_1,\ldots$ where $C_{i+1}=C_i[u_i]$ for some update
  $u_i$ such that $C_{i+1}\neq C_i$. Observe that the number of
  updates that change the variable $x_k$ is either 0 or 1 since once
  $x_k$ disappears it cannot appear again.  We split the branch into
  two parts before and after this update or we still consider the
  whole branch if there is no such update. In these (sub)branches, we
  conclude with the same reasoning that there is at most one update
  that change the variable $x_{k-1}$. Iterating this process, we
  conclude that the number of updates is at most $2^k-1$ and the
  length of the branch is at most $2^k$.

  For the sake of readability, we set $B=E+2$. The final size of
  $E_{k}$ is thus at most $E_{k}^{0}\times B^{2^k}$ since the width of
  the forest is bounded by $B$.

  In the second step, we add at most $B \times
  (|E_{k}|\times(|E_{k}|-1))/2$ to $E_{i}$ for every $i<k$. This
  concludes the proof of termination.

\bigskip
We now prove by a painful backward induction that as soon as $n\geq 2$,
$|E_{k}|\leq B^{2^{n(n-k+1)}+1}$.
The doubly exponential size of $E_n$ (proved above) is propagated downwards by the saturation procedure.
\MS{Reviewer 1, Rq 11. Done}
We define $p_k = |E_{k}|$.

\paragraph{Basis case $k=n$.}
We have $p_n \leq p_n^{0}\times B^{2^n}$ where $p_n^{0}$ 
is the number of guards of the outgoing edges from states of level $n$.
Thus $p_n \leq B\times B^{2^n}= B^{2^n+1}= B^{2^{n(n-n+1)}+1}$
which is the claimed bound.

\paragraph{Inductive case.}
Assume that the bound holds for $k < j \leq n$.  Due to all executions
of the second step of the procedure at strictly higher levels, $p_k^0$
expressions were added to $E_k$, with:
\begin{eqnarray*}
p_k^{0} &\leq& B + B \times ((p_{k+1}\times (p_{k+1}-1))/2 + \cdots + (p_{n}\times (p_{n}-1))/2)\\
p_k^{0} &\leq& B + B \times (B^{2^{n(n-k)+1}+2} + \cdots + B^{2^{n+1}+2})\\
p_k^{0} &\leq& B \times (n-k+1) \times B^{2^{n(n-k)+1}+2}\\
p_k^{0} &\leq& B \times B^n \times B^{2^{n(n-k)+1}+2} \quad\textrm{(here we use } B \geq 2\textrm)\\
p_k^{0} &\leq& B^{2^{n(n-k)+1}+n+3}
\end{eqnarray*}
Taking into account the first step of the procedure for level $k$, we
have: \[p_k \leq B^{2^{n(n-k)+1}+2^k+n+3}.\] Let us consider the term
$\delta = 2^{n(n-k+1)}+1-(2^{n(n-k)+1}+2^k+n+3)$. Since $k < n$,
\begin{eqnarray*}
\delta &\geq& (2^{n-1}-1)2^{n(n-k)+1}-(2^k+n+2)\\
\delta &\geq& (2^{n-1}-1)2^{n(n-k)+1}-(2^{n-1}+2^n)\\
\delta &\geq& (2^{n-1}-1)2^{n(n-k)+1}-2^{n+1} \geq 0\\
\end{eqnarray*}
Thus $p_k \leq B^{2^{n(n-k)+1}+2^k+n+3} \leq B^{2^{n(n-k+1)}+1} =
(E+2)^{2^{n(n-k+1)}+1}$ which is the claimed bound.  \qed
\end{proof}

\subsection{Construction of the class automaton}
\label{subsec:contructiongraph}

In order to analyze the size of the class automaton
defined below, we recall and adapt a classical
result about partitions of $n$-dimensional Euclidian
spaces.

\begin{definition}
Let $\{H_k\}_{1\leq k \leq m}$ be a family
of hyperplanes of $\R^n$. A \emph{region} defined by this
family is a connected component of $\R^n \setminus \bigcup_{1\leq k \leq m} H_k$.
An \emph{extended region} defined by this family is
a connected component of $\bigcap_{k \in I} H_k \setminus \bigcup_{k \notin I} H_k$
where $I \subseteq \{1,\ldots, m\}$.  
\end{definition}

\begin{proposition}[\cite{zas75}]
The number of regions defined by the family $\{H_k\}_{1\leq k \leq m}$
is at most $\sum_{i=0}^n \binom{m}{i}$.
\end{proposition}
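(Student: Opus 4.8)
The plan is to prove the bound by induction on the number $m$ of hyperplanes, with the statement quantified over all dimensions $n$, using the classical observation that inserting one hyperplane into an arrangement raises the region count by exactly the number of regions the previously present hyperplanes cut out on it. Write $r(m,n)$ for the maximum number of regions of an arrangement of $m$ hyperplanes in $\R^n$. The base case $m=0$ is immediate: $\R^n$ is a single region and $\sum_{i=0}^n\binom{0}{i}=\binom{0}{0}=1$. I will also use the trivial fact $r(m,0)=1=\sum_{i=0}^{0}\binom{m}{i}$ (the space $\R^0$ is a single point), and the monotonicity $r(m',n)\le r(m,n)$ for $m'\le m$, which is clear directly (adding hyperplanes never merges regions) and is also visible from the claimed bound being nondecreasing in $m$.

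For the inductive step, fix an arrangement $H_1,\dots,H_{m+1}$ in $\R^n$ with $n\ge 1$, and compare it with the sub-arrangement $H_1,\dots,H_m$. Each region $R$ of the sub-arrangement is convex; if $R$ is disjoint from $H_{m+1}$ it remains a single region of the full arrangement, while if $R$ meets $H_{m+1}$ then $R\setminus H_{m+1}$ has exactly two connected components (the intersections of $R$ with the two open half-spaces bounded by $H_{m+1}$), so $R$ contributes two regions. Moreover, in this latter case $R\cap H_{m+1}$ is exactly one region of the arrangement induced on $H_{m+1}$ by the nonempty proper intersections $H_i\cap H_{m+1}$, and this yields a bijection between the ``split'' regions of the sub-arrangement and the regions of the induced arrangement. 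Hence the number of regions increases by the number of regions of that induced arrangement, which lives in $H_{m+1}\cong\R^{n-1}$ and is defined by at most $m$ hyperplanes. Therefore
\[
r(m+1,n)\ \le\ r(m,n)+r(m,n-1).
\]

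It then remains to check that $\sum_{i=0}^n\binom{m}{i}$ satisfies this recurrence. When $n=0$ there is nothing to do, since $r(m+1,0)=1$. For $n\ge 1$, the induction hypothesis gives $r(m,n)\le\sum_{i=0}^n\binom{m}{i}$ and $r(m,n-1)\le\sum_{i=0}^{n-1}\binom{m}{i}$, and Pascal's identity yields
\[
\sum_{i=0}^n\binom{m}{i}+\sum_{i=0}^{n-1}\binom{m}{i}
=\binom{m}{0}+\sum_{i=1}^n\left(\binom{m}{i}+\binom{m}{i-1}\right)
=\binom{m+1}{0}+\sum_{i=1}^n\binom{m+1}{i}
=\sum_{i=0}^n\binom{m+1}{i},
\]
which is the claimed bound for $m+1$. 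The step I expect to require the most care is the geometric one: making precise the ``splits into exactly two regions'' claim (using convexity of the regions to get connectedness of the two pieces and of $R\cap H_{m+1}$) and the bijection with the induced arrangement, together with the degenerate cases in which $H_{m+1}$ coincides with or is parallel to some of the $H_i$ — these only shrink the induced arrangement below $m$ hyperplanes, so the inequality is preserved via the monotonicity noted above. The remaining binomial manipulation is the short computation displayed.
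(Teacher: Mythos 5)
Your proof is correct. Note that the paper does not prove this proposition at all: it is quoted as a known result with a citation to Zaslavsky~\cite{zas75}, whose general theory (counting faces of arrangements via the characteristic polynomial) contains this bound as a special case. What you give instead is the classical elementary argument (essentially Schl\"afli's/Buck's): the regions of an arrangement are intersections of open half-spaces, hence convex, so a region meeting the new hyperplane $H_{m+1}$ is cut into exactly two connected pieces, its trace on $H_{m+1}$ is precisely one region of the induced arrangement, and this bijection yields the recurrence $r(m+1,n)\le r(m,n)+r(m,n-1)$, which together with the base cases $r(0,n)=1$, $r(m,0)=1$ and Pascal's identity gives $r(m,n)\le\sum_{i=0}^{n}\binom{m}{i}$; your handling of degenerate positions (repeated or non-intersecting hyperplanes) via monotonicity is also the right way to keep the inequality. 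The trade-off is the expected one: your route makes the paper self-contained at the cost of the careful convexity/bijection step you flag, while the paper's citation imports a far more general theorem than is needed here; both are perfectly adequate for the way the bound is used, namely to derive Corollary~\ref{cor:zas} on extended regions.
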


We derive from this proposition:
\begin{corollary}
\label{cor:zas}
The number of extended regions defined by the family $\{H_k\}_{1\leq k \leq m}$
is at most $\sum_{p=0}^n\binom{m}{p}\sum_{i=0}^{n-p} \binom{m-p}{i}\leq e^2m^n$.
\end{corollary}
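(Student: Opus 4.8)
The plan is to bound the number of extended regions by summing, over all choices of an index set $I \subseteq \{1,\ldots,m\}$ on which the extended region lies in the intersection $\bigcap_{k\in I}H_k$, the number of connected components of $\bigcap_{k\in I}H_k \setminus \bigcup_{k\notin I}H_k$. If $p = |I|$, then $\bigcap_{k\in I}H_k$ is an affine subspace of $\R^n$ of dimension at most $n-p$ (at most, since the hyperplanes in $I$ may be dependent; if the intersection is empty there is nothing to count). Inside this subspace, the remaining hyperplanes $H_k$ for $k\notin I$ cut out, by intersection, a family of at most $m-p$ affine hyperplanes of that subspace, and by the Zaslavsky proposition applied inside an affine space of dimension $\le n-p$, the number of connected components of the complement is at most $\sum_{i=0}^{n-p}\binom{m-p}{i}$.

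Summing over all $I$ grouped by cardinality $p$ gives at most $\sum_{p=0}^{n}\binom{m}{p}\sum_{i=0}^{n-p}\binom{m-p}{i}$ — note that only $p\le n$ contributes, since for $p>n$ the intersection of more than $n$ hyperplanes in general position (or any intersection that is a proper affine subspace) has dimension $<0$ hence is empty, and in any case $\binom{m}{p}$ summed beyond the relevant range would only be needed if the intersection were nonempty, which forces it to be an affine space reachable as an intersection of at most $n$ of the hyperplanes. I would state this cutoff explicitly: an extended region indexed by $I$ with $|I|>n$ coincides with one indexed by a subset $I'\subseteq I$ of size $\le n$ spanning the same affine subspace, so it is already counted; hence restricting to $p\le n$ loses nothing.

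For the final numerical inequality $\sum_{p=0}^{n}\binom{m}{p}\sum_{i=0}^{n-p}\binom{m-p}{i}\le e^2 m^n$, I would bound crudely: $\binom{m}{p}\le m^p/p!$ and $\binom{m-p}{i}\le (m-p)^i/i! \le m^i/i!$, so $\sum_{i=0}^{n-p}\binom{m-p}{i}\le m^{n-p}\sum_{i=0}^{n-p} \frac{1}{i!\, m^{n-p-i}}\le m^{n-p}\sum_{i\ge 0}\frac1{i!} = e\,m^{n-p}$ (using $m\ge 1$), and similarly $\sum_{p=0}^n \binom{m}{p} m^{n-p} \le m^n\sum_{p\ge 0}\frac1{p!} = e\, m^n$. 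Multiplying the two bounds gives $e^2 m^n$.

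The main obstacle is the bookkeeping around dimension drops: being careful that $\bigcap_{k\in I}H_k$ may have dimension strictly less than $n-|I|$ (or be empty), and that Zaslavsky's bound must be applied \emph{within} that affine subspace with its own ambient dimension, so the exponent in the binomial sum is $n-p$ rather than $n$. One must also justify that traces of the hyperplanes $H_k$, $k\notin I$, on the subspace are genuine hyperplanes of it (or all of it, in which case the component count only decreases) — this is where one has to be slightly careful, but it only ever helps the bound. Everything else is the routine estimate above, which I would not expand in full.
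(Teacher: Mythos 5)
Your overall shape (charge each extended region to a set of hyperplanes, apply the Zaslavsky proposition inside the corresponding flat, then the crude $e^2m^n$ estimate) is the paper's, and your final numerical estimate is fine. But the combinatorial accounting has a genuine flaw: the dimension claim is backwards. If $\bigcap_{k\in I}H_k$ is nonempty, its dimension is \emph{at least} $n-|I|$, not at most -- dependencies among the hyperplanes raise the dimension, they do not lower it. As a result your per-$I$ bound $\sum_{i=0}^{n-p}\binom{m-p}{i}$ can fail. Concretely, in $\R^3$ take $H_1,H_2,H_3$ three distinct planes containing the $z$-axis and $H_4,\dots,H_m$ planes meeting that axis in distinct points; for $I=\{1,2,3\}$ (so $p=3\leq n$) the set $\bigcap_{k\in I}H_k\setminus\bigcup_{k\notin I}H_k$ is the axis minus $m-3$ points, i.e.\ $m-2$ extended regions, while your budget for that $I$ is $\sum_{i=0}^{0}\binom{m-3}{i}=1$. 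Your cutoff argument for $|I|>n$ is also incorrect as stated: for $I'\subsetneq I$ with the same intersection $A$, the hyperplanes in $I\setminus I'$ contain $A$ and are among those removed, so $A\setminus\bigcup_{k\notin I'}H_k=\emptyset$; the extended regions indexed by $I$ do not coincide with ones indexed by $I'$ and are not ``already counted.''

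The statement survives because the correct charging is the paper's ``removing redundant hyperplanes'': a nonempty extended region $R$ with index set $I$ forces $I=\{k \mid H_k\supseteq A\}$ where $A=\bigcap_{k\in I}H_k$ (any hyperplane containing $A$ must lie in $I$, else it would meet $R$). Choose $I'\subseteq I$ with $|I'|=\mathrm{codim}(A)\leq n$ and $\bigcap_{k\in I'}H_k=A$. Then $R$ is a region of the arrangement induced on $A$ (ambient dimension exactly $n-|I'|$) by the at most $m-|I|\leq m-|I'|$ hyperplanes not containing $A$, so Zaslavsky inside $A$ charges at most $\sum_{i=0}^{n-|I'|}\binom{m-|I'|}{i}$ regions to the subset $I'$; since $I'$ determines $A$ (hence the whole batch of regions charged to it), each subset of size $p\leq n$ is charged at most once, which yields exactly $\sum_{p=0}^n\binom{m}{p}\sum_{i=0}^{n-p}\binom{m-p}{i}$. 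With this re-charging in place of your per-$I$ decomposition, your closing estimate $\leq e^2m^n$ goes through unchanged.
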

\begin{proof}
Observe that an extended region is a region belonging to an intersection of at most $n$ hyperplanes
(by removing redundant hyperplanes). Thus counting the number of such intersections and
applying the previous proposition yields the following formula:
\[ \sum_{p=0}^n\binom{m}{p}\sum_{i=0}^{n-p} \binom{m-p}{i}\leq \sum_{p=0}^n\frac{m^p}{p!}\sum_{i=0}^{n-p}\frac{m^{n-p}}{i!}=
m^n  \sum_{p=0}^n\frac{1}{p!}\sum_{i=0}^{n-p}\frac{1}{i!}\leq e^2m^n\]
\qed
\end{proof}

\begin{theorem}
\label{prop:reach}
The untimed language of an ITA is regular.
\end{theorem}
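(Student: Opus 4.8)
The plan is to build a finite ``class automaton'' that is time-abstract bisimilar to $\T_{\A}$, so that its (finite, hence regular) untimed language coincides with $Untimed(\La(\A))$. The states of this automaton will be pairs $(q, R)$ where $q$ is a state of the ITA and $R$ is an \emph{extended region}: a nonempty set of valuations determined by fixing, for every pair of expressions in $Exp(q) = \bigcup_{k \leq \lambda(q)} E_k$, their relative order ($<$, $=$ or $>$) — equivalently, $R$ is a cell of the arrangement of the hyperplanes $\{C - C' = 0 : C, C' \in Exp(q)\}$. By Lemma~\ref{prop:terminate} each $E_k$ is finite and effectively computable, so there are finitely many such pairs; Corollary~\ref{cor:zas} gives the bound on their number (which feeds into the complexity claims mentioned in the introduction, though only finiteness is needed for this theorem).

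The core of the argument is to show that this partition is \emph{stable} under both time steps and discrete steps, and that membership in a class determines which transitions are enabled. For the latter: the construction puts every guard expression (after $k$-normalization) into $E_k$ via the term $-\sum_{i<k} a_i x_i - b$, so within a fixed class $(q,R)$ the truth value of every guard on an edge out of $q$ is constant. For time steps at level $k = \lambda(q)$: only $x_k$ grows, so a valuation $v \in R$ moves along a line in direction $\mathbf{e}_k$; I must check that the sequence of classes it traverses depends only on $R$, not on $v$. This is where normalization is used — an expression $C \in E_k$ either has $x_k$-coefficient $1$ (so it increases at rate $1$ along the line) or does not contain $x_k$ (so it is constant along the line, with its value pinned down by the strictly-lower-level expressions, which themselves are frozen during a time step at level $k$). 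Hence the crossing order of the expressions in $E_k$ along the line is the same for all $v \in R$, giving time-abstract bisimulation for time steps. For discrete steps $q \tr{\fee,a,u} q'$: I must show that if $v, v' \in R$ then $v[u]$ and $v'[u]$ lie in the same class of $(q', Exp(q'))$. This is exactly what steps (1) and (2) of the saturation procedure guarantee — step (1) ensures $E_{k'}$ is closed under substitution by updates between states at levels $\geq k'$ (so the image expressions are already present to be compared), and step (2) adds, at the level $\lambda(q)$ of the transition, the normalized differences $\mathtt{norm}(C[u]-C'[u], \lambda(q))$ needed so that the relative order of $C[u]$ and $C'[u]$ after the update is determined by the order of lower-level expressions before it, which is recorded in $R$.

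Having established the bisimulation, the argument concludes formally: define the class automaton $\G_{\A}$ with initial state $(q_0, R_0)$ where $R_0$ is the class of $\mathbf{0}$, final states $(q,R)$ with $q \in F$, and a transition $(q,R) \xrightarrow{a} (q',R')$ whenever there is an edge $q \tr{\fee,a,u} q'$ with $\fee$ satisfied on $R$, a class $R''$ reachable from $R$ by a time step (or $R'' = R$), and $R' = \{v[u] : v \in R''\}$ (which the stability lemmas show is a single class), also respecting the timing policies via a bit of bookkeeping on $\beta$. Standard arguments then show $Untimed(\La(\A)) = \La(\G_{\A})$, and since $\G_{\A}$ is a finite automaton this language is regular.

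The main obstacle, and the place where care is genuinely needed, is verifying the discrete-step case: one must check that after an update $u$ on a transition of level $\ell = \lambda(q)$ — which may decrease or increase the interrupt level and may assign a clock a linear expression in strictly-lower-level clocks — the relative order of every pair $C[u], C'[u]$ of image expressions at the target is a function of the source class alone. Because $C[u]$ and $C'[u]$ can still depend on $x_\ell$ (the active clock, which varies within $R$), one cannot simply compare them as constants; the point of $k$-normalizing the difference at level $\ell$ is that $x_\ell$ either cancels or gets isolated with coefficient $1$, reducing the comparison to one between a level-$\ell$ expression and a strictly-lower-level expression — and the latter comparison is recorded in $R$ precisely because step (2) inserted $-\sum_{i<\ell} a_i x_i - b$ into $E_\ell$. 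Making this bookkeeping precise, and simultaneously handling the $Urgent$/$Delayed$ policies through the boolean component $\beta$, is the delicate part; the termination and size bounds are already done in Lemma~\ref{prop:terminate}.
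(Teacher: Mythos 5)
Your overall architecture is the paper's: the same sets $E_k$, a finite class graph shown time-abstract bisimilar to $\T_{\A}$, the same use of normalization for guards and time successors, and the same policy bookkeeping. The gap is in your class definition, and it hits exactly the step you yourself single out as delicate. You define a class of $q$ by fixing the order of \emph{every} pair of expressions in $Exp(q)=\bigcup_{k\le\lambda(q)}E_k$, i.e.\ including comparisons between expressions of \emph{different} levels. But steps (1) and (2) of the saturation only provide closure for \emph{within-level} comparisons: step (1) puts $C[u]$ back into the same $E_k$, and step (2) treats pairs $C,C'$ taken from one and the same $E_k$. Nothing guarantees that the order of $g'[u]$ and $h'[u]$ with $g'\in E_j$, $h'\in E_k$, $j\neq k$, is determined by the source class, and in general it is not. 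Concretely, for the ITA $\A_1$ of the paper one has $E_1=\{x_1,0,1,2\}$ and $E_2=\{x_2,0,-\frac12 x_1+1\}$; take in $q_0$ the two valuations $x_1=\frac12$ and $x_1=\frac45$, which lie in the same class ($0<x_1<1<2$, and $Exp(q_0)=E_1$). After the $a$-transition (update $x_2:=0$) the cross-level pair $x_1\in E_1$ versus $-\frac12x_1+1\in E_2$ is ordered differently for the two images (it amounts to $x_1<\frac23$ versus $x_1>\frac23$, and $\frac23\notin E_1$). So the image of your class under the update is \emph{not} a single class, your stability claim for discrete steps is false, and the ``back'' half of the bisimulation fails for level-increasing transitions. (It is no accident that the paper's \tctlcint construction has to add the extra constant $\frac23$ to $E_1$ precisely when the formula requests the cross-level comparison $x_2>x_1$; the plain reachability construction does not contain it.)

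The repair is the paper's class definition: a class is $(q,\{\preceq_k\}_{1\le k\le\lambda(q)})$ where each $\preceq_k$ is a total preorder on $E_k$ \emph{alone}. Then for $k\le\lambda(q)$ the target preorder is read off via step (1) ($g'\preceq'_k h'$ iff $g'[u]\preceq_k h'[u]$), and for $k>\lambda(q)$ one normalizes $g'[u]-h'[u]$ at level $\lambda(q)$ and compares $x_{\lambda(q)}$ with the expression inserted by step (2), taking the sign of the $x_{\lambda(q)}$-coefficient into account. With that definition your time-step analysis (only $x_{\lambda(q)}$ moves; every other member of $E_{\lambda(q)}$ is constant during the step) and your handling of urgent/delayed policies go through as in the paper. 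Alternatively you could keep your finer cells, but then you must abandon both the ``image is a single class'' claim and the bisimulation claim, and argue language equality differently (e.g.\ by projecting cell paths onto the coarser per-level classes) --- which is not the proof you wrote.
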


\begin{proof}
First, we assume that the policy of every state
is lazy. At the end of the proof, we explain how to adapt
the construction for states with urgent or delayed policies.

\paragraph{Class definition.}
Let $\A$ be an ITA with $E$ transitions and $n$ clocks,
the decision algorithm is based on the construction of a (finite)
class graph which is time abstract bisimilar to the transition
system $\T_{\A}$.  A class is a syntactical representation of a
subset of reachable configurations.  More precisely, it is defined
as a pair $R=(q,\{\preceq_k\}_{1 \leq k \leq \lambda(q)})$ where $q$
is a state and $\preceq_k$ is a total preorder over $E_k$, for $1 \leq k \leq \lambda(q)$.

The class $R$ describes the set of configurations:
\[\sem{R}=\{(q,v,\beta) \mid \beta \in \{\top,\bot\},\; \forall k \leq \lambda(q)\ 
\forall (g,h) \in E_k,\ g[v] \leq h[v] \mbox{~iff~} g \preceq_k h\}\]

The initial state of this graph is defined by the
class $R_0$ with $\sem{R_0}$ containing $(q_0,{\bf 0},\bot)$ which can be
straightforwardly determined.
For example, for ITA $\A_1$ of \figurename~\ref{fig:exita1}, the initial class is $R_0=(q_0, Z_0)$ with $Z_0: x_1=0 < 1 < 2$.
The final states are all 
$R=\left(q,\{\preceq_k\}_{1 \leq  k \leq \lambda(q)}\right)$ with $q \in F$.


Observe that fixing a state, the set of configurations $\sem{R}$ of a non empty class $R$
is exactly an extended region associated with the hyperplanes defined
by the comparison of two expressions of some $E_k$.
Since $(E+2)^{2^{n^2}+1}$ is an upper bound of the number of
expressions of any level, $m=(E+2)^{2^{n^2+1}+2}$
is an upper bound of the number of hyperplanes.
So using Corollary~\ref{cor:zas},
the number of semantically different classes for a given state is bounded by:
$$e^2m^n=e^2(E+2)^{2^{n^2+1}n+2n}$$
Since one can test semantical equality between classes
in polynomial time w.r.t. their size~\cite{RoTeVi97},
we implicitely consider in the sequel of the proof classes modulo
the semantical equivalence.

As usual, there are two kinds of transitions in the graph,
corresponding to discrete steps and time steps.

\paragraph{Discrete step.}
Let $R=(q,\{\preceq_k\}_{1 \leq k \leq \lambda(q)})$ and
$R'=(q',\{\preceq'_k\}_{1 \leq k \leq \lambda(q')})$ be two classes. There is a
transition $R \tr{e} R'$ for a transition $e : q \tr{\fee,a,u} q'$ if
there is some $(q,v) \in\, \sem{R}$ and $(q',v') \in\, \sem{R'}$ such that
$(q,v) \tr{e} (q',v')$. In this case, for all $(q,v) \in\, \sem{R}$
there is a $(q',v') \in\, \sem{R'}$ such that $(q,v) \tr{e}
(q',v')$. This can be decided as follows.

\subparagraph{Firability condition.}  
Write $\fee=\bigwedge_{j \in J}
C_j \bowtie_j 0$.  Since we assumed normalized guards, for every $j$, $C_j=\alpha
x_k+\sum_{i<k}a_ix_i+b$ (with $\alpha \in
\{0,1\}$ and $k = \lambda(q)$). By construction $C'_j=-\sum_{i<\lambda(q)}a_ix_i-b \in
E_k$. For each $j \in J$, we define a condition depending
on $\bowtie_j$. For instance, if $C_j \leq 0$, we require that
\MS{Reviewer 2, Rq 3. Done}
$\alpha x_k \preceq_k C'_j$, or if $C_j > 0$ we require that $\alpha x_k \npreceq_k C_j' \wedge C_j' \preceq \alpha x_k$.

\subparagraph{Successor definition.}
$R'$ is defined as follows. Let $k \leq \lambda(q')$ and $g',h' \in E_k$.
\begin{enumerate}
\item Either $k \leq \lambda(q)$, by construction, $g'[u],h'[u] \in
E_k$ then $g' \preceq'_k h'$ iff $g'[u] \preceq_k h'[u]$.
\item Or $k > \lambda(q)$, let
$D=g'[u]-h'[u]=\sum_{i\leq\lambda(q)}c_ix_i+d$, and
$C={\tt norm}(D,\lambda(q))$, and write $C=\alpha
x_{\lambda(q)}+\sum_{i<\lambda(q)}a_ix_i+b$ (with $\alpha \in
\{0,1\}$).
By construction $C'=-\sum_{i<\lambda(q)}a_ix_i-b \in E_{\lambda(q)}$.\\
When $c_{\lambda(q)}\geq 0$ then $g' \preceq'_k h'$ iff
$\alpha
x_{\lambda(q)} \preceq_{\lambda(q)} C' $.\\
When $c_{\lambda(q)}< 0$ then $g' \preceq'_k h'$ iff $C' \preceq_{\lambda(q)} \alpha x_{\lambda(q)}$.

\end{enumerate}
By definition of $\sem{\,\cdot\,}$,
\begin{itemize}
\item For any $(q,v) \in \sem{R}$, if there exists $(q,v) \tr{e}
(q',v')$ then the firability condition is fulfilled and $(q',v')$
belongs to $\sem{R'}$.
\item If the firability condition is fulfilled then for each $(q,v) \in
\sem{R}$ there exists $(q',v') \in \;\sem{R'}$ such that $(q,v)
\tr{e} (q',v')$.
\end{itemize}

\paragraph{Time step.}
Let $R=(q,\{\preceq_k\}_{1 \leq  k \leq \lambda(q)})$.
There is a transition $R \tr{succ} Post(R)$ for
$Post(R)=(q,\{\preceq'_k\}_{1 \leq k \leq \lambda(q)})$, the time
successor of $R$, which is defined as follows.

For every $i < \lambda(q)\ \preceq'_i=\preceq_i$.  Let $\sim$ be the
equivalence relation $\preceq_{\lambda(q)} \cap
\preceq^{-1}_{\lambda(q)}$ induced by the preorder. On equivalence
classes, this (total) preorder becomes a (total) order.  Let $V$ be
the equivalence class containing $x_{\lambda(q)}$.
\begin{enumerate}
\item Either $V=\left\{x_{\lambda(q)}\right\}$ and it is the greatest
equivalence class. Then $\preceq'_{\lambda(q)}=\preceq_{\lambda(q)}$
(thus $Post(R)=R$).
\item Either $V=\left\{x_{\lambda(q)}\right\}$ and it is not the greatest
equivalence class.  Let $V'$ be the next equivalence class. Then
$\preceq'_{\lambda(q)}$ is obtained by merging $V$ and $V'$, and
preserving $\preceq_{\lambda(q)}$ elsewhere.
\item Either $V$ is not a singleton. Then we split $V$ into
$V\setminus \left\{x_{\lambda(q)}\right\}$ and $\left\{x_{\lambda(q)}\right\}$ and
``extend'' $\preceq_{\lambda(q)}$ by $V \setminus \left\{x_{\lambda(q)}\right\}
\preceq'_{\lambda(q)} \left\{x_{\lambda(q)}\right\}$.
\end{enumerate}
By definition of $\sem{\,\cdot\,}$, for each $(q,v) \in \sem{R}$,
there exists $d>0$ such that $(q,v+d) \in \sem{Post(R)}$ and for each $d$
with $0 \leq d' \leq d$, then $(q,v+d') \in \sem{R} \cup \sem{Post(R)}$.

\bigskip We now explain how the policy is handled. Given a state $q$
such that $pol(q)=U$, for every class $R=(q,\{\preceq_k\}_{1 \leq k
  \leq \lambda(q)})$ we delete the time steps outgoing from $R$. The
case of a state $q$ such that $pol(q)=D$, is a little bit more
involved.  First we partition classes between \emph{time open}
classes, where for every every configuration of the class there exists
a small amount of time elapse that let the new configuration in the
same class, and \emph{time closed} classes.  The partition is
performed w.r.t. the equivalence class $V$ of $x_{\lambda(q)}$ for the
relation $\sim$ (see above in the proof). The class $R$ is time open
iff $V=\{x_{\lambda(q)}\}$. Then we successively replace every time
closed class $R$ by two copies $R^-$ and $R^+$, which capture wether
time has elapsed since the last last discrete step. Thus, a time edge
entering $R$ is redirected towards $R^+$ while a discrete edge
entering $R$ is redirected towards $R^-$. A time step $R \tr{succ} R'$
is replaced by two transitions $R^- \tr{succ} R'$ and $R^+ \tr{succ}
R'$, while a discrete step $R \tr{e} R'$ is replaced by the transition
$R^+ \tr{e} R'$. Time open classes allow time elapsing, hence no
splitting is required for these classes.

Since there is at most one time edge outgoing from a class, the number
of edges of the new graph is at most twice the number of edges in the
original graph.  \qed
\end{proof}

\begin{proposition}
\label{prop:reachita}
The reachability problem for Interrupt Timed Automata is decidable and belongs to
\emph{2-EXPTIME} and \emph{PTIME} when the number of clocks is fixed.
\end{proposition}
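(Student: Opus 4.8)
The plan is to derive both complexity bounds directly from the class graph of Theorem~\ref{prop:reach}. Since that graph, call it $G_{\A}$, is time-abstract bisimilar to $\T_{\A}$ and its final classes are exactly the classes whose state lies in $F$, a state of $F$ is reachable in $\A$ if and only if some final class is reachable from the initial class $R_0$ in $G_{\A}$. Reachability in a finite directed graph is decided in time linear in the size of the graph, so the whole question reduces to bounding (i) the number of classes and edges of $G_{\A}$ and (ii) the cost of building $G_{\A}$, both as functions of the size of $\A$ (number of states $|Q|$, number $E$ of transitions with their encoding, number $n$ of clocks).

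For the size of $G_{\A}$: by Lemma~\ref{prop:terminate} every $E_k$ has at most $(E+2)^{2^{n^2}+1}$ elements, so the number of hyperplanes obtained by comparing two expressions of a common level is at most $m=(E+2)^{2^{n^2+1}+2}$; by Corollary~\ref{cor:zas} the nonempty classes attached to a fixed state, being extended regions of this hyperplane arrangement in $\R^n$, are at most $e^2m^n=e^2(E+2)^{2^{n^2+1}n+2n}$ in number. Hence $G_{\A}$ has at most $|Q|\cdot e^2(E+2)^{2^{n^2+1}n+2n}$ classes, and, since each class has at most one time successor and at most $E$ discrete successors (the delayed-state construction of Theorem~\ref{prop:reach} at most doubling the number of edges), a number of edges polynomial in the number of classes. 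The exponent $2^{n^2+1}n\log(E+2)$ is singly exponential in the size of $\A$, so $G_{\A}$ has doubly exponential size; when $n$ is fixed this exponent is $O(\log E)$ and $G_{\A}$ has polynomial size.

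For the construction cost: the family $\{E_k\}$ is computed by the saturation procedure of Section~\ref{subsec:contructionexpressions}; each inserted expression is produced from expressions already present by one substitution $C[u]$ or one normalization of a difference, i.e. by a bounded-depth ($\le 2^n$) sequence of rational arithmetic operations starting from the coefficients of $\A$, so every expression has an encoding of size at most exponential in the size of $\A$, and polynomial when $n$ is fixed. The class graph is then explored forward from $R_0$: discrete and time successors of a class are computed exactly as in the proof of Theorem~\ref{prop:reach} by manipulating the finitely many preorders $\preceq_k$, emptiness of a class is a feasibility test for a system of linear (in)equalities over the corresponding extended region, and semantic equality of two classes is decided in polynomial time in their size by~\cite{RoTeVi97}; each such operation is polynomial in the common encoding size, and every reachable class is expanded once. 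The total running time is therefore polynomial in $|G_{\A}|$ times the per-class encoding size: doubly exponential in general, polynomial when $n$ is fixed, which yields the announced 2-EXPTIME and PTIME bounds.

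The main obstacle I expect is the bookkeeping on the bit-size of the rational coefficients occurring in the $E_k$ and in the preorders: one must check that normalization and substitution do not inflate the encodings beyond ``exponential in the size of $\A$, polynomial when $n$ is fixed'', so that each symbolic operation used above (normalization, substitution, comparison of an expression value with another under a preorder, emptiness and equality tests on classes) remains polynomial-time in the representation. Once the representation sizes are under control, the rest is a routine substitution of the bounds of Lemma~\ref{prop:terminate} and Corollary~\ref{cor:zas} into the cost of graph reachability.
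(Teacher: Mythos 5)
Your proposal is correct and follows essentially the same route as the paper: reduce reachability to graph reachability on the class graph of Theorem~\ref{prop:reach}, bound its size via Lemma~\ref{prop:terminate} and Corollary~\ref{cor:zas} (doubly exponential in general, polynomial for a fixed number of clocks), and use polynomial-time semantic equality tests on classes. The extra bookkeeping you add on coefficient bit-sizes is a legitimate fleshing-out of details the paper leaves implicit, not a different approach.
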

\begin{proof}
The reachability problem is solved
by building the class graph and applying standard reachability algorithm.
Since the number of semantically different classes is at most doubly
exponential in the size of the model and the semantical equivalence 
can be checked in polynomial time w.r.t. the size of the class (also doubly exponential)
this leads to a 2-EXPTIME complexity.
When the number of clocks is fixed the size of the graph
is at most polynomial w.r.t. the size of the problem leading
to a PTIME procedure. No complexity gain can be obtained
by a non deterministic search without building the graph
since the size of the graph is only polynomial w.r.t. the size
of a class.
\qed
\end{proof}

\noindent \textbf{Remarks.} This result should be contrasted with the
similar one for TA. The reachability problem for TA is PSPACE-complete
and thus less costly to solve than for ITA. However, fixing the
number of clocks does not reduce the complexity for TA (when this
number is greater than or equal to $3$) while this problem belongs now
to PTIME for ITA. Summarizing, the main source of complexity for ITA is
the number of clocks, while in TA it is the binary encoding of the
constants~\cite{courcoubetis92}.
\MS{Reviewer 1, rq 14. Done.}

Since the construction of the graph depends on a set of expressions, there is no notion of \emph{granularity} as in Timed Automata.
\MS{Reviewer 2, Long commentaire 1. Done.}
When the only guards are comparisons to constants and the only updates resets of clocks (as in Timed Automata), 
the abstraction obtained is coarser than the region abstraction of~\cite{alur94a}: it consists only in products of intervals.
\MS{Reviewer 1, Rqs 4 and 8. Done.}

\subsection{Example}
We illustrate this construction of a class automaton for the automaton
$\A_1$ of \figurename~\ref{fig:exita1}. The resulting class automaton is depicted on  \figurename~\ref{fig:regaut}, where
dashed lines indicate time steps.

\begin{figure}[ht]
\centering
\begin{tikzpicture}[node distance=3.75cm,auto]
\tikzstyle{every state}=[inner sep=2pt,draw=black,shape=rectangle,rounded corners=5pt]
\tikzstyle{time step}=[draw,->,dashed]
\tikzstyle{time succ}=[node distance=1.3cm]

\node[state, initial] (r0) at (0,0) {$R_0$};
\node[state] (r01) [node distance=1.5cm,below of=r0] {$R_0^1$};
\node[state] (r02) [time succ,below of=r01] {$R_0^2$};
\node (r04) [time succ, below of=r02] {\raisebox{4pt}{$\vdots$}};
\node[state] (r05) [time succ,below of=r04] {$R_0^5$};

\node[state] (r1) [node distance=2cm,right of=r0] {$R_1$};
\node[state] (r2) [node distance=3cm,right of=r1] {$\begin{array}{cc} q_1, Z_0 \\ 0 < x_2 < 1 \end{array}$};
\node[state] (r3) [right of=r2] {$\begin{array}{cc} q_1, Z_0 \\ 0 < x_2 = 1 \end{array}$};
\node[state,accepting] (r4) [node distance=2cm,below of=r3] {$\begin{array}{cc} q_2, Z_0 \\ 0 < x_2 = 1 \end{array}$};
\node[state,accepting] (r5) [node distance=2cm,below of=r4] {$\begin{array}{cc} q_2, Z_0 \\ 0 < 1 < x_2 \end{array}$};

\node[state] (r11) [node distance=2cm,right of=r01] {$R_1^1$};
\node[state] (r12) [node distance=3cm,right of=r11] {$\begin{array}{cc} q_1, Z_0^1 \\ 0 < x_2 <  -\frac{1}{2}x_1 +1 \end{array}$};
\node[state] (r13) [node distance=2cm,below of=r12] {$\begin{array}{cc} q_1, Z_0^1 \\ 0 < x_2 =  -\frac{1}{2}x_1 +1 \end{array}$};
\node[state,accepting,accepting where=left] (r14) [node distance=2cm,below of=r13] {$\begin{array}{cc} q_2, Z_0^1 \\ 0 < x_2 =  -\frac{1}{2}x_1 +1 \end{array}$};
\node[state,accepting] (r15) [right of=r14] {$\begin{array}{cc} q_2, Z_0^1 \\ 0 < -\frac{1}{2}x_1 +1 < x_2 \end{array}$};

\path[->] (r01) edge node {$a$} (r11);
\path[time step] (r11) -- (r12);
\path[time step] (r12) -- (r13);
\path[->] (r13) edge node {$b$} (r14);
\path[time step] (r14) -- (r15);

\path[time step] (r0) -- (r01);
\path[time step] (r01) -- (r02);
\path[time step] (r02) -- (r04);
\path[time step] (r04) -- (r05);

\path[->] (r0) edge node {$a$} (r1);
\path[time step] (r1) -- (r2);
\path[time step] (r2) -- (r3);
\path[->] (r3) edge node {$b$} (r4);
\path[time step] (r4) -- (r5);
\end{tikzpicture}
\caption{The class automaton for $\A_1$}
\label{fig:regaut}
\end{figure}
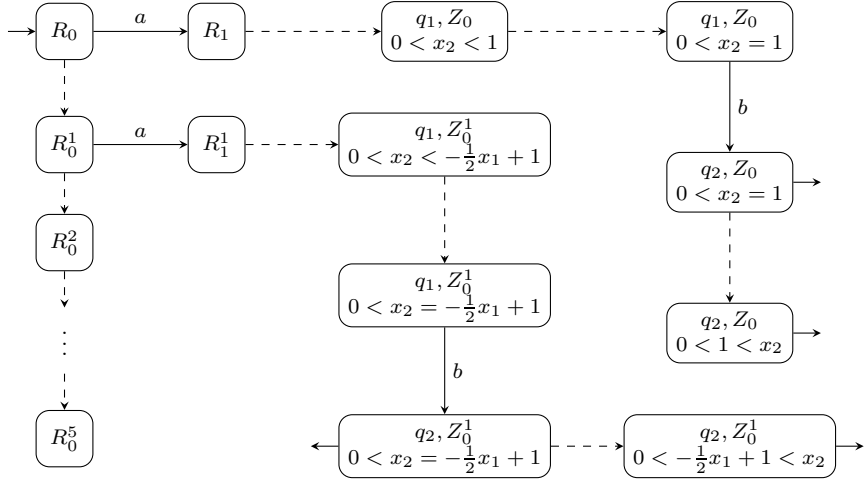

Recall that we obtained $E_1= \{ x_1, 0, 1, 2 \}$ and $E_2= \left\{
x_2, 0, -\frac{1}{2}x_1 + 1\right\}$.  In state $q_0$, the only
relevant clock is $x_1$ and the initial class is $R_0=(q_0, Z_0)$ with
$Z_0: x_1=0 < 1 < 2$. Its time successor is $R_0^1=(q_0, Z_0^1)$ with
$Z_0^1: 0 < x_1 < 1 < 2$.  Transition $a$ leading to $q_1$ can be taken
from both classes, but not from the next time successors $R^2_0=(q_0,
0 < x_1= 1 < 2)$, $R^3_0=(q_0, 0 < 1 < x_1 < 2)$, $R^4_0=(q_0, 0 < 1 < x_1 = 2)$, or $R^5_0=(q_0, 0 < 1 < 2 < x_1)$.

Transition $a$ switches from $R_0$ to $R_1=\left(q_1,Z_0,x_2 = 0 <
1\right)$, because $x_1=0$, and from $R_0^1$ to $R^1_1=\left(q_1,
Z_0^1, x_2=0 < -\frac{1}{2}x_1 + 1\right)$. Transition $b$ is
fired from those time successors for which $x_2= -\frac{1}{2}x_1 + 1$.

On the geometric view of figure~\ref{fig:traj}, the displayed
trajectory corresponds to the following path in the class automaton:
\begin{eqnarray*}
&&R_0 \rightarrow R_0^1 \xrightarrow{a} R_1^1 \rightarrow \left(q_1, Z_0^1,0<x_2<-\frac12 x_1 + 1\right) \rightarrow \left(q_1, Z_0^1,0<x_2=-\frac12 x_1 + 1\right)\\
&&\xrightarrow{b} \left(q_2, Z_0^1,0<x_2=-\frac12 x_1 + 1\right)
\end{eqnarray*}


\section{A simpler model}\label{sec:reachcomplexity}

\subsection{Definition of ITA$_-$}

We introduce a restricted version of ITA, called ITA$_-$, which is
interesting both from a theoretical and a practical point of
view. When modeling interruptions in real-time systems, the clock
associated with some level measures the time spent in this level or
more generally the time spent by some tasks at this level. Thus when
going to a higher level, this clock is not updated until returning to
this level. The ITA$_-$ model takes this feature into
account. Moreover, it turns out that the reachability problem for
ITA$_-$ can be solved more efficiently.
This also provides a better complexity upper-bound for the reachability problem on ITA (in the general case).

\begin{definition}
The subclass ITA$_-$ of ITA is defined by the following restriction
on updates.  For a transition $q \tr{\fee, a, u} q'$ of an automaton
$\A$ in ITA$_-$ (with $k=\lambda(q)$ and $k'=\lambda(q')$), the update
$u$ is of the form $\wedge_{i=1}^{n} x_i := C_i$ with:
\begin{itemize}
\item if $k > k'$, then for $1 \leq i \leq k'$, $C_i := x_i$ and for $k'+1 \leq i \leq n$, $C_i = 0$,
\MS{Reviewer 1, Rq 14. Done.}
\textit{i.e.} the only updates are the resets of now irrelevant clocks;
\item if $k \leq k'$ then 
$C_k$ is of the form $\sum_{j=1}^{k-1} a_jx_j +b$ or $C_k=x_k$.
For $k < i \leq k'$, $C_i=0$ and $C_i=x_i$ otherwise.
\end{itemize}
\end{definition}

Thus, complex updates appear only in transitions increasing the level,
and only for the active clock of the transition level. 

 The proof of the following result is based on
 Propositions~\ref{proposition:itaitamoins}
 and~\ref{proposition:reachitamoins} proved in the next two sections.
 
 \begin{theorem}\label{thm:optreach}
 The reachability problem for $ITA$ belongs to \emph{NEXPTIME}.
 \end{theorem}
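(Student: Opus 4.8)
The plan is to combine the two reductions announced right before the statement. First I would invoke Proposition~\ref{proposition:itaitamoins}: given an arbitrary ITA $\A$, one can construct a language-equivalent automaton $\A'$ in ITA$_-$ whose size is at most exponential in the size of $\A$ (and only polynomial when the number of clocks is fixed). Since reachability is a property of the set of reachable states, and the reduction preserves not merely the language but the reachable control states, reachability in $\A$ holds if and only if it holds in $\A'$. Thus it suffices to decide reachability for $\A'$, an automaton in ITA$_-$ of exponential size.

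Next I would apply Proposition~\ref{proposition:reachitamoins}, which asserts that reachability for an automaton in ITA$_-$ is decidable in NEXPTIME. Concretely, that proposition will be obtained by establishing a bound (polynomial, or at worst singly exponential, in the size of the ITA$_-$ automaton) on the length of a shortest reachability witness path, and then guessing such a path non-deterministically and checking its feasibility by solving a linear program over the rationals — feasibility of a polynomial-size linear system being in PTIME. Feeding in $\A'$, whose size is exponential in $|\A|$, the shortest-path bound becomes exponential in $|\A|$, so a non-deterministic machine guesses an exponentially long sequence of transitions of $\A'$; each transition of $\A'$ is itself of exponential size, so the guessed certificate has size at most exponential in $|\A|$, and the associated linear feasibility check runs in time polynomial in the certificate size, i.e. exponential in $|\A|$. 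Hence the whole procedure is in NEXPTIME for $\A$.

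Composing: build $\A'$ (exponential time, hence within the NEXPTIME budget), then run the non-deterministic ITA$_-$ reachability algorithm on $\A'$. The overall non-deterministic running time is exponential in $|\A|$, which is exactly NEXPTIME. I would also note the refinement: when the number of clocks is fixed, the ITA-to-ITA$_-$ blowup is polynomial and the ITA$_-$ procedure likewise stays polynomial-time non-deterministic, recovering an NP bound (consistent with, and not contradicting, the PTIME bound of Proposition~\ref{prop:reachita} via the class graph).

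The substance of the theorem is entirely carried by the two cited propositions, so there is no genuine obstacle left in assembling them; the only point demanding a little care is bookkeeping of the two nested exponentials — one must check that composing "exponential blow-up in size" with "exponential-length certificates whose per-step size is that blown-up size, checked in polynomial time" still lands in NEXPTIME rather than $2$-NEXPTIME. This is fine because $2^{p(2^{q(|\A|)})}$-style towers never arise: the certificate length is $2^{O(\mathrm{poly}(|\A|))}$ and each symbol has size $2^{O(\mathrm{poly}(|\A|))}$, so the certificate has total size $2^{O(\mathrm{poly}(|\A|))}$ and the deterministic verification is polynomial in that, i.e. still $2^{O(\mathrm{poly}(|\A|))}$. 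Therefore the claimed NEXPTIME bound holds.
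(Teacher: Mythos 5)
Your overall strategy is exactly the paper's: translate $\A$ into the language-equivalent ITA$_-$ automaton $\A'$ of Proposition~\ref{proposition:itaitamoins}, then run the bounded-path-plus-linear-program procedure of Proposition~\ref{proposition:reachitamoins} on $\A'$. The conclusion is right, but the one step you yourself flag as ``demanding a little care'' is precisely the step you do not carry out, and the premise you offer for it is too weak to yield it. You describe the ITA$_-$ path bound as ``polynomial, or at worst singly exponential, in the size of the ITA$_-$ automaton''. Under the reading ``singly exponential in $|\A'|$'', composing with the blow-up $|\A'|=2^{\mathrm{poly}(|\A|)}$ gives a certificate length of the form $2^{\mathrm{poly}(2^{\mathrm{poly}(|\A|)})}$, i.e.\ a 2-NEXPTIME procedure, not NEXPTIME; under the reading ``polynomial in $|\A'|$'' the premise is simply false, since $(E+n)^{3n}$ is not polynomial in the input size. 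So your assertion that ``the certificate length is $2^{O(\mathrm{poly}(|\A|))}$'' does not follow from what you state --- it is exactly the fact that has to be established.

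What makes the composition stay singly exponential is the precise shape of the bound in Proposition~\ref{proposition:reachitamoins}: witness paths have length at most $(E'+n)^{3n}$, i.e.\ polynomial in the number of transitions $E'$ with an exponent $3n$ depending only on the number of clocks, together with the fact, explicit in Proposition~\ref{proposition:itaitamoins}, that the translation keeps the same set of clocks, so $n$ does not grow. The paper's proof is essentially this single computation: $E' \leq 2^{4b\cdot E\cdot n^2}$, hence $(E'+n)^{3n} \leq \left(2^{4b\cdot E\cdot n^2}+n\right)^{3n} \leq (n+2)^{12 b\cdot E\cdot n^3}$, which is singly exponential in $|\A|$. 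Once this observation (clock count preserved, exponent governed only by the clock count) is spelled out, the rest of your bookkeeping --- per-transition size at most exponential, linear program solvable in time polynomial in the certificate size --- goes through as you describe.
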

 \begin{proof}
   Given an ITA $\mathcal A$ with transitions of size $E$ and constants coded over $b$ bits, we build the ITA$_-$
   $\mathcal A'$ of Proposition~\ref{proposition:itaitamoins}. Then we
   apply on $\mathcal A'$ the reachability procedure of
   Proposition~\ref{proposition:reachitamoins}.  In this procedure, we
   consider paths of length bounded by $(E'+n)^{3n}$, where $E'$ is
   the number of transitions of $\mathcal A'$.  Since $E' \leq
   2^{4b \cdot E \cdot n^2}$ (as shown in the proof of
   Proposition~\ref{proposition:itaitamoins}), the length of the paths
   considered is bounded by
 \[(E'+n)^{3n} \leq \left(2^{4b \cdot E \cdot n^2}+n\right)^{3n} \leq (n+2)^{12 b \cdot E \cdot n^3}\]
 which establishes the claimed upper  bounds.  \qed
 \end{proof}

\subsection{From ITA to ITA$_-$}
\label{sec:relexp}

In this subsection we prove that ITA and ITA$_-$ are equivalent
w.r.t.\ the associated (timed) languages.

\begin{proposition}
\label{proposition:itaitamoins}
Given an ITA $\mathcal A$, we build an automaton $\mathcal A'$ in
ITA$_-$ accepting the same timed language and with the same clocks
such that its number of edges (resp. states) is exponential w.r.t. the
number of edges (resp. states) in $\mathcal A$ and polynomial when the
number of clocks is fixed.
\end{proposition}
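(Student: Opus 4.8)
The plan is to simulate an arbitrary ITA $\mathcal A$ by an ITA$_-$ $\mathcal A'$ by compiling away the "forbidden" updates — that is, the assignments $x_i := C_i$ with $i<\lambda(q)$ on a transition, and the assignments to non-active clocks on a level-increasing transition. The key observation is that in $\mathcal A$, whenever the level strictly decreases from $k$ to $k'$, clocks $x_{k'+1},\dots,x_k$ become irrelevant anyway, so the only problematic updates are: (a) on level-decreasing transitions, updates of the surviving clocks $x_1,\dots,x_{k'}$ to linear expressions; and (b) on level-increasing transitions, updates of clocks $x_1,\dots,x_{k-1}$ (everything below the new active clock $x_k$). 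In both cases, these clocks are ``frozen'' (rate $0$) from the moment the transition is taken until control returns to a level where they are active again. So the idea is to postpone the effect of such an update: instead of actually writing $x_i := C_i$, remember symbolically, in the control state, that the ``true'' value of $x_i$ is obtained from its stored value by a known affine transformation, and only realize it lazily when $x_i$ next becomes the active clock.

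Concretely, I would proceed as follows. First I would argue that along any run, between two consecutive visits to level $i$ as the active level, the clock $x_i$ can be affected only by updates on transitions of level $>i$ (entering or leaving), and each such update replaces $x_i$ by an affine expression in $x_1,\dots,x_{i-1}$ or leaves it unchanged; composing these yields, at each point, $x_i = \sigma_i$ where $\sigma_i$ is an affine expression over $\{x_1,\dots,x_{i-1}\}$ with rational coefficients — call such a $\sigma_i$ a \emph{substitution}. I would then enrich the control state of $\mathcal A'$ with a vector $(\sigma_1,\dots,\sigma_n)$ of such substitutions (with $\sigma_i$ undefined / equal to $x_i$ when level $i$ is currently active or above). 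Guards and updates of $\mathcal A$ are rewritten under the current substitution: a guard $\fee$ is fired in $\mathcal A'$ after substituting $x_i \mapsto \sigma_i$ for all suspended $i$; the active clock $x_k$ is updated honestly (this is allowed in ITA$_-$, being the active clock of the transition level when the level does not decrease), while the bookkeeping for the $\sigma_i$ is updated in the control state to reflect what $\mathcal A$'s update would have done. When a transition returns to level $i$ (i.e.\ $\lambda(q')=i$), the stored substitution $\sigma_i$ is an affine expression over lower clocks, and $\mathcal A'$ performs the single honest active-clock update $x_i := \sigma_i$ — which is exactly the one complex update ITA$_-$ permits on a level-increasing transition. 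I would check that time steps are unaffected (only the active clock moves in both models) and that discrete steps of $\mathcal A$ and $\mathcal A'$ are in bijection with the same labels and delays, giving a (timed) language-preserving simulation; the final and initial states carry over with the trivial substitution vector.

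For the size bound, the number of distinct substitutions $\sigma_i$ reachable is the crux of the counting: each $\sigma_i$ arises by composing at most $O(n)$ affine maps drawn from the (finitely many) updates of $\mathcal A$, and the coefficients stay rational with denominators/numerators bounded by a product of at most $n$ of the original constants, hence coded over $O(bn)$ bits where $b$ bounds the bit-size of $\mathcal A$'s constants. Thus each substitution vector is described by $O(bn^2)$ bits, so there are at most $2^{O(bn^2)}$ of them; multiplying by $|Q|$ states gives the stated exponential blow-up in $|Q|$ and in $E$, and when $n$ is fixed the coefficients have polynomially many possible values and the number of substitution vectors is polynomial, giving the polynomial bound. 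I would state the explicit bound $E' \leq 2^{4b\cdot E\cdot n^2}$ used in Theorem~\ref{thm:optreach} as the outcome of this count.

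The main obstacle I expect is twofold. First, getting the bookkeeping for level-\emph{increasing} transitions exactly right: when $\mathcal A$ updates several low clocks $x_1,\dots,x_{k-1}$ on such a transition, $\mathcal A'$ must fold all of these into the stored substitutions \emph{simultaneously} (remembering that updates are applied in parallel in both models), and must do so \emph{before} the new active clock $x_k$ starts running, all while respecting the ITA$_-$ restriction that forbids touching $x_1,\dots,x_{k-1}$ on that very transition — so the whole effect has to live purely in the control state. Second, a clean invariant tying a configuration $(q,v,\beta)$ of $\mathcal A$ to a configuration of $\mathcal A'$: the invariant should say that if $\mathcal A'$ is in state $(q,\sigma_1,\dots,\sigma_n)$ with valuation $v'$, then $v'$ agrees with $v$ on all currently-active-or-higher clocks, and for each suspended clock $i$, $v(x_i) = \sigma_i[v']$. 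Maintaining this across both kinds of steps, and in particular verifying that the composition of substitutions never needs more than $n$ factors (so the size bound holds), is where the real care is required; the rest is routine rewriting of guards and updates.
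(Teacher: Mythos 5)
Your overall strategy is the same as the paper's: store, in the control state, a vector of symbolic affine substitutions recording the pending updates of suspended clocks, rewrite guards under these substitutions, and count the number of distinct expressions (products of at most $n$ coefficients of $b$ bits each) to get the singly exponential bound $E' \leq 2^{4b\cdot E\cdot n^2}$, polynomial for fixed $n$. However, there is one genuine gap, and it sits exactly at the point you pass over most quickly: the restoration step. You realize the stored substitution $\sigma_i$ by performing the update $x_i := \sigma_i$ \emph{on the transition that returns to level $i$}, and you claim this is ``the one complex update ITA$_-$ permits.'' It is not: that transition is level-\emph{decreasing} (source level $k > i$), and the ITA$_-$ definition forces all surviving clocks to be left untouched there ($C_j := x_j$ for $j \leq i$) and the higher ones to be reset; the complex update $x_k := \sum_{j<k}a_jx_j+b$ is only permitted when the level does not decrease, and only for the active clock of the \emph{source} level. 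So the automaton you describe is not an ITA$_-$. Nor can you postpone the realization to the next transition out of level $i$: once time elapses at level $i$, the true value of $x_i$ is $\sigma_i$ plus an elapsed duration, while the value of $x_i$ carried by your automaton is a stale, run-dependent real that is not expressible as a linear expression over the lower clocks, so guards can no longer be rewritten.

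The paper's fix is a specific device you are missing: every level-decreasing transition of $\A$ is split in $\A'$, targeting first an intermediate state $(q^-,e_1,\ldots,e_i)$ of level $i$ with \emph{Urgent} policy and only trivial updates (legal for a decreasing transition), followed by a single $\eps$-transition $(q^-,e_1,\ldots,e_i) \tr{true,\eps,\,x_i:=e_i} (q^+,e_1,\ldots,e_{i-1})$ whose source and target are both at level $i$, so the complex update of $x_i$ is now syntactically allowed; urgency guarantees zero delay, preserving the timed language. This is precisely where the policy feature of ITA is needed for ITA$_-$ (and it at most doubles the number of states and transitions, so the size analysis survives). Your treatment of level-increasing transitions, the guard rewriting, the simulation invariant, and the counting argument (modulo justifying that substitution depth is bounded by the level structure, as you note) otherwise match the paper's proof.
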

\begin{proof}
  Starting from ITA $\A=\langle\Sigma, AP, Q, q_0, F,pol, X, \lambda,$
  $lab, \Delta\rangle$, the construction of automaton $\mathcal A'$
  relies on memorizing at a given level $i$, for every clock $x_{j}$
  at a lower level, an expression depending on $x_1,\ldots,x_{j-1}$,
  corresponding to the delayed update of $x_{j}$. This expression is
  used later to replace the value of $x_j$ in guards and to restore
  its correct value by update after decreasing to level $j$.

To this aim we associate with every pair of levels $i \geq j$, a set
of expressions $F_{i,j}$ inductively defined by:
\begin{itemize}
	\item $F_{i,i}=\{x_i\}$
	\item $\forall i>j\ F_{i,j}=F_{i-1,j} \cup \{e[\{x_k
        \leftarrow e_k\}_{k<j}] \mid e$ is the expression of an
        update of $x_j$ by an edge of level $i$ and $\forall k,\ e_k \in
        F_{i,k}\}$
\end{itemize}
We write $F_j = F_{n,j} = \bigcup_{i=j}^n F_{i,j}$.  The set $F_j$
thus contains all expressions of updates of $x_j$ that appear at
higher levels.

Although the number of expressions is syntactically doubly exponential
w.r.t.\ the number of clocks, one can show that the number of
\emph{distinct} expressions is only singly exponential.

First we assume that ITA $\A$ has only integral constants, the case of
rational constants is handled at the end of the proof.  It can be
shown that every expression $e_k$ of $F_k$ can be written
\[e_k = \sum_{i_0,\dots,i_p \in \textrm{sub}(k)} \alpha_{k,i_p} \cdot
\alpha_{i_p,i_{p-1}} \cdots \alpha_{i_1,i_0} \cdot x_{i_0}\] with the
convention that $x_0$ is the constant $1$, and where $\textrm{sub}(k)$
is the set of all (ordered) subsequences of $0,\dots,k-1$ and
$\alpha_{j,i}$ is the coefficient of $x_i$ in some update of $x_j$.

For the family $\alpha$ of all integers $\alpha_{j,i}$, assume that
these constants are coded over $b_\alpha$ bits each (including the
sign of the coefficient).  The expression $x_{i_0}$ can also be coded
into an integer of $\log_2(n)$ bits (with a special symbol to indicate
that it is the expression of a clock rather than a constant).  Let $b
= \max(b_\alpha,\log_2(n)+1)$ be the (maximal) number of bits used to
code a coefficient.  Then each term of the sum is a product of at most
$k$ such coefficients, therefore can be coded with $kb$ bits.  Summing
at most $2^k$ such products yields an integer that can be coded over
$kb+k$ bits.  Thus there can be at most $2^{k(b+1)}$ different
expressions in $E_k$.

%

Automaton $\mathcal A'$ is then defined as follows.
\begin{itemize}
\item The set of states is
\[\begin{array}{rcl}
  Q' &=& \{(q^+,e_1,\ldots,e_{i-1})\mid q \in Q, \ 
  \lambda(q)=i \mbox{ and } \forall j,\ e_j \in F_j \} \\
  & & \cup \{(q^-,e_1,\ldots,e_{i}) \mid q \in Q, \  \lambda(q)=i 
\mbox{ and } \forall j,\ e_j \in F_j \},
\end{array}\]
with $pol(q^+,e_1,\ldots,e_{i-1})=pol(q)$ and $pol(q^-,e_1,\ldots,e_{i})=U$.\\
Note that the sequence is empty if $i=1$. Moreover:
\[\lambda(q^+,e_1,\ldots,e_{i-1})=\lambda(q^-,e_1,\ldots,e_{i})=\lambda(q).\]
\item The initial state of $\mathcal A'$ is $(q_0^+,x_1,\ldots, x_{i-1})$
if $\lambda(q_0)=i$.  The final states of $\mathcal A'$ are the
states with first component $q^+$ for $q \in F$. 
\item Let $q \tr{\fee,a,u} q'$ be a transition in $\A$ such that
  $\lambda(q)=i$, $\lambda(q')=i'$ and $u$ is defined by
  $\bigwedge_{j=1}^{i} x_j := C_j$. 
\begin{itemize}
\item[$\bullet$] If $i \leq i'$, then for every
  $(q^+,e_1,\ldots,e_{i-1})$ there is a transition
  $$(q^+,e_1,\ldots,e_{i-1}) \tr{\fee',a,u'}
  (q'^+,e'_1,\ldots,e'_{i'-1})$$ in $\A'$ with $\fee'=\fee(\{x_j
  \leftarrow e_j\}_{j<i})$, update $u'$ is defined by $x_i:=C_i[\{x_j
  \leftarrow e_j\}_{j<i}]$; for all $j<i$, $e'_j=C_j[\{x_k \leftarrow
  e_k\}_{k<i}]$ and for all $j$ such that $i\leq j<i'$, $e'_j=x_j$.
\item[$\bullet$] If $i > i'$ then for every $(q^+,e_1,\ldots,e_{i-1})$
  there is a transition $$(q^+,e_1,\ldots,e_{i-1}) \tr{\fee',a,u'}
  (q'^-,e'_1,\ldots,e'_{i'})$$ in $\A'$ with $\fee'=\fee(\{x_j
  \leftarrow e_j\}_{j<i})$, update $u'$ contains only the trivial
  updates $x_j := x_j$ for all clocks and for all $j\leq i'$,
  $e'_j=C_j[\{x_k \leftarrow e_k\}_{k<i}]$.

\item[$\bullet$] For every $(q^-,e_1,\ldots,e_{i})$ there is in $\A'$
  a transition
\[(q^-,e_1,\ldots,e_{i}) \tr{true,\varepsilon,x_{i}:=e_{i}}
(q^+,e_1,\ldots,e_{i-1}).\]
\end{itemize}
\end{itemize}

In words, given a transition, the guard is modified according to these
expressions. The modification of the update consists only in applying
the update at the current level and taking into account the other
updates in the expressions labeling the destination state.  When the
transition increases the level, the expression associated with a new
``frozen'' clock ($x_j$ for $i\leq j <i'$) is the clock itself. The
urgent states $(q^-,-)$ are introduced for handling the case of a
transition that decreases the level.  In this case, one reaches such a
state that memorizes also the expression of the clock at the current
level.  Note that the memorized expressions can correspond to an
update proceeded at any (higher) level.  From this state a single
transition must be (immediately) taken whose effect is to perform the
update corresponding to the memorized expression.

It is routine to check that the languages of the two automata are
identical.  Each transition in $\A$ is replaced by several transitions
in $\A'$, which number is bounded by the number of expressions that
can be attached to the source of the original transition.  In
addition, transitions decreasing level are further ``split'' through
states $(q^-,-)$.  Thus the number $E'$ of transitions in $\A'$ is
bounded by
\begin{eqnarray*}
E' &\leq& 2\cdot E \cdot |F_n|^n \\
&\leq& 2 \cdot E \cdot \left(2^{n(b(E+1)+1)}\right)^n \\
&\leq& 2 \cdot E \cdot 2^{n^2(b(E+1)+1)} \\
&\leq& 2^{n^2(b(E+1)+1)+1+\log_2(E)} \\
&\leq& 2^{n^2((b+1)(E+1)+1)} \\
E' &\leq& 2^{4b \cdot E \cdot n^2}
\end{eqnarray*}
(provided $E\geq 2$).
This yields the exponential complexity for the number of transitions.
The case of the number of states is similar.

In the case when there are rational constants, assume each constant is
coded with a pair $(r,d)$ of numerator and denominator.  Assume each
$r$ and $d$ can be coded over $b$ bits.  We compute the \emph{lcm}
$\delta$ of all denominators: since there are at most $E$ constants
($E$, the size of $\Delta$ contains the number of guards and updates),
$\delta$ can be coded over $Eb$ bits.  We consider ITA $\A_\delta$
which is $\A$ where all constants are multiplied by $\delta$.  Thus a
constant of $\A_\delta$ is an integer that can be coded over $b' = Eb
+b = b(E+1)$ bits.  The above bound on the number of expressions
applies on $\A_\delta$.  Note that after the construction of
$\A_\delta'$, $\A'$ can be obtained by dividing each constant in
$\A_\delta'$ by $\delta$.  \qed
\end{proof}

\paragraph{Example.}
We illustrate this construction on ITA $\A_2$ of
\figurename~\ref{fig:exitatoitamoins}. The sets of expressions are computed as on \tablename~\ref{tab:itatoitamoins} and the resulting ITA$_-$ $\A'_2$ is depicted on \figurename~\ref{fig:exitamoinsfromita}.

\begin{figure}[h]
\centering
\begin{tikzpicture}[auto]
\node[state,initial] (q0) at (-0.5,-0.25) {$q_0,2$};
\node[state,initial] (q1) at (-0.5,-1.75) {$q_1,2$};
\node[state] (q2) at (2.5,-1) {$q_2,3$};
\node[state] (q3) at (6.5,-1) {$q_3,3$};
\node[state,accepting] (q4) at (9.5,-0.25) {$q_4,3$};
\node[state,accepting] (q5) at (9.5,-1.75) {$q_5,2$};

\path[->] (q0) edge node [pos=0.25] {$x_1:=2$} (q2);
\path[->] (q1) edge (q2);
\path[->] (q2) edge node {\timedtransnoreset{2x_2+x_1 > 3 \wedge x_3 < 2}{x_2:=2x_1+1}} (q3);
\path[->] (q3) edge node [pos=0.75] {\timedtransnoreset{x_2:=x_1+1}{x_3:= 2x_2}} (q4);
\path[->] (q3) edge node [swap,pos=0.75] {$x_1:=1$} (q5);

\end{tikzpicture}
\caption{ITA $\A_2$ containing updates of frozen clocks}
\label{fig:exitatoitamoins}
\end{figure}
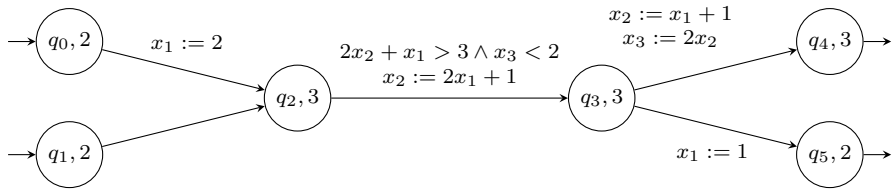

\begin{figure}[h]
\centering
\begin{tikzpicture}[node distance=3.125cm,auto]
\tikzstyle{big state}=[state,shape=rounded rectangle,inner xsep=0pt]

\node[state,initial] (q0) at (0,0) {$q_0^+,2$};
\node[big state,node distance=2cm,right of=q0] (q2cond) {$\begin{array}{c}q_2^+,3\\x_1:=2\end{array}$};
\node[big state,right of=q2cond] (q3cond1) {$\begin{array}{c}q_3^+,3\\x_1:=2\\x_2:=5\end{array}$};
\node[big state,accepting,node distance=4.25cm,right of=q3cond1] (q4cond1) {$\begin{array}{c}q_4^+,3\\x_1:=2\\x_2:=3\end{array}$};

\node[big state,accepting,below of=q4cond1] (q5p) {$\begin{array}{c}q_5^+,2\\x_1:=1\end{array}$};
\node[big state] (q5m1) at ($(q3cond1 |- q5p) + (0,1)$) {$\begin{array}{c}q_5^-,2,U\\x_1:=1\\x_2:=5\end{array}$};
\node[big state] (q5m2) at ($(q3cond1 |- q5p) - (0,1)$) {$\begin{array}{c}q_5^-,2,U\\x_1:=1\\x_2:=2x_1+1\end{array}$};

\node[big state] (q3cond2) at ($(q5m2) + (0,-2)$) {$\begin{array}{c}q_3^+,3\\x_2:=2x_1+1\end{array}$};
\node[state,left of=q3cond2] (q2) {$q_2^+,3$};
\node[state,initial,node distance=2cm,left of=q2] (q1) {$q_1^+,2$};
\node[big state,accepting,node distance=4.25cm,right of=q3cond2] (q4cond2) {$\begin{array}{c}q_4^+,3\\x_2:=x_1+1\end{array}$};

\path[->] (q0) edge (q2cond);
\path[->] (q2cond) edge node {\timedtransnoreset{2x_2+2>3}{\wedge\, x_3<2}} (q3cond1);
\path[->] (q3cond1) edge node {$x_3:=10$} (q4cond1);
\path[->] (q3cond1) edge (q5m1);
\path[->] (q5m1) edge node [pos=0.375] {$\varepsilon,x_2:=5$} (q5p);

\path[->] (q1) edge (q2);
\path[->] (q2) edgenode {\timedtransnoreset{2x_2+x_1>3}{\wedge\, x_3<2}} (q3cond2);
\path[->] (q3cond2) edge node {$x_3:=4x_1+2$} (q4cond2);
\path[->] (q3cond2) edge (q5m2);
\path[->] (q5m2) edge node [swap,pos=0.375] {$\varepsilon,x_2:=x_1+1$} (q5p);

\end{tikzpicture}
\caption{ITA$_-$ $\A'_2$ equivalent to $\A_2$}
\label{fig:exitamoinsfromita}
\end{figure}

\begin{table}[h]
\centering
\[\begin{array}{|l|c|c|c|}
\hline
i \ \backslash\  j & 1 & 2 & 3 \\
\hline
1 & \{x_1\} & & \\
\hline
2 & \{x_1,2\} & \{x_2\} & \\
\hline
3 & \{x_1,2,1\} & \{x_2, \underbrace{2x_1+1,\  5,\  3}_{q_2 \xrightarrow{x_2:=2x_1+1} q_3}, \underbrace{x_1+1,\  3,\  2}_{q_3 \xrightarrow{x_2:=x_1+1} q_4}\!\!\} & \{x_3\} \\
\hline
\end{array}\]
\caption{Sets of expressions $F_{i,j}$ for $\A_2$.}
\label{tab:itatoitamoins}
\end{table}

The translation above of an ITA into an equivalent ITA$_-$ induces an
exponential blowup.  The proposition below shows that the bound is reached.
\begin{proposition}
  There exist a family $\{\A_n\}_{n\in \N}$ of ITA with two states,
  $n$ clocks and constants coded over $b$ bits, where $b$ is
  polynomial in $n$, such that the equivalent ITA$_-$ built by the
  procedure above has a number of states greater than or equal to
  $2^n$.
\end{proposition}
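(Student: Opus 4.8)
The plan is to exhibit a family $\{\A_n\}$ of ITA with exactly two states and $n$ clocks for which the sets $F_j$ of delayed-update expressions — hence the state set $Q'$ of the ITA$_-$ produced by the construction of Proposition~\ref{proposition:itaitamoins} — genuinely contain $2^{\Omega(n)}$ distinct expressions, so that $|Q'|\geq 2^n$. The natural candidate is a two-state automaton (a low-level state $q_0$ at level $1$ and a high-level state $q_f$ at level $n$, or more flexibly a self-loop structure) carrying, for each clock $x_j$, an update of the form $x_j := x_{j-1} + c_j$ (or $x_j := 2x_{j-1}$, or a choice among such). Since the construction memorizes, at level $i$, an expression $e_j \in F_j$ for each lower clock $x_j$, and $F_j$ collects all the compositions of these partial updates, the key is to choose the updates so that distinct subsequences of $\{1,\dots,j-1\}$ (or distinct subsets of the update coefficients) yield distinct expressions in $F_j$.

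First I would fix the architecture: take two control states, one of level $1$, one of level $n$ (the reason two states suffice is that a single self-loop at the top level, or a single up-transition, can carry all the updates, and a single down-transition closes the loop). On the transition increasing the level I place, for each $j$, a nondeterministic choice between $x_j := x_{j-1}$ and $x_j := x_{j-1}+1$ (encoded via several parallel edges if the definition forbids it on one edge — but this does not affect the state count, only the edge count), so that composing these along the chain $x_1 \mapsto x_2 \mapsto \cdots$ produces, in $F_n$, the expressions $x_1 + \sum_{j\in S} c_j \cdot(\text{something})$ for all $S \subseteq \{2,\dots,n\}$. Concretely, using $x_j := 2x_{j-1} + b_j$ with $b_j$ chosen so that the family $\{\sum_{j\in S} 2^{\,n-j} b_j \mid S\}$ takes $2^{n-1}$ distinct values (e.g. $b_j = 1$, giving distinct binary numbers), one gets $|F_n| \geq 2^{n-1}$, and each such expression appears as a component $e_j$ in some state $(q_f^+, e_1,\dots,e_{n-1})$ of $\A'$. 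Hence $|Q'| \geq 2^{n-1}$, and a trivial reindexing (add one more clock, or shift) gives the stated $\geq 2^n$. The constants $b_j$ are $O(1)$, so $b = O(\log n)$ bits suffice, which is polynomial in $n$ as required.

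The second step is the actual lower-bound argument: show that the chosen updates produce $2^n$ pairwise \emph{distinct} expressions, i.e. that no two subsequences collapse. This is the one place where a genuine (if short) computation is needed — one writes $e_k = \sum \alpha_{k,i_p}\alpha_{i_p,i_{p-1}}\cdots\alpha_{i_1,i_0}x_{i_0}$ as in the proof of Proposition~\ref{proposition:itaitamoins}, plugs in the concrete coefficients, and observes that the map from subsequences to the coefficient of $x_0$ (the constant term) is injective by a base-$2$ (or base-large-enough) argument. I would then note that the construction of Proposition~\ref{proposition:itaitamoins} creates a distinct state $(q_f^+, e_1,\dots,e_{n-1})$ for each reachable tuple of expressions, and that all these tuples (or at least $2^n$ of them) are reachable because the up-transition and its variants can realize every combination.

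The main obstacle I anticipate is not conceptual but bookkeeping: making sure the two-state restriction is honestly respected while still packing $n$ distinct updates (one per clock) into the automaton, and checking that the construction of $\A'$ does not accidentally identify or prune the states — in particular that reachability in $\A'$ does not kill the blow-up. This is handled by making every expression tuple genuinely reachable (via the nondeterministic choices on the level-increasing transitions) and by pointing out that $|Q'|$ in the construction is defined combinatorially from $F_j$, before any reachability pruning, so the bound $|Q'| \geq 2^n$ holds for the automaton literally produced by the procedure, which is all the statement claims.
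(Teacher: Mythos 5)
Your overall strategy coincides with the paper's: two states, a per-clock choice between two updates, and the compositions recorded in the sets $F_{i,j}$ yielding exponentially many distinct expressions, hence exponentially many states of the resulting ITA$_-$. However, your concrete instantiation has a genuine gap in the architecture. You place the updates $x_j := x_{j-1}$ (or $x_j := 2x_{j-1}+b_j$), for \emph{all} $j$, on the level-increasing transition from a level-$1$ state to a level-$n$ state. That is not a legal ITA transition: on an edge whose source has level $k=1$, only $x_1$ may be updated (by a constant, or kept), and every clock above the source level is forced to $0$; in general an update of $x_j$ may only sit on an edge whose source level is at least $j$. Parallel level-increasing edges do not circumvent this, and your fallback of ``a single self-loop at the top level carrying all the updates'' does not create a blow-up either: with one loop whose update is the conjunction $\bigwedge_j x_j := 2x_{j-1}+1$, the sets satisfy $F_{n,j}=\{x_j\}\cup\{2e+1 \mid e\in F_{n,j-1}\}$ and grow only linearly. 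What is needed --- and what the paper does --- is to put the loops on a state of level $n$ with \emph{two} loops per clock $k$ (in the paper, $x_k := x_{k-1}$ and $x_k := \alpha_k x_{k-1}$ with $\alpha_k$ the $k$-th prime), so that each clock independently contributes a binary choice to the composed expressions.

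The second gap is the choice of updates. Your primary pair, $x_j := x_{j-1}$ versus $x_j := x_{j-1}+1$, does not give $2^n$ distinct expressions: the composition only records how many ``$+1$'' choices were taken, so only linearly many values arise. Even the mixed pair $x_j := x_{j-1}$ versus $x_j := 2x_{j-1}+1$ collides (doubling at level $1$ then copying at level $2$ yields the same constant $3$ as the reverse order). A positional or multiplicative encoding is required, in which every level's choice leaves an identifiable trace: always double and choose the bit ($x_j := 2x_{j-1}$ versus $x_j := 2x_{j-1}+1$, distinctness by binary expansion --- your ``concretely'' sentence points in this direction but never fixes the two alternatives per level), or the paper's multiply-by-a-fresh-prime-or-not, with distinctness by unique factorization and constants of $O(\log^2 n)$ bits. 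With the loops placed at level $n$ and one of these corrected pairs, the rest of your argument (the states of $\A'$ in Proposition~\ref{proposition:itaitamoins} are all tuples $(q^+,e_1,\ldots,e_{n-1})$ with $e_j\in F_j$, so no reachability pruning can destroy the bound, and the automaton's constants remain of size polynomial in $n$) goes through as in the paper.
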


\begin{proof}
  For $n \in \N$, let $\A_n$ be the ITA with $n$ clocks and two states
  $q_{\textrm{init}}$ (initial) and $q$ (final) both of level $n$ (and
  lazy policy) built as follows.
There is a transition from $q_{\textrm{init}}$ to $q$ with update $\bigwedge_{k=1}^n x_k:=1$ that sets all clocks to $1$.
For $1 \leq k \leq n$ there are two loops on $q$ with updates $x_k := x_{k-1}$ and $x_k := \alpha_{k} x_{k-1}$ respectively, where $\alpha_k$ is the $k$th prime number (and with the convention that $x_0$ is the constant $1$).

When building the sets of expressions, no expressions are added until level $n$, since all updates occur at this level.
At level $k$, $F_{n,k}$ contains (at least) $2^k$ expressions: all possible products of the first $k$ prime numbers, namely
\[F_{n,k} \supset \left\{\prod_{i \in I} \alpha_i \:\middle|\: I \subseteq \{1,\dots,k\}\right\}.\]
Indeed, at level $1$, $F_{n,1} = \{x_1,1,2\}$.
Now assume that $F_{n,k-1}$ contains all products $\prod_{i \in I} \alpha_i$ where $I \subseteq \{1,\dots,k-1\}$.
By update $x_k := x_{k-1}$, $F_{n,k} \supset F_{n,k-1}$.
By update $x_k := \alpha_k x_{k-1}$, $F_{n,k}$ contains all products $\alpha_k \prod_{i \in I} \alpha_i = \prod_{i \in I\uplus\{k\}} \alpha_i$.
Therefore \[F_{n,k} \supset \left\{\prod_{i \in I} \alpha_i \:\middle|\: I \subseteq \{1,\dots,k-1\}\right\} \cup \left\{\prod_{i \in I\uplus\{k\}} \alpha_i \:\middle|\: I \subseteq \{1,\dots,k-1\}\right\} \]
\[F_{n,k} \supset \left\{\prod_{i \in I} \alpha_i \:\middle|\: I \subseteq \{1,\dots,k\}\right\}.\]
The expressions thus built are distinct, since they are products of distinct prime numbers.
Remark that the set of expression for level $k$ is in bijection with a sequence of updates $x_1 := \dots, x_2:= \dots, \dots, x_k:= \dots$, the choice of the update depending on the choice of the set $I$.

Therefore all expressions of $F_{n,n}$ are reached (in association
with state $q$) and the set of states in $\A_n'$ is at least of size
$2^n$.  In addition, it should be noted that the $n$th prime number is
in $O(n\log_2(n))$, therefore can be coded over
$O(\log_2(n)^2)$ bits.  So the size of the constants appearing in the
updates (and the size of the representation of $\A_n$) is polynomial
in $n$ while the representation of $\A'_n$ is exponential in $n$.
\end{proof}
\subsection{Reachability on ITA$_-$}

In this section we use counting arguments to obtain an upper bound for
the reachability problem on ITA$_-$.

The following counting lemma does not depend on the effect of the
updates but only on the timing constraints induced by the policies.
\begin{lemma}[Counting Lemma]
\label{lemma:counting}
Let $\mathcal A$ be an ITA$_-$ with $E$ transitions and $n$ clocks,
then in a sequence $(e_1,\ldots,e_l)$ of transitions of $\mathcal A$
where $l> (E+n)^{3n}$, there exist $i<j$ with $e_i=e_j$ such that the
level of any transition $e_k$ with $i\leq k\leq j$ is greater than or
equal to the level of $e_i$, say $p$, and:
\begin{itemize}
	\item either $e_i$ updates $x_{p}$,
	\item either no $e_k$ with $i\leq k\leq j$ updates $x_{p}$
	and $e_i$ is delayed or lazy.
	\item or no $e_k$ with $i\leq k\leq j$ updates $x_{p}$
	and no time elapses for clock $x_p$ between $e_i$ and $e_j$.
\end{itemize}

\end{lemma}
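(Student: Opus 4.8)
The plan is to prove the contrapositive in a structured way: assume the sequence $(e_1,\ldots,e_l)$ contains no pair $i<j$ with the stated properties, and derive $l\leq (E+n)^{3n}$. First I would decompose the sequence according to levels. For each position $k$, let $p_k=\lambda(e_k)$ be the level of transition $e_k$. The key observation is that we should look at maximal blocks of consecutive transitions all of level $\geq p$ for each fixed $p$; within such a block, the behaviour of clock $x_p$ is constrained. The three alternatives in the lemma exactly enumerate the ways a repetition $e_i=e_j$ inside such a block could arise ``harmlessly''; forbidding all of them forces each block to be short.

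More precisely, fix a level $p$ and consider a maximal factor $\sigma=(e_i,\ldots,e_j)$ of the sequence in which every transition has level $\geq p$ and $\lambda(e_i)=\lambda(e_j)=p$ (so $\sigma$ enters and exits level $p$ at its endpoints, staying at or above $p$ throughout — the natural ``excursion'' structure of an ITA$_-$ run). I claim that within $\sigma$ the transitions of level exactly $p$ that occur cannot repeat too much: if the same level-$p$ transition $e$ occurred at two positions $i'<j'$ inside $\sigma$ with no update of $x_p$ strictly between them, then since no $e_k$ between them updates $x_p$ (because $e$ itself, being repeated and not the first alternative, does not update $x_p$, and by maximality/choice we can arrange the intervening transitions don't either — this is the delicate bookkeeping point), the excursion structure forces that either time elapses for $x_p$ between $i'$ and $j'$ (giving the second alternative, since $e$ is lazy or delayed unless urgent) or it does not (the third alternative). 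Either way we contradict the hypothesis. Hence between any two occurrences of the same level-$p$ transition inside $\sigma$, the value of $x_p$ must genuinely change via an update, and since updates of $x_p$ in ITA$_-$ come only from transitions increasing into level $p$ or staying, the number of ``resets of the $x_p$-value'' is controlled. Counting: there are at most $E$ transitions of level $p$, and between consecutive updates of $x_p$ each can appear at most once, so a single excursion at level $p$ has length bounded polynomially; nesting this over the $n$ levels yields the $(E+n)^{3n}$ bound (the exponent $3n$ absorbing the recursion depth $n$ together with the polynomial blow-up at each level).

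The main obstacle I expect is the careful case analysis making precise which $i,j$ to pick and verifying that the three alternatives are genuinely exhaustive — in particular handling the interaction between ``$e_i$ updates $x_p$'' versus ``some later $e_k$ updates $x_p$'', and making sure that when we pick the witnessing pair we can also guarantee that no intermediate transition updates $x_p$ (otherwise the second and third bullets do not apply). This requires choosing, among all repeating pairs, a \emph{minimal} one — e.g. the innermost repetition at the lowest relevant level — so that the region between $e_i$ and $e_j$ is ``clean''. Once that choice is set up correctly, verifying the timing dichotomy (time elapses for $x_p$ or not) against the lazy/delayed/urgent policy is routine, and the counting bound follows by a straightforward recursion on $n$ using the elementary inequality manipulations analogous to those in Lemma~\ref{prop:terminate}.
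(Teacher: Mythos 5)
Your overall strategy --- proving the contrapositive, decomposing the sequence into maximal blocks of transitions of level at least $p$, bounding the number of level-$p$ transitions per block, and recursing over the $n$ levels --- is the one used in the paper. The gap lies in your central claim that any repetition of a level-$p$ transition with no intermediate update of $x_p$ contradicts the hypothesis because ``either time elapses for $x_p$ (second alternative) or it does not (third alternative)''. The second alternative requires the repeated transition to be \emph{lazy or delayed}; if it is urgent and time does elapse for $x_p$ between its two occurrences --- which is perfectly possible, since the block may pass through a lazy level-$p$ state in between, e.g. via an intermediate lazy level-$p$ transition that itself occurs only once --- then none of the three alternatives applies and there is no contradiction. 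Consequently your conclusion that ``between any two occurrences of the same level-$p$ transition inside $\sigma$ the value of $x_p$ must genuinely change via an update'' is false, and the per-block count of order $E^2$ that you base on it does not follow.

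The repair is a third counting layer, which is exactly how the paper argues. Inside a maximal block of transitions of level $\geq m$ (i.e.\ between two consecutive occurrences of transitions of level $<m$): (i) each level-$m$ transition updating $x_m$ occurs at most once, by the first bullet, so at most $E$ such occurrences; (ii) between two consecutive occurrences of kind (i), each lazy or delayed level-$m$ transition not updating $x_m$ occurs at most once, by the second bullet, so at most $E$ per gap; (iii) between two consecutive occurrences of transitions of kinds (i)--(ii), no time can elapse for $x_m$ --- time at level $m$ passes only in a level-$m$ state, which is then the source of the next level-$m$ transition of the stretch, and all of those are urgent there --- so a repeated urgent transition would trigger the third bullet, giving at most $E$ urgent ones per gap. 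This yields at most $E+E(E+1)+E(E(E+1)+1)\leq (E+1)^3$ level-$m$ transitions per block, and the induction over levels then gives $(E+m)^{3m}$. Note also that the ``delicate bookkeeping'' you flag (choosing an innermost repetition so that no intermediate transition updates $x_p$) dissolves once you partition the block at the occurrences of updating transitions instead of selecting a minimal repeating pair.
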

\begin{proof}
Assume that the conclusions of the lemma are not satisfied,
we claim that $l\leq (E+2n)^{3n}$. 

First we prove that the number of transitions of
level $m$ that occur between two occurrences of transitions of
strictly lower level is less than or equal to $(E+2)^3$. Indeed there
can be no more than $E$ occurrences of transitions that update $x_m$.
Then between two such transitions (or before the first or after the
last) there can be no more than $E$ lazy or delayed transitions of
level $m$ that do not update $x_m$.  Finally between any kind of
previous transitions (or before the first or after the last), there
can be no more than $E$ urgent transitions that do not update $x_m$,
since they prevent time from elapsing at level $m$.

Summing up, there can be no more than $E+E(E+1)+E(E(E+1)+1)\leq
(E+1)^3$ transitions of level $m$ that occur between two occurrence of
transitions of strictly lower level.

Now we prove by induction that the number of transitions at level less
than or equal to $m$ is at most $(E+m)^{3m}$. This is true for $m=1$ by
the previous proof. Assume the formula valid for $m$, then grouping
the transitions of level $m+1$ between the occurrences of transition
of lower level (or before the first or after the last), we obtain that
the number of transitions at levels less than or equal to $m+1$
is at most:
\begin{eqnarray*}(E+m)^{3m} + ((E+m)^{3m}+1)(E+1)^3 &\leq& (E+m)^{3m+3}+2(E+m)^{3m} \\&\leq& (E+m+1)^{3(m+1)}\qquad\qed\end{eqnarray*}
\end{proof}

\begin{proposition}
\label{proposition:reachitamoins}
The reachability problem for ITA$_-$ belongs to \emph{NEXPTIME}. More precisely,
reachability can be checked over paths with length less than or equal
to $(E+n)^{3n}$, where $E$ is the number of transitions and $n$
is the number of clocks.
\end{proposition}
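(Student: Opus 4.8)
The plan is to combine the Counting Lemma (Lemma~\ref{lemma:counting}) with a pumping argument: first prove that whenever the target state is reachable it is reachable by a run using at most $(E+n)^{3n}$ discrete transitions, and then observe that such short runs can be guessed nondeterministically and tested for feasibility by linear programming, which yields the NEXPTIME bound.

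First I would take a run $\rho$ reaching the target state that is of minimal length and assume, for contradiction, that $|\rho|>(E+n)^{3n}$. Applying Lemma~\ref{lemma:counting} to its sequence of transitions produces two occurrences $e_i=e_j$ (with $i<j$) of the same transition, with common source level $p$, such that every transition fired between them --- hence every state visited between them --- has level $\ge p$, and one of the three listed situations holds. From ``all intermediate levels $\ge p$'' together with the ITA$_-$ restriction on updates (complex updates occur only on the active clock of the transition level) I would extract the following facts about the segment: the clocks $x_1,\dots,x_{p-1}$ are never modified; the clocks $x_{p+1},\dots,x_n$ are null both in the configuration reached just before firing $e_i$ and in the one reached just before firing $e_j$ (using the invariant that clocks above the current level stay $0$); and $x_p$ can change only through a level-$p$ transition updating its active clock or through genuine time elapse at a level-$p$ state. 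Consequently the two ``pre-$e_i$'' and ``pre-$e_j$'' valuations $v$ and $v'$ agree on every clock but possibly $x_p$, and $v'(x_p)-v(x_p)$ equals the total time elapsed on $x_p$ inside the segment, so it is $\ge 0$.

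I would then build a strictly shorter reaching run in each of the three cases. If $e_i$ updates $x_p$, applying $e_i$'s update to $v$ and to $v'$ yields the \emph{same} configuration (the new value of $x_p$ depends only on the lower, agreeing clocks, and all other clocks are forced), so one splices the prefix of $\rho$ up to the firing of $e_i$ with the suffix of $\rho$ after the firing of $e_j$. If no transition of the segment updates $x_p$ and $e_i$ is lazy or delayed, then $e_i$ leaves $x_p$ unchanged, and the ``pre-$e_j$'' configuration is obtained from the ``pre-$e_i$'' one by letting an extra $\Delta=v'(x_p)-v(x_p)\ge 0$ elapse at the (non-urgent) source state; absorbing $\Delta$ into the time step preceding $e_i$ and splicing again gives a valid run, the timing-policy side conditions holding precisely because the source of $e_i$ is not urgent and because $e_i$ was already delayed-feasible in $\rho$ (so some time had elapsed before it if the policy required it). If no time elapses on $x_p$ along the segment and $x_p$ is not updated, then $v=v'$ and the plain splice works. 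In every case $\rho$ loses $j-i>0$ transitions, contradicting minimality, so the minimal reaching run has length $\le (E+n)^{3n}$. For the complexity claim, a nondeterministic procedure guesses a syntactic path of length at most $(E+n)^{3n}$ from $q_0$ to the target --- an object of exponential size, since $(E+n)^{3n}=2^{O(n\log(E+n))}$ is exponential in the size of the input --- and then checks feasibility: along a fixed path the valuation at each step is an affine function of the delay variables, so feasibility amounts to satisfiability of a system of weak and strict linear inequalities (the guards, plus $d_k=0$ for urgent sources, $d_k>0$ for delayed sources and $d_k\ge 0$ otherwise), decidable in time polynomial in the path length, hence in exponential time overall. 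This gives the announced NEXPTIME upper bound.

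The delicate point is the case analysis showing that the segment between the two occurrences can be excised while preserving a valid, target-reaching run: one must use the ITA$_-$ update restriction, the ``null above the current level'' invariant and the timing policies --- in particular the bookkeeping of the $\beta$-flag and the urgent/delayed side conditions --- carefully enough to be sure that each re-timed spliced run is genuinely executable.
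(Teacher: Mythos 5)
Your proof is correct and follows essentially the same route as the paper's: assume a minimal reaching run longer than $(E+n)^{3n}$, invoke the Counting Lemma, shorten the run in each of its three cases to contradict minimality, and then decide reachability by nondeterministically guessing a path within the bound and checking feasibility via a linear program over the delay variables. Your treatment of the second case (re-timing by exactly $\Delta=v'(x_p)-v(x_p)$ and checking the urgent/delayed and $\beta$-flag side conditions) is in fact slightly more explicit than the paper's, but the argument is the same.
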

\begin{proof}
  Let $\A=(\Sigma, Q, q_0,F,pol, X, \lambda, \Delta)$ be an ITA$_-$
  with $n$ clocks.  Let $E=|\Delta|$ be the number of transitions of
  $\A$.  Assume that there is a run of minimal length $\rho$ from
  $(q_0,v_0)$ to some configuration $(q_f,v_f)$.  Suppose now that
  $|\rho| > B = (E+n)^{3n}$.  We will build a run $\rho'$ from
  $(q_0,v_0)$ to $(q_f,v_f)$ that is strictly smaller, hence
  contradicting the minimality hypothesis.

  Since $|\rho|>B$, then one of the three cases of
  Lemma~\ref{lemma:counting} applies.  Therefore there is a transition
  $e$ at level $k$ repeated twice, from positions
  $\pi$ and $\pi'$ and separated by a subrun $\sigma$ containing only
  transitions of level higher than or equal to $k$.  Moreover:
\begin{itemize}
\item Either $e$ updates $x_k$.  In this case, all clocks have the
  same value after the first and the second occurrence of $e$.  Hence
  removing $e \sigma = \rho_{[\pi,\pi'[}$ from $\rho$ yields a valid
  run $\rho'$ of $\A$ reaching $(q_f,v_f)$.  Run $\rho'$ is strictly
  smaller than $\rho$, since $e\sigma$ which is of length at least $1$
  was removed.
\item Either no update occurred for $x_k$ and $e$ is delayed or lazy.
  In this case, upon reaching $\pi'$, the clocks of level $i<k$ have
  retained the same value, while $x_k$ has increased by
  $Dur\left(\rho_{[\pi,\pi']}\right)$.  Hence when replacing $e \sigma
  = \rho_{[\pi,\pi'[}$ by a time step of duration
  $Dur\left(\rho_{[\pi,\pi']}\right)$, the configuration in $\pi'$ is
  unchanged.  In addition, since $e$ was delayed or lazy, this time
  step is allowed in $\A$, and this yields a shorter run of $\A$.
\item Or no update occurred and $\pi$ and $\pi'$ are at the same
  instant (separated by instantaneous actions).  In this case, all
  clocks of level smaller than or equal to $k$ again have the same value after the first and the second
  occurrence of $e$.  Again removing $\rho_{[\pi,\pi'[}$ yields a
  smaller run.
\end{itemize}

The decision procedure is as follows. It non deterministically guesses
a path in the ITA$_-$ whose length is less than or equal to the
bound. In order to check that this path yields a run, it builds a
linear program whose variables are $\left\{x_i^j\right\}$, where $x_i^j$ is the
value of clock $x_i$ after the $j$th step, and $\{d_j\}$ where $d_j$
is the amount of time elapsed during the $j$th step, when $j$
corresponds to a time step.  The equations and inequations are deduced
from the guards and updates of discrete transitions in the path and
the delay of the time steps. The size of this linear program is
exponential w.r.t. the size of the ITA$_-$. As a linear program can
be solved in polynomial time~\cite{RoTeVi97}, we obtain a procedure in
NEXPTIME.
\qed
\end{proof}

One could wonder whether the class graph construction would lead to a better
complexity when applied on ITA$_-$. Unfortunately, the number
of expressions occurring in the class graph 
while being smaller than for ITA is still doubly exponential
w.r.t. the size of the model.

\section{Timed model-checking}\label{sec:modelchecking}

First observe that model-checking \ctlstar formulas on ITA can be done
with classical procedures on the class graph previously built. We now
consider verification of real time formulas.

\bigskip In the case of linear time, the logic \ltl has been extended
into the Metric Temporal Logic (\mtl)~\cite{koymans90}, by adding time
intervals as constraints to the $\until$ modality. However, \mtl
suffers from undecidability of the model-checking problem on TA.
Hence decidable fragments have been proposed, such as Metric Interval
Temporal Logic (\mitl)~\cite{alur96}, which prohibits the use of point
intervals (of the form $[a,a]$). Later, \mitl was restricted into
State Clock Logic (\scl)~\cite{raskin97}, in order to obtain more
efficient verification procedures.  Model-checking \mitl (thus \scl)
on TA is decidable.  Unfortunately, we show here that model-checking
\scl (thus \mitl) on ITA is undecidable. For this, we reduce the
halting problem on a two counter machine into model-checking an \scl
formula on an ITA.

\bigskip Concerning branching time logics, at least two different
timed extensions of \ctl have been proposed. The first
one~\cite{alur93} also adds time intervals to the $\until$ modality
while the (more expressive) second one considers formula
clocks~\cite{HNSY94}.  Model-checking timed automata was proved
decidable in both cases and compared expressiveness was revisited
later on~\cite{bouyer05}.

We conjecture that model-checking of \tctl is undecidable when using
two (or more) formula clocks.  Indeed, as shown in
Section~\ref{subsec:mctctlundec}, the reachability problem in a
product of an ITA and a TA with two clocks is undecidable, thus
prohibiting model-checking techniques through automaton product and
reachability testing as in~\cite{aceto98}. However, contrary to what
is claimed in~\cite{berard10}, this is not enough to yield an
undecidability proof.

Two fragments for which model-checking is decidable on ITA have
nonetheless been identified.  The first one, \tctlcint, accepts only
internal clocks (from the automaton on which the formulas will be
evaluated) as formula clocks.  The second one, \tctlp, restricts the
nesting of $\until$ modalities. We provide verification procedures in
both cases.

\subsection{Undecidability of State Clock Logic}

We first consider the timed extension of linear temporal logic,
and more particularly the \scl fragment~\cite{raskin97}.

\begin{definition}
Formulas of the timed logic \scl are defined by the following grammar:
\[\psi = p \mid \psi \wedge \psi \mid \neg \psi \mid \psi \until \psi \mid \psi \since \psi \mid \nextoc{\bowtie a} \psi \mid \lastoc{\bowtie a} \psi\]
where $p \in AP$ is an atomic proposition, $\rel \,\in \{>,\geq,=,\leq,<\}$, and $a$ is a rational number.
\end{definition}
We use the usual shorthands $\mathbf{t}$ for $\neg(p \wedge \neg p)$,
$\eventually \psi$ for $\mathbf{t} \until \psi$, $\globally \psi$ for
$\neg (\eventually \neg\psi)$ and $\varphi \Rightarrow \psi$ for
$\neg(\varphi \wedge \neg \psi)$.

The semantics are defined in the usual manner for boolean operators
and $\until$.  The $\since$ modality is the past version of $\until$.
Modality $\nextoc{\bowtie a} \psi$ is true if the \emph{next} time
$\psi$ is true will occur in a delay that respects the condition
$\bowtie a$.  Similarly, $\lastoc{\bowtie a} \psi$ is true if the
\emph{last} time $\psi$ was true occurred in a (past) delay that
respects the condition $\bowtie a$.
More formally, for an execution $\rho$, we inductively define $(\rho,\pi) \models \varphi$ by:
\[\begin{array}{lcl}
(\rho,\pi) \models p &\quad\textrm{iff}\ & p \in lab(s_\pi) \\
(\rho,\pi) \models \varphi \wedge \psi &\quad\textrm{iff}\ & (\rho,\pi) \models \varphi \ \textrm{and}\  (\rho,\pi) \models \psi\\
(\rho,\pi) \models \neg \varphi &\quad\textrm{iff}\ & (\rho,\pi) \not\models \varphi\\

(\rho,\pi) \models \varphi \until \psi  &\quad\textrm{iff}\ &\textrm{there is a position } \pi' \geq_\rho \pi \textrm{ such that } (\rho,\pi') \models \psi\\
& &\textrm{and forall } \pi'' \textrm{ s.t. } \pi \leq_\rho \pi'' <_\rho \pi', \ (\rho,\pi'') \models \varphi \vee \psi\\

(\rho,\pi) \models \varphi \since \psi  &\quad\textrm{iff}\ &\textrm{there is a position } \pi' \leq_\rho \pi \textrm{ such that } (\rho,\pi') \models \psi \\
& &\textrm{and forall } \pi'' \textrm{ s.t. } \pi \geq_\rho \pi'' >_\rho \pi', \ (\rho,\pi'') \models \varphi \vee \psi\\

(\rho,\pi) \models \nextoc{\bowtie a} \varphi &\quad\textrm{iff}\ & \textrm{either } (\rho,\pi) \models \varphi \textrm{ and } 0 \bowtie a\\
& &\textrm{or, there is a position } \pi' >_\rho \pi \textrm{ such that } (\rho,\pi') \models \varphi,\\
& & \ Dur\!\left(\rho_{[\pi,\pi']}\right) \bowtie a \textrm{ and forall } \pi''\textrm{ s.t. } \pi \leq_\rho \pi'' <_\rho \pi', (\rho,\pi'') \not\models \varphi\\

(\rho,\pi) \models \lastoc{\bowtie a} \varphi &\quad\textrm{iff}\ & \textrm{either } (\rho,\pi) \models \varphi \textrm{ and } 0 \bowtie a\\
& &\textrm{or, there is a position } \pi' <_\rho \pi \textrm{ such that } (\rho,\pi') \models \varphi,\\
& & \ Dur\!\left(\rho_{[\pi',\pi]}\right) \bowtie a \textrm{ and forall } \pi'' \textrm{ s.t. } \pi \geq_\rho \pi'' >_\rho \pi', (\rho,\pi'') \not\models \varphi
\end{array}\]
Given an ITA $\A$ and an \scl formula $\varphi$, $\A \models \varphi$ if 
for all executions $\rho$ of $\A$, $(\rho,\pi_0) \models \varphi$, where 
$\pi_0=0$ is the initial position of $\rho$.

\begin{theorem}\label{thm:mcsclundec}
  Model checking \scl over ITA is undecidable.  Specifically, there
  exists a fixed formula using only modalities $\until$ and $\lastoc{=
    a}$ such that checking its truth over ITA with $3$ levels is
  undecidable.
\end{theorem}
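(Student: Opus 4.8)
The plan is to reduce the halting problem of a deterministic two-counter (Minsky) machine $M$ to the question "$\A \models \varphi$" for a suitable ITA $\A$ with three interrupt levels and a single fixed SCL formula $\varphi$ built only from $\until$ and $\lastoc{=a}$. First I would fix, once and for all, how a finite run of $M$ is represented by a timed word: a run is a sequence of configuration blocks $(\ell_0,c_1^0,c_2^0)(\ell_1,c_1^1,c_2^1)\cdots$, and each block is encoded by one "block‑start" event whose state label carries (as atomic propositions) the current instruction $\ell_i$ of $M$, followed by counter‑marker events whose timestamps \emph{relative to the block start} encode $c_1^i$ and $c_2^i$. Consecutive block starts are placed exactly $a$ time units apart (say $a=1$), so that $\lastoc{=1}$ can be used to "look back one block".

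Then I would design $\A$ so that it nondeterministically produces exactly such candidate encodings: $\A$ walks through a guessed sequence of instructions of $M$ and, at each block, uses its three levels — informally, one level per counter plus a control/scratch level — together with the linear guards and linear updates available in ITA to lay down the markers of the next block as the linear image (a shift by a fixed rational, or a copy) of the markers of the current block dictated by the instruction fired. The key point is that $\A$ can realise the arithmetic of a \emph{single} step (increment, decrement, copy) as a linear relation between timestamps, exploiting the fact — already visible on $\A_1$ in \figurename~\ref{fig:exita1} — that an ITA can generate timed words whose timestamps are affinely related, something no TA can do; this is precisely why SCL becomes undecidable here although it is decidable over TA. Crucially, $\A$ does \emph{not} by itself enforce consistency from one block to the next (it cannot, since reachability for ITA is decidable by Theorem~\ref{prop:reach}): that is delegated to the formula.

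The fixed formula $\varphi$ is then the negation of an implication $\Phi_{\mathrm{wf}}\wedge\Phi_{\mathrm{ctrl}}\wedge\Phi_{\mathrm{count}}\Rightarrow\Phi_{\neg\mathrm{halt}}$, where: $\Phi_{\mathrm{wf}}$ asserts that the timed word has the block shape above and that consecutive block starts are $1$ apart (an untimed $\until$‑formula for the shape, plus $\lastoc{=1}$ to pin the spacing); $\Phi_{\mathrm{ctrl}}$ asserts that the labels follow the transition table of $M$ (pure LTL over $\until$, since the table is baked into $\A$'s labels and only a local next‑step check is needed, which is what makes $\varphi$ independent of $M$); $\Phi_{\mathrm{count}}$ asserts that counter markers are transported correctly from each block to the next — here $\lastoc{=1}$ says that the marker currently read has a companion marker exactly $1$ t.u.\ earlier in the previous block, with the single exceptional extra or missing marker dictated by the instruction; and $\Phi_{\neg\mathrm{halt}}$ says that no block is labelled by the halting instruction. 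Under $\Phi_{\mathrm{wf}}\wedge\Phi_{\mathrm{ctrl}}\wedge\Phi_{\mathrm{count}}$ a run of $\A$ genuinely encodes the unique run of $M$, so $\A\not\models\varphi$ exactly when $M$ halts; conversely, a halting computation of $M$ is easily turned into a run of $\A$ violating $\varphi$. This yields the reduction and the three‑level bound.

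The main obstacle — where most of the care goes — is twofold. The delicate part of the encoding is the faithful treatment of the \emph{zero‑test}: as usual one must make "$c_i=0$" locally checkable, e.g.\ by forcing the corresponding marker to coincide with the block start (or to land on a distinguished position), and letting $\A$'s guards, or a dedicated proposition, witness it, so that the decrement branch is taken only when $c_i>0$; this is the classic sticking point. The second difficulty is specific to the restricted logic: SCL's $\lastoc{}$ speaks only of the \emph{last} occurrence of a subformula, not of "some occurrence exactly $a$ ago". Since a block may contain many markers, one must allocate enough distinct atomic propositions for the different marker roles and shape $\A$'s output so that, for each role, the last marker of that role is indeed the intended one exactly $a$ t.u.\ back — and all of this must fit the ITA discipline of one active clock per level with guards and updates referring only to strictly lower levels, which is the reason three levels are needed rather than two.
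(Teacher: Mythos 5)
Your overall reduction scheme (fixed formula, automaton producing candidate computation encodings, formula filtering out the inconsistent ones, $\A\not\models\varphi$ iff $\M$ halts) matches the paper in spirit, but the encoding you choose is genuinely different, and its central mechanism has a gap. You encode a run of $\M$ as a sequence of unit-length configuration blocks and want $\lastoc{=1}$ to transport counter information from one block to the next. If a counter value $n$ is represented by $n$ marker events in a block (your ``single exceptional extra or missing marker'' phrasing indicates this), then the consistency condition you need is ``\emph{every} marker has a companion exactly $1$ t.u.\ earlier'' --- that is an \mtl-style constraint, not an \scl one: $\lastoc{=1}\psi$ only constrains the \emph{last} $\psi$-position, and your proposed fix of finitely many ``marker roles'' cannot work because a single block contains unboundedly many markers that must then share a role, so all but one of them escape the constraint. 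If instead the value is carried by the timestamp of a \emph{single} marker per counter, the two natural codings both fail: an arithmetic offset $n\cdot\delta$ overflows the fixed block length $1$ for unbounded $n$, while an offset $1/2^{n}$ requires halving/doubling across block boundaries, which $\lastoc{=1}$ (equality of dates only) cannot express, and which the automaton cannot enforce on its own either --- as you yourself note, if the ITA could correlate the new block's marker with the old one's, reachability would already be undecidable. So the piece that actually carries the undecidability --- how the weak ``last-occurrence'' punctual modality, together with the ITA level discipline, enforces the inductive step of the simulation --- is exactly the part left unresolved.

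For contrast, the paper avoids this problem by not encoding configurations in the timed word at all: counter values $n,p$ are stored as clock values $x_1=1/2^{n}$, $x_2=1/2^{p}$ at levels $1$ and $2$, level $3$ is scratch space, and the ITA modules themselves perform the halving/doubling and swapping via linear guards and updates. The only thing the ITA cannot do is transmit a value computed at a high level back down to a lower-level clock; the formula compensates by imposing, through $Span_1=\propq'\Rightarrow\lastoc{=1}\propq$ and $Span_2=\propp'\Rightarrow\lastoc{=2}\propp$, that certain module phases last exactly $1$ or $2$ t.u. Because the propositions $\propp,\propp',\propq,\propq'$ hold at a \emph{unique} position per module traversal, the last-occurrence semantics is harmless, which is precisely how the paper reconciles \scl's restricted punctuality with unbounded counters --- the issue your construction runs into. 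To repair your approach you would need either a per-marker copying ability (unavailable in \scl) or to fall back on clock-value storage of the counters, at which point you must still design the duration constraints that replace downward value transmission, i.e.\ essentially reconstruct the paper's modules.
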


\begin{proof}
  We build an ITA and an \scl formula that together simulate a
  deterministic two counter machine.  More specifically, we define a
  formula $\varphi_{2cm}$ such that given a two counter machine $\M$,
  we can build an ITA $\A_\M$ with three clocks such that $\A_\M
  \models \varphi_{2cm}$ if and only if $\M$ does not halt.
  
  Recall that such a machine $\cal M$ consists of a finite sequence of
  labeled instructions, which handle two counters $c$ and $d$, and
  ends at a special instruction with label \emph{Halt}.  The other
  instructions have one of the two forms below, where \(e \in
  \{c,d\}\) represents one of the two counters:
\begin{itemize}
  \item \(e := e + 1\); goto $\ell'$
  \item if \(e > 0\) then ($e := e -1$; goto $\ell'$) else goto $\ell''$
\end{itemize}
Without loss of generality, we may assume that the counters have
initial value zero.  The behavior of the machine is described by a
(possibly infinite) sequence of configurations:
\(\langle\ell_0,0,0\rangle \langle\ell_1,n_1,p_1\rangle\dots
\langle\ell_i,n_i,p_i\rangle\dots\), where $n_i$ and $p_i$ are the
respective counter values and $\ell_i$ is the label, after the
$i^{th}$ instruction.  The problem of termination for such a machine
(``is the \emph{Halt} label reached?'') is known to be
undecidable~\cite{minsky67}.

The idea of the encoding is that, provided the execution satisfies the
formula, clocks of level $1$ and $2$ keep the values of $c$ and $d$
indifferently, by $x_i = \frac1{2^n}$ if $n$ is the value of a counter
$e$.  Level $3$ will be used as the working level.  Transmitting the
value of clocks to lower levels, prohibited in the ITA model, will be
enforced by \scl formulas.  In the sequel, we will define:
\begin{itemize}
\item a module $\A_{\leftrightarrow}$ and a formula
  $\varphi_{\leftrightarrow}$ such that the values contained in clocks
  $x_1$ and $x_2$ at the beginning of an execution $\rho$ are swapped
  if and only if $(\rho,0) \models \varphi_{\leftrightarrow}$,
\item a module $\A_{+}$ and a formula $\varphi_{+}$ such that if the
  value of $x_2$ is $\frac1{2^n}$ at the beginning of an execution
  $\rho$, then $x_2$ has value $\frac1{2^{n+1}}$ if and only if
  $(\rho,0) \models \varphi_{+}$,
\item a module $\A_{-}$ and a formula $\varphi_{-}$ such that if the
  value of $x_2$ is $\frac1{2^n}$ with $n>0$ at the beginning of an
  execution $\rho$, then $x_2$ has value $\frac1{2^{n-1}}$ if and only
  if $(\rho,0) \models \varphi_{-}$.
\end{itemize}
Joining these modules according to $\M$ yields an ITA.  Combining the
formulas (independently of $\M$), we obtain an \scl formula that is
satisfied if some execution, while complying to the formulas of the
modules, reaches the final state. Both constructions are explained in
details after the definitions below.

\bigskip Let us define formulas $Span_1 = \propq' \Rightarrow
\lastoc{=1} \propq$ and $Span_2 = \propp' \Rightarrow \lastoc{=2}
\propp$ where $\propp$, $\propp'$, $\propq$, $\propq'$ are
propositional variables.  Let $x_1^0$ and $x_2^0$ denote the
respective values of $x_1$ and $x_2$ upon entering a given module.
\begin{description}[font=\bfseries]
\item[Swapping module.] The module $\A_{\leftrightarrow}$ that swaps
  the values of $x_1$ and $x_2$ is depicted in
  \figurename~\ref{fig:modswap}.  Note that this module does not
  actually swap the values of $x_1$ and $x_2$ for every execution.
  However, by imposing that state $q_{end}$ is reached exactly $2$
  time units after $q_{0}$ (or $q_0'$) was left, and that $q_4$
  (resp. $q_4'$) is reached exactly $1$ t.u. after $q_1$
  (resp. $q_1'$) was left, the values of $x_1$ and $x_2$ will be
  swapped.  This requirement can be expressed in \scl by
  $\varphi_{\leftrightarrow} = \globally\, \left(Span_1 \wedge
    Span_2\right)$.  Let $w_i$ be the time elapsed in state $q_i$, for
  an execution $\rho$ of $\A_{\leftrightarrow}$ that satisfies
  $\varphi_{\leftrightarrow}$.  Note that $q_{start}$ and $q_{end}^{\neq}$
  are all urgent, hence no time can elapse in these states.  We shall
  therefore consider only what happens in the swapping submodules.  We
   detail only the case when $x_2>x_1$, the case when $x_2<x_1$ is
  analogous.  The ITA constraints provide:
\[\begin{array}{rclcl}
w_0 &=& 0 &\qquad& (q_0 \textrm{ is urgent})
\\
w_1 &=& x_2^0 - x_1^0 && (\textrm{update } x_3:=x_1 \textrm{ and guard } x_3=x_2)
\\
w_2 &=& 1 - x_2^0 && (\textrm{guard } x_3=1)
\\
w_4 &=& 0 && (q_4 \textrm{ is urgent})
\end{array}\]
\MS{Reviewer 1, Rq 17. Done.}
The time spent between the last instant $\propq$ was satisfied (upon leaving $q_1$) and the only instant when $\propq'$ is true (upon entering $q_4$) is exactly the time spent in states $q_2$ and $q_3$.
Similarly, the time between the last instant $\propp$ was satisfied (leaving $q_0$) and the instant $\propp'$ is true (when reaching $q_{end}^{\neq}$) is the total amount of time spent in $q_1$, $q_2$, $q_3$, $q_4$, and $q_5$.
Hence, if $\varphi_{\leftrightarrow}$ is satisfied then:
\[\begin{array}{rclcl}
w_2 + w_3 &=& 1 &\qquad& \left(\propq' \Rightarrow \lastoc{=1} \propq\right)
\\
w_1 + w_2 + w_3 + w_4 + w_5 &=& 2 && \left(\propp' \Rightarrow \lastoc{=2} \propp\right)
\end{array}\]
Hence $w_3 = x_2^0$ and $w_5=1-w_1=x_1^0-\left(x_2^0-1\right)$.
Since upon entering $q_3$, clock $x_1$ has value $0$, when leaving, $x_1$ has value $x_2^0$.
Similarly, when entering $q_5$, $x_2$ has value $x_1 - 1 = x_2^0 -1$, therefore $x_2$ has value $x_1^0$ when reaching $q_{end}^{\neq}$.
Note that this module swaps $x_1$ and $x_2$ regardless of their coding a counter value.
\smallskip
\item[Incrementation module.] The same idea applies for the
  incrementation module $\A_{+}$ of \figurename~\ref{fig:modincrem}.
  We force the time spent in total in $r_1$ and $r_2$ is one,
  expressed in \scl by $\varphi_{+}=\globally\, Span_1$.  The guards
  and updates in $\A_{+}$ ensure that, with the same notation as
  above, time spent in $r_1$ will be $1-\frac12 x_2^0$.  Hence, when
  reaching $r_3$, clock $x_2$ will have value $\frac12 x_2^0$.
  Therefore, if $x_2^0 = \frac1{2^n}$, coding a counter of value $n$,
  at the end of $\A_{+}$, $x_2$ has value $\frac1{2^{n+1}}$, thus
  coding a value $n+1$ for the same counter.
\item[Decrementation module.] Decrementation, for which the
  corresponding module is depicted on \figurename~\ref{fig:moddecrem},
  is handled in a similar manner (with
  $\varphi_{-}=\varphi_{+}=\globally\, Span_1$).  The only
  difference is that $x_2$ has to be compared to $1$ in order to test
  if the value of the counter encoded by $x_2$ is $0$.
\end{description}
Since the constraints in $Span_1$ (and $Span_2$) are equalities, they can be satisfied only if $\propq'$ (and $\propp'$) are true at a single point in time.

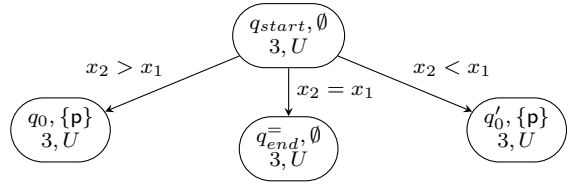
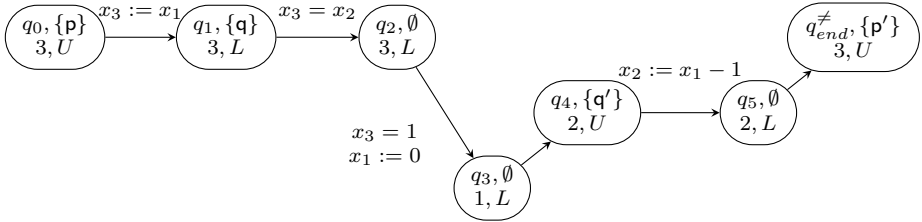
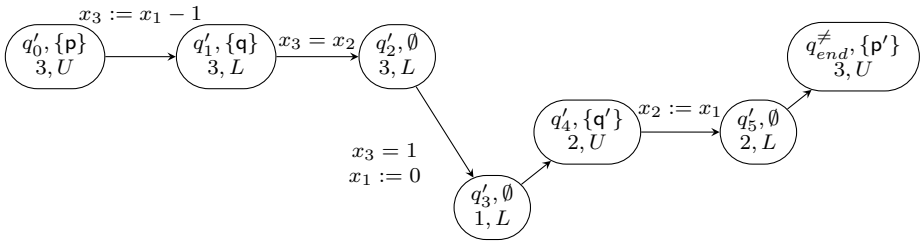
\begin{figure}
\centering
\subfigure[Choice submodule.]{\label{fig:modswapchoice}
\begin{tikzpicture}[node distance=2.375cm,initial text=,auto,scale=0.5]
\tikzstyle{every state}=[draw=black,fill=white,inner xsep=-4pt,font=\small,shape=rounded rectangle]
\node[state] (q0) at (0,0) {\etat{q_{start},\emptyset}{3,U}};
\node[state] (qgeq) at (-6,-2.5) {\etat{q_0,\{\propp\}}{3,U}};
\node[state] (qleq) at (6,-2.5) {\etat{q_0',\{\propp\}}{3,U}};
\node[state] (qeq) at (0,-3) {\etat{q_{end}^=,\emptyset}{3,U}};

\path[->] (q0) edge node[swap] {$x_2 > x_1$} (qgeq);
\path[->] (q0) edge node {$x_2 < x_1$} (qleq);
\path[->] (q0) edge node {$x_2 = x_1$} (qeq);
\end{tikzpicture}
}

\subfigure[Swapping submodule ($x_2 > x_1$).]{\label{fig:modswapgeq}
\begin{tikzpicture}[node distance=2.25cm,initial text=,auto,scale=0.5]
\tikzstyle{every state}=[draw=black,fill=white,inner xsep=-4pt,font=\small,shape=rounded rectangle]

\node (lev3) at (0,0) {};
\node (lev2) at (0,-2) {};
\node (lev1) at (0,-4) {};

\node[state] (q0) at (lev3) {\etat{q_0,\{\propp\}}{3,U}};
\node[state] (q1) [right of=q0] {\etat{q_1,\{\propq\}}{3,L}};
\node[state] (q2) [right of=q1] {\etat{q_2,\emptyset}{3,L}};
\node[state] (q3) at ($(lev1 -| q2) +(2.5,0)$) {\etat{q_3,\emptyset}{1,L}};
\node[state] (q4) at ($(lev2 -| q3) +(2.5,0)$) {\etat{q_4,\{\propq'\}}{2,U}};
\node[state] (q5) [right of=q4] {\etat{q_5,\emptyset}{2,L}};
\node[state] (q6) at ($(lev3 -| q5) +(2.5,0)$) {\etat{q_{end}^{\neq},\{\propp'\}}{3,U}};

\path[->] (q0) edge node [outer sep=4pt] {$x_3:=x_1$} (q1);
\path[->] (q1) edge node [outer sep=4pt] {$x_3=x_2$} (q2);
\path[->] (q2) edge node [swap] {\timedtransnoreset{x_3=1}{x_1:=0}} (q3);
\path[->] (q3) edge (q4);
\path[->] (q4) edge node [outer sep=10pt] {$x_2:=x_1-1$} (q5);
\path[->] (q5) edge (q6);
\end{tikzpicture}
}

\subfigure[Swapping submodule ($x_2 < x_1$).]{\label{fig:modswapleq}
\begin{tikzpicture}[node distance=2.25cm,initial text=,auto,scale=0.5]
\tikzstyle{every state}=[draw=black,fill=white,inner xsep=-4pt,font=\small,shape=rounded rectangle]

\node (lev3) at (0,0) {};
\node (lev2) at (0,-2) {};
\node (lev1) at (0,-4) {};

\node[state] (q0) at (lev3) {\etat{q_0',\{\propp\}}{3,U}};
\node[state] (q1) [right of=q0] {\etat{q_1',\{\propq\}}{3,L}};
\node[state] (q2) [right of=q1] {\etat{q_2',\emptyset}{3,L}};
\node[state] (q3) at ($(lev1 -| q2) +(2.5,0)$) {\etat{q_3',\emptyset}{1,L}};
\node[state] (q4) at ($(lev2 -| q3) +(2.5,0)$) {\etat{q_4',\{\propq'\}}{2,U}};
\node[state] (q5) [right of=q4] {\etat{q_5',\emptyset}{2,L}};
\node[state] (q6) at ($(lev3 -| q5) +(2.5,0)$) {\etat{q_{end}^{\neq},\{\propp'\}}{3,U}};

\path[->] (q0) edge node [outer sep=10pt] {$x_3:=x_1-1$} (q1);
\path[->] (q1) edge node {$x_3=x_2$} (q2);
\path[->] (q2) edge node [swap] {\timedtransnoreset{x_3=1}{x_1:=0}} (q3);
\path[->] (q3) edge (q4);
\path[->] (q4) edge node [outer sep=3pt] {$x_2:=x_1$} (q5);
\path[->] (q5) edge (q6);
\end{tikzpicture}
}
\caption[Swapping module $\A_{\leftrightarrow}$.]{Swapping module $\A_{\leftrightarrow}$. Submodules are connected through identical states ($q_0$, $q_0'$, $q_{end}^{\neq}$).}
\label{fig:modswap}
\end{figure}

\begin{figure}
\centering
\begin{tikzpicture}[node distance=5cm,initial text=,auto,scale=0.5]
\tikzstyle{every state}=[draw=black,fill=white,inner xsep=-4pt,font=\small,shape=rounded rectangle]

\node (lev3) at (0,0) {};
\node (lev2) at (0,-2) {};

\node[state] (q0) at (lev3) {\etat{r_0,\{\propq\}}{3,U}};
\node[state] (q1) [right of=q0] {\etat{r_1,\emptyset}{3,L}};
\node[state] (q2) at ($(lev2 -| q1) +(5,0)$) {\etat{r_2,\emptyset}{2,L}};
\node[state] (q3) at ($(lev3 -| q2) +(5,0)$) {\etat{r_3,\{\propq'\}}{3,U}};

\path[->] (q0) edge node {$x_3:=\frac12 x_2$} (q1);
\path[->] (q1) edge node[swap] {\timedtransnoreset{x_3=1}{x_2:=0}} (q2);
\path[->] (q2) edge (q3);
\end{tikzpicture}
\caption{Incrementation module.}
\label{fig:modincrem}
\end{figure}
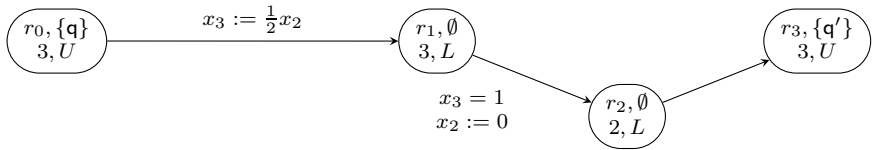

\begin{figure}
\centering
\begin{tikzpicture}[node distance=3.5cm,initial text=,auto,scale=0.5]
\tikzstyle{every state}=[draw=black,fill=white,inner xsep=-4pt,font=\small,shape=rounded rectangle]

\node (lev3) at (0,0) {};
\node (lev2) at (0,-2) {};

\node[state] (q0) at (lev3) {\etat{s_0,\{\propq\}}{3,U}};
\node[state] (q1) [right of=q0] {\etat{s_1,\emptyset}{3,L}};
\node[state] (q2) at ($(lev2 -| q1) +(4,0)$) {\etat{s_2,\emptyset}{2,L}};
\node[state] (q3) at ($(lev3 -| q2) +(4,0)$) {\etat{s_3,\{\propq'\}}{3,U}};
\node[state] (q4) [left of=q0] {\etat{s_4,\emptyset}{3,U}};

\path[->] (q0) edge node {\timedtransnoreset{x_2<1}{x_3:=2\, x_2}} (q1);
\path[->] (q1) edge node[swap] {\timedtransnoreset{x_3=1}{x_2:=0}} (q2);
\path[->] (q2) edge (q3);
\path[->] (q0) edge node[swap] {$x_2=1$} (q4);
\end{tikzpicture}
\caption{Decrementation module.}
\label{fig:moddecrem}
\end{figure}
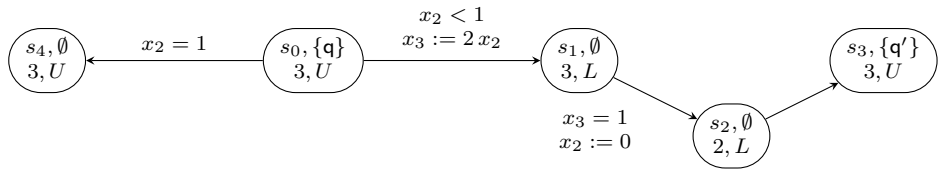

Automaton $\A_\M$ is then defined as the concatenation of modules according to $\M$.
For clarity, a state $(q,\ell)$ denotes state $q$ in a module corresponding to instruction $\ell$.

Namely, an instruction $\ell$ incrementing $c$ and going to $\ell'$ is an incrementation module with a transition from $(r_3,\ell)$ to the first state of the module corresponding to $\ell'$ (either $(q_{start},\ell')$, $(r_0,\ell')$ or $(s_0,\ell')$).
In the case of an incrementation of $d$, the corresponding module will be the concatenation of $\A_{\leftrightarrow}^{in}$, $\A_{+}$, and $\A_{\leftrightarrow}^{out}$.
Modules $\A_{\leftrightarrow}^{in}$ and $\A_{\leftrightarrow}^{out}$ are two copies of a swapping module $\A_{\leftrightarrow}$.
The states of $\A_{\leftrightarrow}^{in}$ and  $\A_{\leftrightarrow}^{out}$ will be respectively denoted $(q,\ell,in)$ and $(q,\ell,out)$) to avoid confusion.
The last swap is performed in order to restore that $x_2$ contains the value of $c$ and $x_1$ the value of $d$.
The concatenation is done by transitions from $(q_{end}^{\neq},\ell,in)$ and $(q_{end}^=,\ell,in)$ to $(r_0,\ell)$, from $(r_3,\ell)$ to $(q_{start},\ell,out)$.
States $(q_{end}^{\neq},\ell,out)$ and $(q_{end}^=,\ell,out)$ are then linked to the first state of the module for $\ell'$.

Decrementation is handled in a similar way.
The main difference resides in the fact that $(s_4,\ell)$ is linked to the first state of $\ell''$.
In the decrementation of $d$, $(s_4,\ell)$ is linked to a swapping module $\A_{\leftrightarrow}^{out'}$ (disjoint from $\A_{\leftrightarrow}^{in}$ and $\A_{\leftrightarrow}^{out}$), in turn linked to the first state of $\ell''$.

The \emph{Halt} instruction is encoded in a single state $h$ labeled with $\{\proph\}$.
The initial state of the automaton is a new state \textit{Init} of level $3$.
It has \emph{urgent} policy and satisfies no atomic proposition.
State \textit{Init} is linked to the first state of the module corresponding to $\ell_0$, the initial instruction of $\M$, by a transition that updates both $x_1$ and $x_2$ to $1$, simulating the initialization of both counters to $0$.
\bigskip

Let us define formula $\varphi_{2cm} = \eventually (\neg Span_1 \vee
\neg Span_2) \vee \globally \neg \proph$.  An execution $\rho$ of
$\A_\M$ satisfies $\varphi_{2cm}$ if either it violates at some point
a constraint $Span_i$, which means $\rho$ does not correspond to an
execution of $\M$, or $\rho$ never reaches state $h$, which means the
execution of $\M$ is not halting.

If $\M$ has a halting execution, then it can be converted into an
execution $\rho$ that complies to the $Span_i$ constraints and reaches
the final state $h$.  Hence $\rho \not\models \varphi_{2cm}$ and
$\A_\M \not\models \varphi_{2cm}$.

Conversely, if $\A_\M \not\models \varphi_{2cm}$, then consider an
execution $\rho$ that does not verify $\varphi_{2cm}$.  Execution
$\rho$ both reaches $h$ and complies to the $Span_i$ constraints,
hence encodes a halting execution of $\M$.

As a result, $\M$ has no halting execution if and only if \[\A_\M
\models \eventually \left(\left(\neg \propq' \wedge \neg
    \lastoc{=1}\propq\right) \vee \left(\neg \propp' \wedge \neg
    \lastoc{=2}\propp\right)\right) \vee \globally \neg\proph.\]
Remark that this formula does not have nested history or prediction
modalities ($\lastoc{\bowtie a}$ and $\nextoc{\bowtie a}$).  Hence
$\scl$ with a discrete semantics (evaluating the subformulas only upon
entering a state) would also be undecidable. \qed
\end{proof}

\subsection{Model-checking branching time properties with internal clocks}

In this section we consider the extension of \ctl with model clocks,
the corresponding fragment being denoted by \tctlcint.  Such a logic
allows to reason about the sojourn times in different levels which is
quite useful when designing real-time operating systems.
For example, formula $\always (x_2 \leq 3) \until \textit{safe}$ expresses that all executions reach a safe state while spending less than 3 time units in level 2 (assuming $x_2$ is not updated during the execution).
\MS{Reviewer 2, Rq 5. Done.}
\MS{Reviewer 1, Rq 19. Correction comment\'ee car je pense qu'elle fait doublon avec le texte pr\'eexistant.}
Model-checking is achieved
by adapting a class graph construction for untiming ITA
(Section~\ref{sec:regular}) and adding information relevant to the
formula. The problem is thus reduced to a \ctl model checking problem
on this graph.

\begin{definition} Formulas of the timed logic \tctlcint are defined by the
  following grammar:
\[
\psi ::= p \mid \psi \wedge \psi \mid \neg \psi \mid \sum_{i \geq 1} a_i \cdot x_i + b
\rel 0 \mid \always \psi \until \psi \mid \expath \psi \until \psi
\]
where $p \in AP$ is an atomic proposition, $x_i$ are model clocks, $a_i$ and $b$ are rational numbers such
that $(a_i)_{i \geq 1}$ has finite domain, and $\rel \,\in \{>,\geq,=,\leq,<\}$.
\end{definition}
As before we use the classical shorthands $\eventually$, $\globally$,
and boolean operators.

Let $\A=\langle\Sigma, AP, Q, q_0, F,pol, X, \lambda, lab,
\Delta\rangle$ be an interrupt timed automaton and $S= \{(q,v,\beta)
\mid q \in Q, \ v \in \R^X, \ \beta \in \{\top,\bot\} \}$, the set of
configurations. The formulas of \tctlcint are interpreted over
configurations\footnote{The boolean value in the configuration is not
  actually used. The logic could be enriched to take advantage of this
  boolean, to express for example that a run lets some time elapse in
  a given state.} $s=(q,v,\beta)$.

The semantics of \tctlcint is defined as follows on the transition
system $\T_{\A}$ associated with $\A$. For atomic propositions and a
configuration $s=(q,v,\beta)$, with $lab(s)=lab(q)$:
\[
\begin{array}{lcl}
  s \models p &\quad\textrm{iff}\ & p \in lab(s) \\
  s \models \sum_{i\geq 1} a_i \cdot x_i + b \rel 0 &\quad\textrm{iff}\ &
  v \models \sum_{i\geq 1} a_i \cdot x_i + b \rel 0
\end{array}
\]
and inductively:
\[
\begin{array}{lcl}
  s \models \varphi \wedge \psi &\quad\textrm{iff}\ & s \models \varphi\ \textrm{and}
  \ s \models \psi \\
  s \models \neg\varphi &\quad\textrm{iff}\ & s \not\models \varphi \\
  s \models \always \fee \until \psi &
\quad\textrm{iff}\ & \textrm{for all } \rho \in Exec(s), \ \rho \models \fee \until \psi\\
  s \models \expath \fee \until \psi &
\quad\textrm{iff}\ & \textrm{there exists } \rho \in Exec(s) 
\textrm{ s. t. } \rho \models \fee \until \psi\\
  \textrm{with } 
  \rho \models \fee \until \psi  &
\quad\textrm{iff}\ &\textrm{there is a position } \pi \in \rho 
\textrm{ s. t. } s_\pi \models \psi \\
& &\textrm{and } \forall \pi' <_{\rho} \pi, \ s_{\pi'} \models \fee \vee \psi.
\end{array}
\]
\MS{Reviewer 2, Rq 6. Done.}

The automaton $\A$ satisfies $\psi$ if the initial configuration $s_0$
of $\T_{\A}$ satisfies $\psi$.
\begin{theorem}\label{thm:mctctlintdec}
  Model checking \tctlcint on interrupt timed
  automata can be done in $2$-EXPTIME, and in PTIME when the number
  of clocks is fixed.
\end{theorem}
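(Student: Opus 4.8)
The plan is to reduce model checking \tctlcint to ordinary \ctl model checking on a refinement of the class graph of Section~\ref{sec:regular}. The only extra requirement, compared with the untimed construction, is that the partition into classes be fine enough so that every atomic linear constraint occurring in the formula $\psi$ has a constant truth value on each class; once this is granted, the time-abstract bisimulation already established in the proof of Theorem~\ref{prop:reach} transfers the whole formula.

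\medskip\noindent\textbf{Enriching the expression sets.}
Let $\Phi$ be the finite set of linear expressions $C=\sum_{i\geq 1} a_i x_i+b$ occurring in atomic subformulas $C\rel 0$ of $\psi$. For each such $C$, let $k$ be the greatest index with $a_k\neq 0$ (or $k=0$ if $C$ is a constant). Before launching the saturation procedure of Section~\ref{subsec:contructionexpressions}, I would add ${\tt norm}(C,k)$ to $E_k$ together with the residual $-\sum_{i<k}a_i x_i-b$, exactly as is done for normalized guards, and then run the saturation unchanged. After saturation, for every class $R=(q,\{\preceq_k\}_{1\leq k\leq\lambda(q)})$ and every $C\in\Phi$, the sign of $C$ on $\sem{R}$ is entirely determined by the preorders $\{\preceq_k\}$. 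Since $|\Phi|$ is linear in $|\psi|$, only polynomially many seed expressions are added; by the analysis of Lemma~\ref{prop:terminate}, Corollary~\ref{cor:zas} and the proof of Theorem~\ref{prop:reach}, the number of semantically distinct classes per state is still doubly exponential in the size of $\A$ (and polynomial when the number of clocks is fixed). I would then build the class graph $\G$ precisely as in the proof of Theorem~\ref{prop:reach}, keeping both the discrete edges and the $succ$ (time) edges, and handling urgent and delayed states by the same class splitting.

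\medskip\noindent\textbf{Reduction to \ctl.}
Turn $\G$ into a Kripke structure: each atomic constraint $C\rel 0$ of $\psi$ becomes a fresh propositional variable, true in a class $R$ exactly when its preorders force $C\rel 0$ (decidable from $R$ thanks to the enrichment); atomic propositions $p\in AP$ are inherited through $lab(q)$; all edges of $\G$, discrete and time, become transitions. Each \tctlcint modality is replaced by the matching \ctl one: $\always\,\varphi\until\psi$ by $\mathsf{A}\big((\varphi\vee\psi)\,\mathsf{U}\,\psi\big)$ and $\expath\,\varphi\until\psi$ by $\mathsf{E}\big((\varphi\vee\psi)\,\mathsf{U}\,\psi\big)$ (the ``$\varphi\vee\psi$ strictly before $\psi$'' shape of the \tctlcint until is exactly this \ctl pattern). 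The claim is then that a configuration $s$ satisfies $\psi$ in $\T_{\A}$ iff the class of $s$ satisfies the translated formula in $\G$; in particular $\A\models\psi$ iff $R_0$ satisfies it.

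\medskip\noindent\textbf{Correctness and the main obstacle.}
The correctness proof rests on the relation $\{(s,R)\mid s\in\sem{R}\}$ being a time-abstract bisimulation between $\T_{\A}$ and $\G$ that is, in addition, label-preserving for \emph{all} atomic constraints of $\psi$ — which is precisely what the enrichment guarantees, since otherwise a single class may hold valuations on both sides of a hyperplane $C=0$. Granted this, one proves by structural induction on $\psi$ that bisimilar configuration/class pairs satisfy the same formulas. The delicate case, and the step I expect to be the main obstacle, is the $\until$ modality: one must show that the bisimulation induces a correspondence between maximal runs $\rho\in Exec(s)$ of $\T_{\A}$ and maximal paths of $\G$ from $R$ that preserves the sequence of visited classes up to stuttering (harmless here, since the translated until operators are stuttering-insensitive on the atomic-proposition labelling), so that the universal ``for all positions before $\pi$'' part of the semantics transfers — the set of labels seen along $\rho^{<\pi}$ equals the set of labels seen along the corresponding path prefix in $\G$. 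One also has to check that finite maximal runs (those with no enabled discrete step at the last configuration) correspond to maximal paths of $\G$ that only use $succ$-edges from a deadlocked class, and that the policy-aware splitting of classes from Theorem~\ref{prop:reach} does not break the correspondence — it does not, since that splitting already makes $\G$ time-abstract bisimilar to $\T_{\A}$.

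\medskip\noindent\textbf{Complexity.}
\ctl model checking is linear in the product of the size of the Kripke structure and the length of the formula. The structure $\G$ has size doubly exponential in $|\A|$ (the formula contributing only polynomially, through $\Phi$), which yields a 2-EXPTIME procedure; when the number of clocks is fixed, $\G$ has polynomial size and the whole procedure runs in PTIME. \qed
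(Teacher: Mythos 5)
There is a genuine gap, and it sits exactly at the step you yourself identify as the crux: the claim that after your enrichment ``the sign of $C$ on $\sem{R}$ is entirely determined by the preorders''. You seed each formula constraint $C\rel 0$ only at its \emph{top} level $k$ (the greatest index with $a_k\neq 0$) and rely on the saturation to push the needed residuals down to lower levels. But the downward propagation of the saturation (its second iterative step) is triggered only by transitions of the automaton that go from a state of level $<k$ to a state of level $\geq k$; it is driven by the automaton's level-increasing edges, not by the formula. If the relevant lower levels are never the source of such edges, nothing reaches them. Concretely, take an ITA with clocks $x_1,x_2,x_3$, initial state at level $3$, and a single (level-decreasing) transition to a level-$1$ state with no outgoing edges, and the formula $\expath\eventually\,(p\wedge 2x_1-x_3-1>0)$ with $p$ labelling the level-$1$ state. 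Your seeding puts $2x_1-1$ into $E_3$ only; since there is no edge from level $1$ or $2$ up to level $3$, the saturation adds nothing to $E_1$, which remains $\{x_1,0\}$. The class $0<x_1$ at the level-$1$ state then contains reachable valuations on both sides of $x_1=\tfrac12$ (recall $x_3=0$ there), so the atomic constraint has no constant truth value on that class, the fresh proposition $q_C$ cannot be consistently assigned, and the label-preserving bisimulation your \ctl reduction needs breaks down.

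The fix is what the paper's construction does: for each comparison $C\rel 0$ of the formula and for \emph{every} level $k$ (with the clocks above level $k$ set to $0$, which is their value in reachable configurations of a level-$k$ state), add the residual of ${\tt norm}(C,k)$ to $E_k$ before running the saturation. This costs only $O(n\cdot|\varphi|)$ extra seed expressions, so your complexity analysis (doubly exponential class count, $2$-EXPTIME overall, PTIME for a fixed number of clocks) is unaffected. With that correction the rest of your argument --- the labelled class graph, the translation of $\always\,\varphi\until\psi$ and $\expath\,\varphi\until\psi$ into \ctl until over classes, the treatment of maximal runs and of the urgent/delayed splitting, and the complexity accounting --- coincides with the paper's proof, which you actually spell out in more detail than the appendix does on the \ctl-translation side.
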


The proof relies on a refinement of the class graph according to the
\MS{Reviewer 2, Rq 7. Done.}
comparisons in the formula to model-check.  It is detailed in
Appendix~\ref{app:proofmctctlindec} and we show the resulting graph on
an example below.

\paragraph{Example.}
Consider the ITA $\A_1$ (\figurename~\ref{fig:exita1}) and the formula
$\varphi_1= \expath \eventually (q_1 \wedge (x_2 > x_1)$.  We assume
that $q_1$ is a propositional property true only in state $q_1$.
Initially, the set of expressions are $E_1 = \{x_1,0\}$ and $E_2 =
\{x_2,0\}$.  First the expression $-\frac12 x_1 + 1$ is added into
$E_2$ since $x_1 + 2x_2 =2$ appears on the guard in the transition
from $q_1$ to $q_2$.  Then expression $1$ is added to $E_1$ because
$x_1 - 1 < 0$ appears on the guard in the transition from $q_0$ to
$q_1$.  Finally expression $x_1$ is added to $E_2$ since $x_2 - x_1 >
0$ appears in $\varphi_1$.  The iterative part of the procedure goes
as follows.  Since there is a transition from $q_0$ of level $1$ to
state $q_1$ of level $2$, we compute all differences between
expressions of $E_2$, then normalize them:
\begin{itemize}[label=\textbullet] 
  \item $x_1 - 0$ and $x_2 - 0$ yield no new expression.
  \item $x_2 - (-\frac12 x_1 + 1)$ and $0 - (-\frac12 x_1 + 1)$ with update $x_2:=0$ both
    yield expression $2$, that is added to $E_1$.
  \item $x_1 - (-\frac12 x_1 + 1)$ yields expression $\frac23$, which
    is also added to $E_1$.
\end{itemize}
The sets of expressions are therefore $E_1 = \{x_1,0,1,\frac23,2\}$
and $E_2 = \{x_2,0,-\frac12 x_1 + 1, x_1\}$.  Remark that knowing the
order between $x_1$ and $\frac23$ will allow us to know the order
between $-\frac12 x_1 + 1$ and $x_1$.  The class graph $\mathcal{G}$
corresponding to $\A_1$ and $\varphi_1$ is depicted in
\figurename~\ref{fig:exita1classes}.  Note that we replaced $x_1$ by
its value, since it is not changed by any update at level $2$.  Some
time zone notations used in $\mathcal{G}$ are displayed in
\tablename~\ref{tab:tzita1}.  In the class graph, states where the
comparison $x_2 > x_1$ is \emph{true} are greyed.  Among these, the
ones in which the class corresponds to state $q_1$ are doubly circled,
\emph{i.e.} states in which $q_1 \wedge (x_2 > x_1)$ is \emph{true}.
Applying standard \ctl model checking procedure on this graph, one can
prove that one of these states is reachable, hence proving that
$\varphi_1$ is \emph{true} on $\A_1$.

\begin{table}
\scriptsize
\begin{mathpar}
Z_0^1=(0 = x_1 < \frac23 < 1 < 2) \and
Z_1^1=(0 < x_1 < \frac23 < 1 < 2) \and
Z_2^1=(0 < x_1 = \frac23 < 1 < 2) \and
Z_3^1=(0 < \frac23 < x_1 < 1 < 2) \and
Z_4^1=(0 < \frac23 < x_1 = 1 < 2) \and
Z_5^1=(0 < \frac23 < 1 < x_1 < 2) \and
Z_6^1=(0 < \frac23 < 1 < x_1 = 2) \and
Z_7^1=(0 < \frac23 < 1 < 2 < x_1) \\
Z_0^2=(0 = x_2 < x_1 < -\frac12 x_1 + 1)\and
Z_1^2=(0 < x_2 < x_1 < -\frac12 x_1 + 1)\and
Z_2^2=(0 < x_1 = x_2 < -\frac12 x_1 + 1)\and
Z_3^2=(0 < x_1 < x_2 < -\frac12 x_1 + 1)\and
Z_4^2=(0 < x_1 < -\frac12 x_1 + 1 = x_2)\and
Z_5^2=(0 < x_1 < -\frac12 x_1 + 1 < x_2)\and
Z_6^2=(0 = x_2 < -\frac12 x_1 + 1 < x_1)\and
Z_7^2=(0 < x_2 < -\frac12 x_1 + 1 < x_1)\and
Z_8^2=(0 < -\frac12 x_1 + 1 = x_2 < x_1)\and
Z_9^2=(0 < -\frac12 x_1 + 1 < x_2 < x_1)\and
Z_{10}^2=(0 < -\frac12 x_1 + 1 < x_1 = x_2)\and
Z_{11}^2=(0 < -\frac12 x_1 + 1 < x_1 < x_2)\and
\end{mathpar}
\caption{Time zones used in the class graph of $\A_1$ when checking $\varphi_1$.}
\label{tab:tzita1}
\end{table}

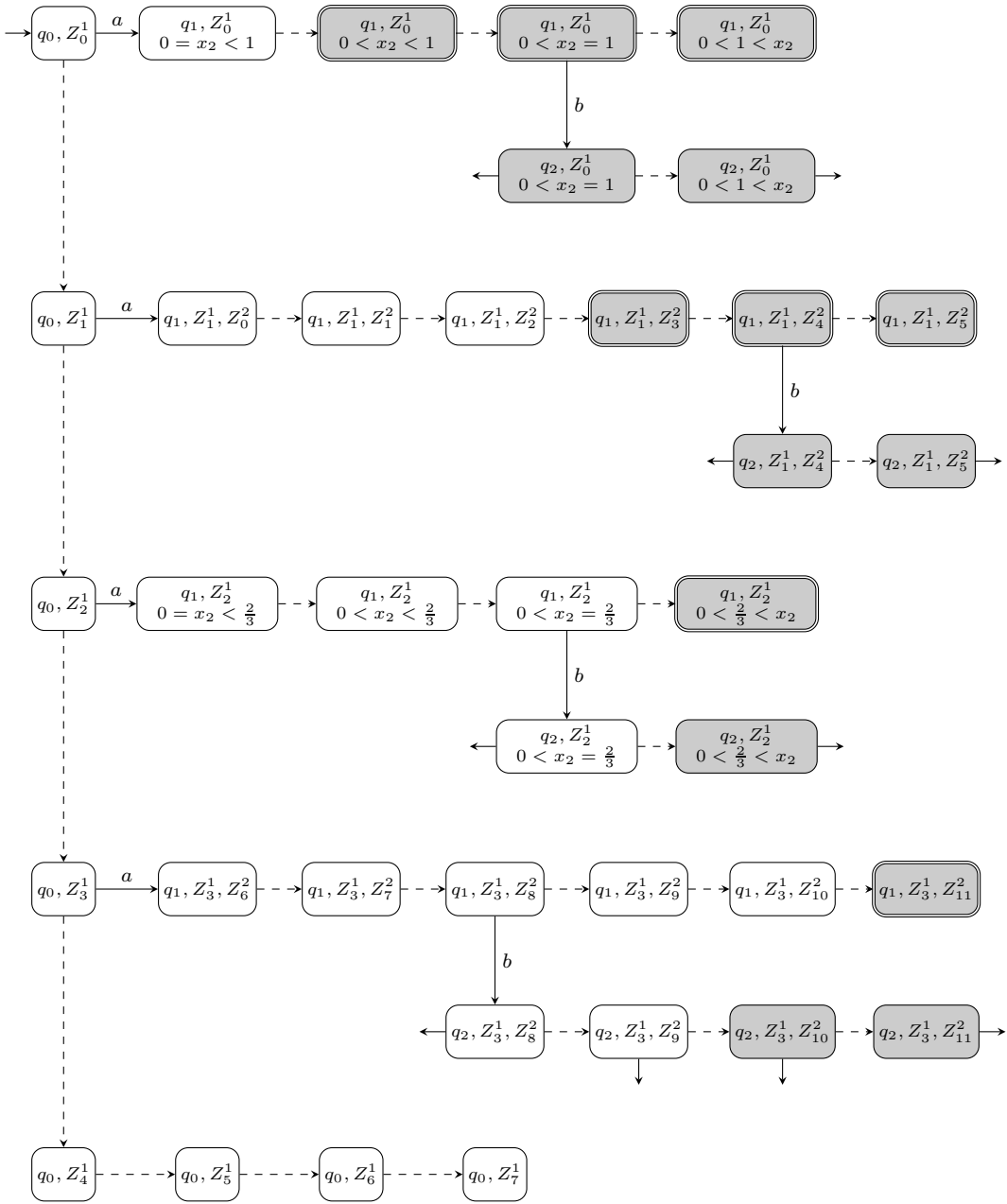
\begin{figure}
\begin{center}
\begin{tikzpicture}[node distance=2.5cm,auto,initial text=]
\tikzstyle{every state}=[inner sep=2pt,draw=black,shape=rectangle,rounded corners=5pt,font=\scriptsize]
\tikzstyle{time step}=[draw,->,dashed]
\tikzstyle{timeok}=[fill=black!20]
\tikzstyle{predicatetrue}=[timeok,double]

\useasboundingbox ($(-0.8,0.5) + (24.5pt,0)$) rectangle ($(13.1,-16.5) - (24.5pt,0)$); 

\node[state,initial] (r0) at (0,0)                          {$q_0,Z_0^1$};
\node[state]         (r1) [node distance=4cm,below of=r0]   {$q_0,Z_1^1$};
\node[state]         (r2) [node distance=4cm,below of=r1]   {$q_0,Z_2^1$};
\node[state]         (r3) [node distance=4cm,below of=r2]   {$q_0,Z_3^1$};
\node[state]         (r4) [node distance=4cm,below of=r3]   {$q_0,Z_4^1$};
\node[state]         (r5) [node distance=2cm,right of=r4] {$q_0,Z_5^1$};
\node[state]         (r6) [node distance=2cm,right of=r5] {$q_0,Z_6^1$};
\node[state]         (r7) [node distance=2cm,right of=r6] {$q_0,Z_7^1$};

\node[state] (p100) [node distance=2cm,right of=r0] {%
\begin{tabular}{cc}
$q_1,Z_0^1$\\$0 = x_2 < 1$
\end{tabular}
};
\node[state,predicatetrue] (p101) [right of=p100] {%
\begin{tabular}{cc}
$q_1,Z_0^1$\\$0 < x_2 < 1$
\end{tabular}
};
\node[state,predicatetrue] (p102) [right of=p101] {%
\begin{tabular}{cc}
$q_1,Z_0^1$\\$0 < x_2 = 1$
\end{tabular}
};
\node[state,predicatetrue] (p103) [right of=p102] {%
\begin{tabular}{cc}
$q_1,Z_0^1$\\$0 < 1 < x_2$
\end{tabular}
};
\node[state,accepting by arrow,accepting where=left,timeok] (p202) [node distance=2cm,below of=p102] {%
\begin{tabular}{cc}
$q_2,Z_0^1$\\$0 < x_2 = 1$
\end{tabular}
};
\node[state,accepting by arrow,timeok] (p203) [right of=p202] {%
\begin{tabular}{cc}
$q_2,Z_0^1$\\$0 < 1 < x_2$
\end{tabular}
};

\node[state] (p110) [node distance=2cm,right of=r1]   {$q_1,Z_1^1,Z_0^2$};
\node[state] (p111) [node distance=2cm,right of=p110] {$q_1,Z_1^1,Z_1^2$};
\node[state] (p112) [node distance=2cm,right of=p111] {$q_1,Z_1^1,Z_2^2$};
\node[state,predicatetrue] (p113) [node distance=2cm,right of=p112] {$q_1,Z_1^1,Z_3^2$};
\node[state,predicatetrue] (p114) [node distance=2cm,right of=p113] {$q_1,Z_1^1,Z_4^2$};
\node[state,predicatetrue] (p115) [node distance=2cm,right of=p114] {$q_1,Z_1^1,Z_5^2$};
\node[state,accepting by arrow,accepting where=left,timeok]
             (p214) [node distance=2cm,below of=p114]   {$q_2,Z_1^1,Z_4^2$};
\node[state,accepting by arrow,timeok]
             (p215) [node distance=2cm,right of=p214] {$q_2,Z_1^1,Z_5^2$};

\node[state] (p120) [node distance=2cm,right of=r2] {%
\begin{tabular}{cc}
$q_1,Z_2^1$\\$0 = x_2 < \frac23$
\end{tabular}
};
\node[state] (p121) [right of=p120] {%
\begin{tabular}{cc}
$q_1,Z_2^1$\\$0 < x_2 < \frac23$
\end{tabular}
};
\node[state] (p122) [right of=p121] {%
\begin{tabular}{cc}
$q_1,Z_2^1$\\$0 < x_2 = \frac23$
\end{tabular}
};
\node[state,predicatetrue] (p123) [right of=p122] {%
\begin{tabular}{cc}
$q_1,Z_2^1$\\$0 < \frac23 < x_2$
\end{tabular}
};
\node[state,accepting by arrow,accepting where=left] (p222) [node distance=2cm,below of=p122] {%
\begin{tabular}{cc}
$q_2,Z_2^1$\\$0 < x_2 = \frac23$
\end{tabular}
};
\node[state,accepting by arrow,timeok] (p223) [right of=p222] {%
\begin{tabular}{cc}
$q_2,Z_2^1$\\$0 < \frac23 < x_2$
\end{tabular}
};

\node[state] (p130) [node distance=2cm,right of=r3]   {$q_1,Z_3^1,Z_6^2$};
\node[state] (p131) [node distance=2cm,right of=p130] {$q_1,Z_3^1,Z_7^2$};
\node[state] (p132) [node distance=2cm,right of=p131] {$q_1,Z_3^1,Z_8^2$};
\node[state] (p133) [node distance=2cm,right of=p132] {$q_1,Z_3^1,Z_9^2$};
\node[state] (p134) [node distance=2cm,right of=p133] {$q_1,Z_3^1,Z_{10}^2$};
\node[state,predicatetrue] (p135) [node distance=2cm,right of=p134] {$q_1,Z_3^1,Z_{11}^2$};
\node[state,accepting by arrow,accepting where=left]
             (p232) [node distance=2cm,below of=p132]   {$q_2,Z_3^1,Z_8^2$};
\node[state,accepting by arrow,accepting where=below]
             (p233) [node distance=2cm,right of=p232] {$q_2,Z_3^1,Z_9^2$};
\node[state,accepting by arrow,accepting where=below,timeok]
             (p234) [node distance=2cm,right of=p233] {$q_2,Z_3^1,Z_{10}^2$};
\node[state,accepting by arrow,timeok]
             (p235) [node distance=2cm,right of=p234] {$q_2,Z_3^1,Z_{11}^2$};

\path[->,dashed] (r0) edge (r1);
\path[->,dashed] (r1) edge (r2);
\path[->,dashed] (r2) edge (r3);
\path[->,dashed] (r3) edge (r4);
\path[->,dashed] (r4) edge (r5);
\path[->,dashed] (r5) edge (r6);
\path[->,dashed] (r6) edge (r7);

\path[->]        (r0)   edge node {$a$} (p100);
\path[->,dashed] (p100) edge            (p101);
\path[->,dashed] (p101) edge            (p102);
\path[->,dashed] (p102) edge            (p103);
\path[->,dashed] (p202) edge            (p203);
\path[->]        (p102) edge node {$b$} (p202);

\path[->]        (r1)   edge node {$a$} (p110);
\path[->,dashed] (p110) edge            (p111);
\path[->,dashed] (p111) edge            (p112);
\path[->,dashed] (p112) edge            (p113);
\path[->,dashed] (p113) edge            (p114);
\path[->,dashed] (p114) edge            (p115);
\path[->]        (p114) edge node {$b$} (p214);
\path[->,dashed] (p214) edge            (p215);

\path[->]        (r2)   edge node {$a$} (p120);
\path[->,dashed] (p120) edge            (p121);
\path[->,dashed] (p121) edge            (p122);
\path[->,dashed] (p122) edge            (p123);
\path[->,dashed] (p222) edge            (p223);
\path[->]        (p122) edge node {$b$} (p222);

\path[->]        (r3)   edge node {$a$} (p130);
\path[->,dashed] (p130) edge            (p131);
\path[->,dashed] (p131) edge            (p132);
\path[->,dashed] (p132) edge            (p133);
\path[->,dashed] (p133) edge            (p134);
\path[->,dashed] (p134) edge            (p135);
\path[->]        (p132) edge node {$b$} (p232);
\path[->,dashed] (p232) edge            (p233);
\path[->,dashed] (p233) edge            (p234);
\path[->,dashed] (p234) edge            (p235);
\end{tikzpicture}
\end{center}
\caption{The class automaton for $\A_1$ and formula $\varphi_1$.}
\label{fig:exita1classes}
\end{figure}

\subsection{Model-checking \tctl with subscript}

%

Note that in \tctlcint, it is not possible to reason about time
evolution independently of the level in which actions are performed.
For example, properties \emph{(P2) the system is error free for at least 50 t.u.} or
\MS{Reviewer 1, Rq 20. Done.}
\emph{(P3) the system will reach a safe state within 7 t.u.} involve global time.
In order to verify such properties, we introduce the
fragment \tctlp.  This fragment is expressive enough to state
constraints on earliest (and latest) execution time of particular
sequences, like those reaching a recovery state after a crash.  \tctlp
is the set of formulas where satisfaction of an \emph{until} modality
over propositions can be parameterized by a restricted form of time
intervals.
\begin{definition}
Formulas of \tctlp are defined by the following grammar:
\[\varphi_p := p \mid \varphi_p \wedge \varphi_p \mid \neg \varphi_p \quad\mbox{and}\quad
\psi := 
\psi \wedge \psi \mid \neg \psi\mid \varphi_p  \mid
\always \varphi_p \untilsub{\rel a} \varphi_p \mid
\expath \varphi_p \untilsub{\rel a} \varphi_p\]
where $p \in AP$ is an atomic proposition, $a \in \mathbb{Q}^+$, and
$\rel \,\in \{>,\geq,\leq,<\}$ is a comparison operator.
\end{definition}
The properties given in introduction can be expressed by \tctlp
formulas as follows. Property $P2:$ \emph{the system is error free for
  at least 50 t.u.}  corresponds to $\always
(\neg$\emph{error})$\untilsub{\geq 50} \mathbf{t}$, while property
\MS{Reviewer 2, Rq 8. No need for correction.}
$P3:$ \emph{the system will reach a safe state within 7 t.u.} is
expressed by $\always \eventually_{\leq 7}$\emph{safe}.

Formulas of \tctlp are again interpreted over configurations of the
transition system associated with an ITA. For configuration
$s=(q,v,\beta)$, with $lab(s)=lab(q)$, the inductive definition is as
follows: 

\[\begin{array}{lcl}
s \models p &\ \textrm{iff}\ & p \in lab(s)\\
s \models \varphi \wedge \psi &\ \textrm{iff}\ & s \models \varphi \textrm{ and } s \models \psi \\
s \models \neg\varphi &\ \textrm{iff}\ & s \not\models \varphi \\
s \models \always\varphi_p \untilsub{\bowtie a} \psi_p &\ \textrm{iff}\ &\textrm{any execution }\rho \in Exec(s) \textrm{ is such that } \rho \models \varphi_p \untilsub{\bowtie a} \psi_p \\
s \models \expath\varphi_p \untilsub{\bowtie a} \psi_p &\ \textrm{iff}\ &\textrm{there exists an execution }\rho \in Exec(s) \textrm{ such that } \rho \models \varphi_p \untilsub{\bowtie a} \psi_p 
\end{array}\]

where
\[\begin{array}{lcl}
\rho \models \varphi_p \untilsub{\bowtie a} \psi_p &\quad\textrm{iff}\ &\textrm{there exists a position }\pi \textrm{ along } \rho \textrm{ such that } Dur(\rho^{\leq\pi}) \bowtie a,\\
&& s_{\pi} \models \psi_p, \textrm{ and for any position } \pi' <_\rho \pi,\, s_{\pi'} \models \varphi_p
\end{array}\]

Again $\A \models \psi$ if $s_0 \models \psi$.

 We now prove that:
\begin{theorem}\label{thm:mcsubscript}
  Model checking TCTL$_p$ on ITA is decidable.
\MS{Reviewer 2, Rq 9. Done.}
\end{theorem}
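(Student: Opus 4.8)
The plan is to exploit the fact that in \tctlp{} the \emph{until} modalities are not nested — their arguments are purely propositional — so that a bottom-up labelling procedure is available: a state satisfies a propositional $\varphi_p$ by inspection of $lab$, boolean connectives are immediate, and it remains only to compute, for each subformula of the form $\expath \varphi_p \untilsub{\rel a} \psi_p$ or $\always \varphi_p \untilsub{\rel a} \psi_p$, the set of configurations satisfying it. This set will be represented as a union of classes of a (possibly refined) class graph of the kind built in Section~\ref{sec:regular}, so that the final boolean combinations reduce to \ctl model checking on that finite graph.

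First I would treat the existential modality $\expath \varphi_p \untilsub{\rel a} \psi_p$. After disposing of the trivial case where the current configuration is itself a $\psi_p$-configuration (then $0 \rel a$ must hold), a configuration $s$ satisfies the formula iff there is a run from $s$ that stays in $\varphi_p$-states until it reaches, at some position $\pi$, a $\psi_p$-state, with $Dur(\rho^{\leq\pi}) \rel a$. I would encode this as an optimal-time reachability question: restrict $\A$ to the states satisfying $\varphi_p$ — this is still an ITA, since guards, updates and levels are attached to individual transitions and states — and adjoin fresh absorbing goal states reached exactly by the transitions entering a $\psi_p$-state from a $\varphi_p$-state; the question becomes whether some run of this sub-automaton reaches a goal state with total duration $\rel a$. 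For the comparisons $\rel \in \{<,\leq\}$ this is handled by the machinery of Section~\ref{sec:reachcomplexity}: reduce the sub-ITA to a language-equivalent ITA$_-$ (Proposition~\ref{proposition:itaitamoins}), which preserves the set of durations of accepting runs up to urgent $\varepsilon$-bookkeeping, and invoke the Counting Lemma (Lemma~\ref{lemma:counting}); the point to check is that the shortening of a long run used in Proposition~\ref{proposition:reachitamoins} never \emph{increases} the duration (case~1 removes a subrun and so decreases it, cases~2 and~3 preserve it), so that the infimum of reaching times is already realised — up to strictness — on paths of length at most $(E+n)^{3n}$. One then guesses such a path and tests feasibility of the associated linear program augmented with the constraint $\sum_j d_j \rel a$ on the delays up to the goal; polynomial-time solvability of linear programs yields a NEXPTIME bound, and NP when $n$ is fixed.

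The comparisons $\rel \in \{\geq, >\}$ are the delicate case, since now we want a witness of \emph{large} duration and the Counting-Lemma shortening works against us. Here I would first decide, by plain reachability and cycle detection on the class graph of the sub-ITA, whether the supremum of reaching times is infinite — this happens exactly when some $\varphi_p$-reachable configuration lies on a $\varphi_p$-path to the goal and is either the source of an unbounded time elapse (a lazy or delayed state whose active clock is not forced by the sequel) or leads to a time-divergent cycle — in which case the formula holds for every $a \geq 0$. Otherwise the supremum is finite, and one argues that a run of maximal duration may still be taken of bounded length: the repeated blocks removed by the Counting Lemma either carry no duration or, when they do, the contributed duration can be re-allocated elsewhere precisely because no unbounded elapse is available to be wasted; then the linear program is solved with objective ``maximise $\sum_j d_j$'' over paths of bounded length and the optimum compared with $a$. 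Throughout, the absence of equality in $\rel$ (only $<,\leq,>,\geq$ are allowed) is what makes these comparisons robust against the infimum or supremum not being attained.

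Finally the universal modality $\always \varphi_p \untilsub{\rel a} \psi_p$ is obtained by duality: $s$ satisfies it iff no run from $s$ is \emph{bad}, i.e.\ fails $\varphi_p \untilsub{\rel a} \psi_p$. A bad run either leaves $\varphi_p$ before exhibiting a usable $\psi_p$-position, or stays in $\varphi_p$ forever without one, or visits $\psi_p$-positions only at durations that do not satisfy $\rel a$; each alternative reduces to a reachability, a cycle, or a (dual, lower-bound-type) timed-reachability query of the sort just described, over the class graph of $\A$ restricted to $\varphi_p$ — and over $\neg\varphi_p$ for the escape case. Because this uses the full (doubly exponential) class graph of Section~\ref{sec:regular} together with the case analysis, it gives a 2-EXPTIME procedure, co-NP when $n$ is fixed. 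The main obstacle, and the part requiring the most care, is precisely the $\geq/>$ and dual comparisons: separating the ``time diverges'' situation from the ``time stays bounded'' one, and, in the latter, strengthening the Counting-Lemma surgery so that it \emph{preserves}, not merely does not increase, the total duration of a reaching run.
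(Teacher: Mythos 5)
Your overall decomposition (four cases, purely propositional arguments, linear programs over guessed bounded paths) and your treatment of $\expath p \untilsub{\leq a} r$ and $\expath p \untilsub{< a} r$ coincide with the paper's Proposition~\ref{lem:mcsubscriptexistinf}: the Counting Lemma~\ref{lemma:counting} shortening never increases duration, so one guesses a path of length at most $(E+n)^{3n}$ and adds the delay constraint to the linear program. The universal cases are also, in spirit, the paper's counterexample search (Propositions~\ref{lem:mcsubscriptallsup} and~\ref{lem:mcsubscriptallinf}).

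The genuine gap is in your handling of $\expath p \untilsub{\geq a} r$ and $\expath p \untilsub{> a} r$, which you yourself flag as the delicate case but do not actually solve. The paper never computes or compares a supremum: it shows that a minimal witness longer than $B=(E+n)^{3n}$ must contain, by Lemma~\ref{lemma:counting}, a repeated transition $e$ of some level $k$ enclosing a block $\sigma$ of level $\geq k$ with \emph{identical clock valuations} at its two endpoints and positive elapsed time $\delta$; such a block $\sigma e$ can be iterated $\lceil a/\delta\rceil$ times, so it suffices to guess a path of length at most $2B+1$ exhibiting this pattern, verify it by a linear program, and pump (Proposition~\ref{lem:mcsubscriptexistsup}). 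Your replacement -- first decide whether the supremum of reaching times is infinite by ``cycle detection on the class graph'', otherwise maximise total delay over bounded-length paths -- rests on two unestablished steps. First, the class graph is only time-abstract bisimilar to $\T_{\A}$: a cycle in it guarantees an infinite run but carries no duration information, so it cannot by itself certify unbounded accumulated time (the corresponding runs may all be Zeno), and your criterion ``a lazy or delayed state whose active clock is not forced by the sequel'' has no justified class-graph formulation; this lack of precision is exactly why the paper refuses to work with the class graph alone even in the universal case, where it pairs an explicitly guessed bounded prefix (checked by a linear program, to locate the position at time strictly less than $a$) with a $\neg r$-cycle. Second, in the bounded-supremum case the claim that the duration of a removed block ``can be re-allocated elsewhere'' is not an argument: guards and urgency in the remaining run generally forbid inserting extra delay. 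The correct reasoning would be that a removable block of positive duration is itself repeatable (same valuations at both ends), contradicting boundedness of the supremum, so all removable blocks have zero duration and shortening preserves duration -- but that is precisely the pumping idea your proposal lacks, and even then the comparison ``optimum versus $a$'' must handle a supremum equal to $a$ that is not attained because of strict constraints. Since your universal cases delegate to ``dual, lower-bound-type timed-reachability queries'' of the same kind, the gap propagates there as well; as written, decidability of the $\geq$/$>$ modalities is not established.
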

The proof consists in establishing procedures dedicated to the four different subcases:
\begin{itemize}
\item $\expath p \untilsub{\leq a} r$ and $\expath p \untilsub{< a} r$ (Proposition~\ref{lem:mcsubscriptexistinf}),
\item $\expath p \untilsub{\geq a} r$ and $\expath p \untilsub{> a} r$ (Proposition~\ref{lem:mcsubscriptexistsup}),
\item $\always p \untilsub{\geq a} r$ and $\always p \untilsub{> a} r$ (Proposition~\ref{lem:mcsubscriptallsup}),
\item $\always p \untilsub{\leq a} r$ and $\always p \untilsub{< a} r$ (Proposition~\ref{lem:mcsubscriptallinf}),
\end{itemize}
where $p$ and $r$ are boolean combinations of atomic propositions.

\begin{proposition}\label{lem:mcsubscriptexistinf}
  Model checking formulas $\expath p \untilsub{\leq a} r$ and $\expath
  p \untilsub{< a} r$ over ITA is decidable in NEXPTIME and in NP
  if the number of clocks is fixed.
\end{proposition}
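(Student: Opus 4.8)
The plan is to mirror the reachability procedure behind Theorem~\ref{thm:optreach}: reduce to a bounded‑length path search followed by a linear‑program feasibility test, now enriched with the \emph{until} structure and the duration bound.

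\emph{Step 1: reduction to ITA$_-$.} Since $p$ and $r$ are boolean combinations of atomic propositions, their truth at a configuration $(q,v,\beta)$ depends only on $lab(q)$. Apply the construction of Proposition~\ref{proposition:itaitamoins}, labelling every new state $(q^+,\ldots)$ and $(q^-,\ldots)$ by $lab(q)$. This yields an ITA$_-$ $\A'$ whose size is exponential in that of $\A$ (polynomial when the number of clocks is fixed) and such that $\A \models \expath p \untilsub{\rel a} r$ iff $\A' \models \expath p \untilsub{\rel a} r$: the only states added in $\A'$ are the urgent transient states $(q^-,\ldots)$, through which no time elapses and which carry the same label as $q$, so the natural correspondence between runs of $\A$ and of $\A'$ preserves both the witnessing position and all intermediate positions of an \emph{until} witness, as well as the duration to reach it.

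\emph{Step 2: bounding a witness.} Observe that $s_0 \models \expath p \untilsub{\rel a} r$ iff either $s_0 \models r$ and $0 \rel a$, or there is a finite run $\rho$ of $\A'$ from $s_0$ ending in a configuration satisfying $r$, with every earlier configuration satisfying $p$ and with $Dur(\rho) \rel a$; indeed every finite run is a prefix of some maximal run (insert null time steps, and a positive one before a discrete step leaving a delayed state), so the quantification over $Exec(s_0)$ is no obstacle. I claim that a minimal such witnessing run has length at most $(E'+n)^{3n}$, where $E'$ is the number of transitions of $\A'$ and $n$ the number of clocks. This is proved exactly as in Proposition~\ref{proposition:reachitamoins}: if the run were longer, Lemma~\ref{lemma:counting} would yield a transition $e$ of some level $k$ repeated at positions $\pi <_\rho \pi'$ with a subrun between them using only levels $\geq k$, that can be deleted, or replaced by a single time step in the source state of $e$. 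In each of the three cases the operation leaves the final configuration unchanged and does not increase the total duration, so the constraint $\rel a$ still holds; moreover every surviving configuration was already present in $\rho$ (hence still labelled $p$, resp.\ $r$), and in the case where a time step is inserted it takes place in the source state of $e$, which is visited strictly before the witnessing position of $\rho$ and is therefore labelled $p$ and has a lazy or delayed policy, so the inserted step is legal and harmless. This contradicts minimality.

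\emph{Step 3: decision procedure.} Nondeterministically guess a path $e_1\cdots e_l$ of $\A'$ with $l \le (E'+n)^{3n}$ such that every state visited before the last is labelled $p$ and the last is labelled $r$ (a purely syntactic check on the guessed path). Then, as in the proof of Proposition~\ref{proposition:reachitamoins}, build the linear program whose variables are the clock values $x_i^j$ after the $j$‑th step and the delays $d_j$ of the time steps, with the (in)equalities coming from the guards, updates and delays along the path, and add the single extra constraint $\sum_j d_j \rel a$. The guessed path is realised by a genuine witnessing run iff this linear program is feasible, which is decidable in polynomial time in its size, strict inequalities included~\cite{RoTeVi97}. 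The program has size polynomial in $l$ and in $\A'$, hence exponential in $\A$ (polynomial when the number of clocks is fixed); combined with the guess this gives a NEXPTIME procedure, and an NP procedure when the number of clocks is fixed. The main difficulty is Step~2: one must check that each of the three shortening moves of the Counting Lemma simultaneously preserves reaching an $r$‑configuration, the $p$‑invariant on the strict prefix, and the bound on the total duration, the delicate point being that the time step inserted in the lazy/delayed case occurs in a $p$‑labelled state.
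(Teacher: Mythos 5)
Your proposal is correct and follows essentially the same route as the paper's proof: reduce to ITA$_-$ via Proposition~\ref{proposition:itaitamoins}, bound a minimal witness by $(E+n)^{3n}$ using Lemma~\ref{lemma:counting} exactly as in Proposition~\ref{proposition:reachitamoins}, then guess a path and solve the associated linear program with the additional constraint $\sum_j d_j \rel a$ and the (purely state-based) check of $p \until r$. The only difference is that you spell out details the paper leaves implicit (preservation of the $p$-invariant, of the final $r$-configuration and of the duration bound under the three shortening moves, and the harmlessness of the maximal-run quantification), which is a welcome elaboration rather than a divergence.
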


\begin{proof}
  First consider the case of ITA$_-$.
  Both formulas are variants of reachability, with the addition of a
  time bound. Therefore, the proof is similar to the one of
  Proposition~\ref{proposition:reachitamoins}. Again using
  Lemma~\ref{lemma:counting} on an ITA$_-$ with $E$ transitions,
  we can look for a run satisfying one of these formulas and bounded
  by $B = (E + n)^{3n}$, because shortening longer runs can be can be
  done while preserving the property.  Thus, the decision procedure
  again consists in guessing a path and building a linear program.
  The satisfaction of the formula is then checked by separately
  verifying on one side that the run satisfies $p \until r$, and on
  the other side, that the sum of all delays $d_j$ satisfies the
  constraint in the formula. The complexity is the same as in
  Proposition~\ref{proposition:reachitamoins}.

In the case of ITA, the exponential blowup of the transformation into an equivalent ITA$_-$ does not affect the complexity of the model-checking procedure above, as in Theorem~\ref{thm:optreach}.
\qed
\end{proof}
Note that this problem can be compared with bounded reachability as
studied in~\cite{brihaye11}. However, the models seem incomparable:
while the variables (that have fixed non-negative rates in a state) 
are more powerful than interrupt clocks, 
the guards and updates are rectangular,
which in particular forbids additive and diagonal constraints.

\begin{proposition}\label{lem:mcsubscriptexistsup}
  Model checking a formula $\expath p \untilsub{\geq a} r$ and
  $\expath p \untilsub{> a} r$ on an ITA is decidable in NEXPTIME
  and in NP if the number of clocks is fixed.
\end{proposition}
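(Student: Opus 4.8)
The plan is to handle the existential lower-bound operators by combining the shortening technique of Proposition~\ref{proposition:reachitamoins} with a pumping argument. As in Propositions~\ref{proposition:reachitamoins} and~\ref{lem:mcsubscriptexistinf}, I would first reduce to ITA$_-$ via Proposition~\ref{proposition:itaitamoins}; the exponential blow-up is absorbed by the NEXPTIME bound and stays polynomial when the number of clocks is fixed, exactly as in Theorem~\ref{thm:optreach}. It then remains to model-check $\expath p \untilsub{\rel a} r$ (with $\rel \,\in \{\ge,>\}$) on an ITA$_-$ $\A$ with $E$ transitions and $n$ clocks, $p$ and $r$ being boolean combinations of atomic propositions, and I set $B=(E+n)^{3n}$.

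The core of the proof is the following dichotomy: $\A \models \expath p \untilsub{\rel a} r$ iff \emph{(i)} there is a run of length at most $B$ from $s_0$ witnessing $p \until r$ whose duration up to the $r$-position is $\rel a$; or \emph{(ii)} there is a configuration $c$, a run of length at most $B$ from $s_0$ to $c$ satisfying $p$ everywhere, a cycle of length at most $B$ from $c$ back to $c$ that can be traversed in strictly positive time with $p$ holding everywhere, and a run of length at most $B$ from $c$ to a configuration satisfying $r$ with $p$ holding at all earlier positions. The implication (i)$\vee$(ii)$\Rightarrow \A\models\varphi$ is the easy direction: (i) is immediate, and for (ii), if $\tau_0>0$ is a duration achievable on the cycle, then prepending $\lceil a/\tau_0\rceil+1$ copies of the cycle to the $r$-reaching run produces a run still satisfying $p\until r$ whose duration up to the $r$-position exceeds $a$ (taking exactly $\lceil a/\tau_0\rceil$ copies suffices for the non-strict case).

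For the converse I would start from an arbitrary witnessing run and contract its prefix up to the $r$-position using the Counting Lemma~\ref{lemma:counting}. A contraction of the second type replaces the repeated segment by a single time step of equal duration, and one of the third type removes an instantaneous segment; both preserve the total duration, the final configuration, and the validity of $p\until r$ (the labels along a time step are those of its source state, which satisfies $p$). A contraction of the first type removes a segment that is a genuine cycle $c\to c$ in the semantics: if its duration is $0$ the contraction is again duration-safe, and otherwise that segment is a positive-duration cycle, while the prefix reaching $c$ and the suffix from $c$ to the $r$-position can themselves be contracted down to length at most $B$ (each such contraction again either is duration-safe or merely exposes another positive-duration cycle, in which case we stop at the first one found and shorten the rest). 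Iterating, we either obtain a witness of type (i) of length at most $B$, or one of type (ii).

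The decision procedure then nondeterministically guesses the relevant path(s) of length at most $B$ — a single path for (i), three bounded paths sharing the endpoint configuration $c$ for (ii) — builds the associated linear program over the clock values $x_i^j$ and the delays $d_j$ exactly as in Proposition~\ref{proposition:reachitamoins}, adds the constraint that the sum of delays up to the $r$-position is $\rel a$ (and, in case (ii), that the sum of delays along the cycle is $>0$), checks feasibility, and verifies that $p$ and $r$ hold at the required positions, which depends only on the guessed states. Since $B$ and the linear program have size exponential in $\A$, polynomial when $n$ is fixed, and linear programs are solvable in PTIME~\cite{RoTeVi97}, this yields a NEXPTIME procedure, and an NP procedure when the number of clocks is fixed. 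I expect the main obstacle to be precisely the converse direction of the dichotomy: in contrast with Proposition~\ref{lem:mcsubscriptexistinf}, a witnessing run cannot simply be shortened, because cutting a cycle may push its duration below $a$, and the real work is to show that exactly in that situation the cut segment is a positive-duration cycle which can instead be pumped.
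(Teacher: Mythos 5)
Your overall strategy is the paper's: reduce to ITA$_-$ via Proposition~\ref{proposition:itaitamoins}, use the bound $B=(E+n)^{3n}$ from Lemma~\ref{lemma:counting}, and handle the lower-bound constraint by pumping a repeatable positive-duration segment whose source and target configurations coincide (this is exactly what the paper's second guessing stage detects, since after the second occurrence of the repeated transition the lower clocks are untouched, $x_k$ takes the same value, and the higher clocks are null). The packaging differs: you guess three paths of length at most $B$ glued at an explicit configuration $c$ with equality constraints in the linear program, whereas the paper guesses one path of length at most $2B+1$ containing a repeated transition subject to structural side conditions, with the time-elapsing condition checked by the linear program. Both packagings give the same complexity, so this difference is cosmetic rather than substantive.

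There are, however, two concrete problems in your completeness argument. First, you assert that a contraction ``of the third type removes an instantaneous segment'' and hence preserves total duration. This is false: the third case of Lemma~\ref{lemma:counting} only says that no time elapses \emph{for clock $x_p$}, i.e.\ at level $p$; time may well elapse at strictly higher levels, so removing such a segment can decrease the duration below $a$ and destroy the witness. Your case analysis never says what to do then, since your positive-duration-cycle branch is tied only to the first case. The repair is exactly the paper's treatment: when the third case carries positive (higher-level) duration, the segment is again a genuine configuration cycle and must be routed to your case~\emph{(ii)}, just like the first case. Second, your procedure guesses a cycle of length at most $B$, but your argument never bounds the length of the cycle it exposes: the segment $\sigma e$ produced by Lemma~\ref{lemma:counting} can be almost as long as the witness, and shortening it naively may remove all of its duration. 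You need either the paper's observation that the repetition can be found inside any window of $B+1$ consecutive transitions (which is what bounds its guessed path by $2B+1$), or a separate, nested contraction argument for the cycle showing that positive duration can be preserved or transferred to a shorter cycle at a possibly different configuration. With these two repairs (plus the minor remark that ``same configuration'' at the junctions must include the boolean flag $\beta$ so that delayed/urgent policies remain respected when the cycle is iterated), your proof goes through and is essentially the published one.
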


\begin{proof}
  Let $\A$ be an ITA$_-$ with $n$ interrupt clocks and $E$
  transitions, and $B = (E + n)^{3n}$.
  The algorithm to decide whether $\expath p \untilsub{\geq a} r$ (or $\expath p \untilsub{> a} r$) works as follows.
  It nondeterministically guesses a path of length smaller than or equal to $B$ and builds the associated linear program (as in the proof of Proposition~\ref{proposition:reachitamoins}), then checks that:
  \begin{itemize}
  \item this path yields a run, which can be done by solving the linear program;
  \item there is a position $\pi$ in this run at which $r$ holds and before which $p$ holds continuously;
  \item the sum of delays before $\pi$ exceeds $a$ (or strictly exceed in the case of $\expath p \untilsub{> a} r$).
  \end{itemize}
  If this first procedure fails, the algorithm nondeterministically guesses a path of length smaller or equal to $2B+1$ and checks that:
  \begin{itemize}
  \item this path yields a run, which can be checked by a linear program
  as before,
  \item $p$ holds on this path, but not necessarily in the last state reached,
  \item $r$ holds in the last state of this path,
  \item either there is a transition $e$ of level $k$ that updates $x_k$ appearing twice and
    separated by a sequence $\sigma$ of transitions of level higher than $k$ during which time elapses (globally) ;
    this last part can be checked with a linear program on the delays
    corresponding to this subrun.
  \item or there is a transition $e$ of level $k$ that does not update $x_k$ appearing twice and
    separated by a sequence $\sigma$ of transitions of level higher than $k$ not updating $x_k$ during which time elapses at levels strictly higher than $k$ but not at level $k$.
  \end{itemize}
The algorithm returns \emph{true} if one of the previous procedure succeeds, and \emph{false} otherwise.
We shall now prove that this algorithm is both sound and complete.

\paragraph{Soundness.}
If the first procedure succeeds, then the path guessed is trivially a
witness of $\expath p \untilsub{\geq a} r$ (or $\expath p
\untilsub{>a} r$, accordingly).  If the second procedure succeeds,
then a witness for the formula can be built from the path guessed.
Indeed, the path guessed satisfies $p \until r$, but not necessarily
$p \untilsub{\geq a} r$.  Assume the sequence $\sigma$ lets elapse
$\delta$ time units ($\delta >0$), by repeating
$\lceil\frac{a}{\delta}\rceil$ times\footnote{This sequence may be
  repeated once more in the case of $p \untilsub{> a} r$.} the
sequence $\sigma e$, we obtain a run satisfying $p \untilsub{\geq a}
r$.  Note that since either $e$ updates the clock $x_k$ or there are no updates nor time elapsing at level $k$, and $\sigma$ happens
at higher levels, the clock values in each instance of $\sigma e$ will
be identical, hence this repetition will always be possible.

\paragraph{Completeness.}
Now consider a minimal witness $\rho$ of length $h$ for $\expath p
\untilsub{\geq a} r$.  Since $\rho$ is minimal, $r$ holds in the last
state of $\rho$ and $p$ holds (at least) in every position before.  If
$h\leq B$, then the first procedure will consider $\rho$.  Otherwise,
$h>B$, it means that one of the following cases of
Lemma~\ref{lemma:counting} happens:
\begin{itemize}
\item The same transition $e$ of level $k$ leaving $x_k$ unchanged appears twice
  separated by lazy or delayed transitions between states of level
  greater than or equal to $k$.  In that case, the corresponding
  subrun can be replaced by a time step of the same duration, not
  changing the truth value of $p \untilsub{\geq a} r$ on this new
  smaller run, thus violating the minimality hypothesis.
\item The same transition $e$ of level $k$ updating clock $x_k$
  appears twice on the subrun $e_1 \dots e_{B+1}$, at positions $i$
  and $j$.  In that case we have to distinguish two subcases either
  some time has elapsed between the two occurrences $e_i$ and
  $e_j$ of $e$, or the transitions were all instantaneous.
  \begin{itemize}
    \item If no time has elapsed, the subrun between $e_i$ and $e_j$
    can be removed without altering the truth value of $p \untilsub{\geq
      a} r$ on this new run, which is smaller than $\rho$.  Hence there
    is a contradiction with the minimality hypothesis.
  \item Or some time elapsed during this subrun.  Let $\rho$ be
    decomposed into $\rho_0 e_i \sigma e_j \rho_j$.  Then by applying
    Lemma~\ref{lemma:counting} to $\rho_j$ there exists a run
    $\rho_j'$ of length smaller or equal to $B$ such that
    $\rho'=\rho_0 e_i \sigma e_j \rho_j'$ is also a run.  Note that
    $|\rho'| \leq 2B+1$, that the last state of $\rho'$ will be the
    same as the last state of $\rho$ hence will satisfy $r$, and that
    $p$ will also hold along $\rho'$.  As a result $\rho'$ will be
    considered by the second procedure.
  \end{itemize}
\item The same transition $e$ of level $k$ leaving $x_k$ unchanged appears twice, with no time elapsing at level $k$ between these occurrences.  In that case, we again distinguish two subcases:
\begin{itemize}
\item either no time elapsed (globally) the corresponding subrun can be removed,
  not changing anything to the rest of the execution nor to the
  satisfaction of $p \untilsub{\geq a} r$, thus violating the
  hypothesis of minimality of $\rho$;
\item or time elapsed at higher levels and, by minimizing the subrun after the second occurrence as above, we deduce that the run will be considered by the second procedure.
\end{itemize}
\end{itemize}
The completeness proof is similar in the case of $\expath p \untilsub{> a} r$.

When $\A$ is an ITA, the exponential blowup of the transformation from ITA to ITA$_-$ does not affect the above complexity.
\qed
\end{proof}

\medskip While a witness is a finite path in the previous cases, it is
potentially infinite for $\always p \untilsub{\geq a} r$ or $\always p
\untilsub{> a} r$.
The generation of an infinite run relies on the (nondeterministic)
exploration of the class graph built in Section~\ref{sec:regular},
thus has a much greater computational complexity.

\begin{proposition}\label{lem:mcsubscriptallsup}
  Model checking a formula $\always p \untilsub{\geq a} r$ and
  $\always p \untilsub{> a} r$ on an ITA is decidable in 2-EXPTIME
  and in co-NP if the number of clocks is fixed.
\end{proposition}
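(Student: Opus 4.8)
The plan is to reduce, as in the previous results, to the case of an ITA$_-$ $\A$ with $n$ clocks and $E$ transitions via Proposition~\ref{proposition:itaitamoins}: the exponential blow-up of that translation turns the NP (resp.\ co-NP) bounds obtained below into the claimed 2-EXPTIME bound, and leaves the complexity unchanged when the number of clocks is fixed. I treat $\always p \untilsub{\geq a} r$ explicitly, the case of $\always p \untilsub{> a} r$ being analogous. Unfolding the semantics, $\A \not\models \always p \untilsub{\geq a} r$ iff there is a maximal run $\rho$ from $s_0$ such that no position $\pi$ of $\rho$ satisfies simultaneously $Dur(\rho^{\leq\pi}) \geq a$, $s_\pi \models r$ and $s_{\pi'} \models p$ for all $\pi' <_\rho \pi$. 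Equivalently, letting $\pi_0$ be the first position at which $p$ fails (or $\pi_0 = +\infty$ if $p$ holds everywhere), no position $\pi \leq_\rho \pi_0$ has both $Dur(\rho^{\leq\pi}) \geq a$ and $s_\pi \models r$. So the procedure searches for such a counterexample run; its existence being shown to be in NP for a fixed number of clocks yields the co-NP bound.

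I would split the search according to two shapes of counterexample. \emph{Shape 1: a finite $p$-violating prefix}, i.e.\ $\pi_0$ is finite. Here the only relevant data is the prefix $\rho^{\leq\pi_0}$ (every prefix extends to a maximal run), along which $p$ holds strictly before $\pi_0$ and $r$ is false at every position of duration $\geq a$. As in the proof of Proposition~\ref{proposition:reachitamoins}, Lemma~\ref{lemma:counting} lets one pump out superfluous loops: removing a loop only decreases global time, hence can only relax the condition $Dur \geq a$, never introduces an $r$-state, and keeps $p$ true and $\pi_0$ first; a loop replaced by an equal-duration time step preserves all durations. Thus there is such a prefix of length at most $B = (E+n)^{3n}$, which one can guess together with a position where global time first reaches $a$, and then verify through a linear program on the step durations as in Proposition~\ref{proposition:reachitamoins}.

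\emph{Shape 2: a maximal run with $p$ holding everywhere}, along which every $r$-position occurs strictly before time $a$. This witness need not be finite, so I would use the class graph $\mathcal G$ of Section~\ref{sec:regular}, enriched with: (i) the truth values of $p$ and $r$, which depend only on the state and so require no refinement; and (ii) a monotone component in $\{{<}a,{\geq}a\}$ recording whether the global elapsed time has passed $a$ --- since ITA clocks do not measure global time this flag is added by hand, and a time-successor edge may either keep it or move it from $<a$ to $\geq a$ (a lazy or delayed state allows both choices, an urgent state allows neither, so the update is straightforward). On the enriched graph $\mathcal G'$, a Shape-2 witness is a maximal path whose vertices all satisfy $p$ and on which no vertex carrying the $\geq a$ flag satisfies $r$: either a finite path to a deadlock vertex (compressible to length $\leq B$ and checkable by a linear program), or a lasso whose loop avoids $\neg p$-vertices and avoids $r$-vertices once the flag is $\geq a$; a loop keeping the flag at $<a$ corresponds to a Zeno run staying below $a$, whose step durations can be chosen arbitrarily small and whose consistency is again certified by a linear program on the loop. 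In general one builds $\mathcal G'$, of doubly exponential size, and runs standard reachability and loop-detection, giving 2-EXPTIME; with a fixed number of clocks $\mathcal G'$ has polynomial size, so a lasso of polynomial length can be guessed and checked in polynomial time, giving an NP procedure for the negation and hence co-NP for the formula.

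The main obstacle is the bookkeeping of Shape 2: enriching the class graph with the global-time flag and, above all, proving that the loops of $\mathcal G'$ faithfully characterise which infinite runs exist --- in particular handling Zeno loops (whose accumulated duration must be shown realisable either below $a$, or, after crossing $a$, consistently with $r$ remaining false) and the interaction between the urgent/delayed/lazy policies and the ability to let time elapse along a loop. Once that correspondence is established, the rest is the by-now standard combination of ``pump out long runs via Lemma~\ref{lemma:counting}'' and ``encode the surviving finite skeleton as a linear program''.
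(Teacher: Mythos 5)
Your overall skeleton (split into a finite prefix witnessing a $p$-failure, pumped via Lemma~\ref{lemma:counting} and certified by a linear program, versus an everywhere-$p$ maximal run detected through the class graph) is close to the paper's two procedures, and your Shape~1 is fine. The genuine gap is in Shape~2, and it is exactly the point you defer: the correspondence between lassos of your enriched graph $\mathcal G'$ and concrete counterexamples is the heart of the proof, and as stated it fails in both directions. The global-time flag is updated nondeterministically, so an abstract lasso fixes \emph{where} the crossing of $a$ happens, but no concrete run need realise that alignment: a linear program "on the loop" only constrains \emph{one} traversal, whereas the crossing may occur after arbitrarily many iterations, each starting from a different configuration of the loop-head class. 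Worse, the claim that a loop keeping the flag at ${<}a$ "corresponds to a Zeno run staying below $a$, whose step durations can be chosen arbitrarily small" is simply false: a class-graph cycle can force a fixed positive duration per iteration (e.g.\ a self-loop guarded by $x_k=1$ with reset $x_k:=0$ traverses a cycle of classes requiring one time unit each round), so your procedure would accept lassos that no run of the ITA realises, and deciding which cycles admit Zeno realisations bounded by $a$ is a nontrivial problem you neither solve nor reduce away.

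Compare with how the paper sidesteps all of this: its second procedure never tracks global time in the abstraction. The only timing condition (a position at time strictly less than $a$ after which $r$ no longer holds) is imposed on a \emph{concrete} finite prefix of length at most $B$, checked exactly by the linear program, and the guessed cycle of the (unmodified, time-abstract) class graph is required to traverse only $\neg r$ classes and to contain a discrete step; since avoiding $r$ is a purely state-based property, time-abstract bisimilarity alone suffices to unfold the cycle into an infinite run, so soundness needs no flag, no alignment argument and no Zeno analysis. If you want to keep your Shape~2, you must either prove the realisability lemma for flagged lassos (including the Zeno case), or restructure it as the paper does by pushing the $<a$ constraint into the LP-checked stem and demanding an all-$\neg r$ loop. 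A smaller point: your complexity bookkeeping is off — the $2$-EXPTIME does not come from the ITA-to-ITA$_-$ blow-up applied to an NP procedure, but from the fact that the class graph itself (even for ITA$_-$) is doubly exponential, while the ITA$_-$ translation is only used, as in Theorem~\ref{thm:optreach}, to bound the length of the concrete paths fed to the linear programs.
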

\begin{proof} We consider an ITA $\A$ with $n$ interrupt
  clocks, $E$ transitions and the bound $B = (n+2)^{12 b \cdot E \cdot n^3}$ where $b$ is the number of bits coding the constants in $\A$.
  
  The algorithm to verify $\always p \untilsub{\geq a} r$ (or $\always
  p \untilsub{> a} r$) works as follows.  It nondeterministically
  guesses a path of length smaller than or equal to $B$, builds its
  associated linear program, and checks that:
\begin{itemize}
\item this path yields a run $\rho$ (by solving the linear program);
\item this path is maximal, that means no transition can be fired from
  the last configuration of the run;
\item there is a position $\pi$ in $\rho$ occurring at a time stricly less
  than\footnote{Less than or equal to $a$ in the case of $\always p
    \untilsub{> a} r$.} $a$ such that
\begin{description}
\item[Case 1:] either $r$ does not hold from $\pi$ (see
  \figurename~\ref{fig:finiteneverr})
\item[Case 2:] or there is a position $\pi'$ where neither $p$ not $r$
  hold, and $r$ does not hold between $\pi$ and $\pi'$ (see
  \figurename~\ref{fig:finitenotpandnotr}).
\end{description}
\end{itemize}
  \begin{figure}
  \centering
  \begin{tikzpicture}[auto]
  \node (zero) at (0,0) {};
  \node (pi) at (4,0) {};
  \node (notboth) at (8,0) {};
  \node (end) at (10,0) {};
  \node (below) at (0,-0.25) {};
  \node (verybelow) at (0,-0.75) {};

\path[draw,|-|] (zero.center) -- (pi.center);
\path[draw,-|] (pi.center) -- node[swap] {$\neg r$} (end.center);
\node[anchor=east] at (zero) {$\rho$:};

\node[anchor=south] at (pi.north) {$\pi$};

\path[interval duration] (below -| zero) -- node[swap] {\timespent\,$<a$} (below -| pi);
\path[interval length] (verybelow -| zero.west) -- node[swap] {\ruler$\leq B$} (verybelow -| end.east);
  \end{tikzpicture}
  \caption{Proof of Proposition~\ref{lem:mcsubscriptallsup}: finite counterexample (Case~1).}
  \label{fig:finiteneverr}
  \end{figure}
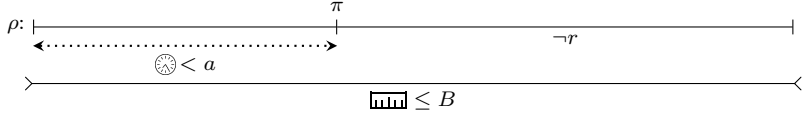
  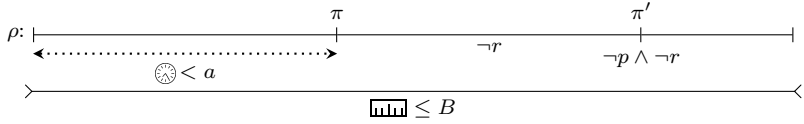
\begin{figure}
  \centering
  \begin{tikzpicture}[auto]
  \node (zero) at (0,0) {};
  \node (pi) at (4,0) {};
  \node (notboth) at (8,0) {};
  \node (end) at (10,0) {};
  \node (below) at (0,-0.25) {};
  \node (verybelow) at (0,-0.75) {};

  \path[draw,|-|] (zero.center) -- (pi.center);
   \path[draw,-|] (pi.center) -- node[swap] {$\neg r$} (notboth.center);
   \path[draw,-|] (notboth.center) -- (end.center);

\node[anchor=south] at (pi.north) {$\pi$};
\node[anchor=north] at (notboth.south) {$\neg p \wedge \neg r$};
\node[anchor=south] at (notboth.north) {$\pi'$};
\node[anchor=east] at (zero) {$\rho$:};

\path[interval duration] (below -| zero) -- node[swap] {\timespent\,$<a$} (below -| pi);
\path[interval length] (verybelow -| zero.west) -- node[swap] {\ruler$\leq B$} (verybelow -| end.east);
  \end{tikzpicture}
  \caption{Proof of Proposition~\ref{lem:mcsubscriptallsup}: finite counterexample (Case~2).}
  \label{fig:finitenotpandnotr}
  \end{figure}
If this first procedure fails, then the algorithm guesses:
\begin{itemize}
\item a class $K$ and a cycle $\cal C$ starting from $K$
  in the class graph (without building neither the graph nor the cycle), 
  such that $\cal C$ contains
  at least a discrete step and only traverses classes where $\neg r$
  holds;
\item a path in the automaton of length smaller than or equal
  to the bound $B$;
\end{itemize}
and checks that:
\begin{itemize}
\item the path does yield a run $\rho$, that reaches a configuration
  $(q,v,\beta)$ in class $K$ (through a linear program);
\item there is a position $\pi$ in $\rho$ occurring at time strictly
  less than\footnote{Less than or equal to $a$ in the case of $\always
    p \untilsub{> a} r$.} $a$ after which $r$ no longer holds.
\end{itemize}
Remark that the procedure cannot use solely the class graph, since the
abstraction is not precise enough to check the existence of position
$\pi$.

\paragraph{Soundness.}
We prove that the algorithm is sound: when one of the procedures
succeeds, there exists a counterexample for formula $\always p
\untilsub{\geq a} r$ (or $\always p \untilsub{> a} r$).  In the case
of the first procedure, it is trivial that the guessed run does not
satisfy $p \untilsub{\geq a} r$ (or $p \untilsub{> a} r$).  In the
case of the second one, we show that there exists an infinite
counterexample.  Consider
configuration $(q,v,\beta)$, which is reachable by $\rho$.  Since
$(q,v,\beta)$ belongs to class $K$, for any path $\sigma$ starting
from $K$ in the class graph, there is a run in the automaton starting
from $(q,v,\beta)$ traversing configurations which belong to the
classes traversed by $\sigma$. Since there is a cycle in the class
graph, there is an infinite path in the class graph (iterating on this
cycle), so there exists an infinite run in the ITA.  Also, since $\neg r$
holds in the infinite path of the class graph, it holds in the run of
the ITA, and the run is a counterexample for the formula.

\paragraph{Completeness.}
Assume there exists a finite counterexample $\rho$.  Let $\A'$ be the
ITA$_-$ accepting the same timed language as $\A$ and let $E'$ denote
the number of its transitions. Let $B' = (E'+2n)^{3n}$ (the bound of Lemma~\ref{lemma:counting}), 
we have $B'\leq B$.  If  $|\rho| \leq B$, it will be detected by procedure 1.
Otherwise let $\rho'$ be the run corresponding to $\rho$ in $\A'$.
This run accepts the same timed word as $\rho$ and its
sequence of traversed states can be projected onto the sequence of
corresponding states of $\rho$, by omitting states of the form
$(q^-,-)$: any subsequence $(q^+_0,-) \rightarrow \cdots \rightarrow
(q^+_{m-1},-) \rightarrow (q^-_m,-) \rightarrow (q^+_m,-)$ in $\rho'$
corresponds to the subsequence $q_0 \rightarrow \cdots \rightarrow
q_{m-1} \rightarrow q_m$ in $\rho$.  Note that $|\rho| \leq |\rho'|$ 
and that $\rho'$ is also a counterexample for the
formula (although in $\A'$).  Since $|\rho'| > B \geq B'$, then one of
the cases of case of Lemma~\ref{lemma:counting} occurs.  By removing
transitions and maybe replacing them by some time elapsing, as in the
proof of Proposition~\ref{lem:mcsubscriptexistsup}, a counterexample
$\sigma'$ of size $|\sigma'| \leq B' \leq B$ exists in $\A'$.  Now
consider the run $\sigma$ in $\A$ which corresponds to $\sigma'$.  We
have $|\sigma| \leq |\sigma'| \leq B' \leq B$ and
$\sigma$ is still a counterexample.  Therefore $\sigma$ can be guessed
by the first procedure.

If there exists an infinite counterexample $\rho$, consider its
counterpart $\sigma$ in the class graph.  This counterpart is also
infinite.  More precisely, $\sigma$ contains an infinite number of
discrete transitions.  Since $\sigma$ traverses a finite number of
classes, it contains a cycle $\cal C$ with at least one discrete
transition.  Choose any class $K$ of this cycle
and consider the prefix $\rho_0$ of $\rho$ leading to a configuration
in $K$.  As in the case of a finite counterexample, there exists
$\rho_0'$ of length smaller than $B$ reaching the same configuration.
All $\cal C$, $K$ and $\rho_0'$ can be guessed by the second
procedure, which will therefore succeed.

Procedure 1 operates in NEXPTIME (guessing a path of length $B$
and solving a linear program of size polynomial w.r.t. $B$). 
Procedure 2 consists in
looking for a specific cycle in the class graph which in can be done
in time polynomial w.r.t. the size of the graph thus in 2-EXPTIME. 
The case where the clocks are fixed, is handled as usual.
\qed
\end{proof}

For formulas in case 4, a specific procedure can be avoided,
since the algorithms of cases 2 and 3 can be reused:
\begin{proposition}\label{lem:mcsubscriptallinf}
  Model checking a formula $\always p \untilsub{\leq a} r$ and
  $\always p \untilsub{< a} r$ on an ITA is decidable in
  2-EXPTIME and in co-NP if the number of clocks is fixed.
\end{proposition}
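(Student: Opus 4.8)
The plan is to avoid a dedicated procedure by reducing the negation of the formula to existential properties already treated in Propositions~\ref{lem:mcsubscriptexistinf}--\ref{lem:mcsubscriptallsup}. Fix the initial configuration $s_0$ and a maximal run $\rho\in Exec(s_0)$, and let $\pi_0$ be the first position of $\rho$ at which $p$ fails (or $\infty$ if $p$ holds everywhere). Since any witness position for $p\untilsub{\leq a}r$ along $\rho$ must be $\leq_{\rho}\pi_0$, a short case analysis on whether $Dur(\rho^{\leq\pi_0})\leq a$ shows that $\rho\not\models p\untilsub{\leq a}r$ holds exactly when either \textbf{(A)} $\rho\models(\neg r)\untilsub{\leq a}(\neg p\wedge\neg r)$, or \textbf{(B)} $\neg r$ holds at every position of $\rho$ of duration at most $a$, i.e.\ $\rho\models\neg(\eventually_{\leq a}r)$. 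Distributing the existential path quantifier over this disjunction yields
\[ s_0\not\models\always p\untilsub{\leq a}r \quad\text{iff}\quad s_0\models\expath(\neg r)\untilsub{\leq a}(\neg p\wedge\neg r)\ \text{ or }\ \exists\rho\in Exec(s_0):\ \rho\models\neg(\eventually_{\leq a}r). \]
The first disjunct is exactly an instance of Proposition~\ref{lem:mcsubscriptexistinf} ($\neg r$ and $\neg p\wedge\neg r$ being boolean combinations of atomic propositions), hence decidable in NEXPTIME and in NP for a fixed number of clocks.

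For the second disjunct I would look for a maximal run that stays in $\neg r$-states at least until global time $a$ is reached, splitting into: \textbf{(B1)} a run that reaches duration $\geq a$ while remaining in $\neg r$, which (after the harmless normalisation of splitting a time step so that a position of duration exactly $a$ exists) is precisely $\expath(\neg r)\untilsub{\geq a}(\neg r)$ and is covered by Proposition~\ref{lem:mcsubscriptexistsup}; and \textbf{(B2)} a maximal run of total duration at most $a$ that remains in $\neg r$ throughout. A run of the kind (B2) must be \emph{blocked} before time $a$: either it is finite, ending in a configuration from which no discrete step is possible and no time may elapse (an urgent deadlock), or it is infinite and necessarily Zeno. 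The first situation is a bounded reachability question solved as in the proof of Proposition~\ref{lem:mcsubscriptexistinf}: guess a $\neg r$-path of length at most the bound of Lemma~\ref{lemma:counting}, solve the associated linear program, and check the blocking condition on the last configuration. The second situation is detected with the class-graph machinery of the proof of Proposition~\ref{lem:mcsubscriptallsup}: search for a class that is reachable through $\neg r$-classes only and lies on a cycle through $\neg r$-classes containing at least one discrete step. The strict variant $\always p\untilsub{<a}r$ is handled identically, adjusting $\leq$/$<$ throughout; and, as in the cited propositions, one first works on the language-equivalent ITA$_-$ given by Proposition~\ref{proposition:itaitamoins} and then lifts to ITA, the exponential blow-up being absorbed by the complexities.

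Putting the pieces together, the complement of the model-checking problem is decided by a search for one of the three kinds of witness above; the two finite-witness parts run in NEXPTIME and the class-graph part in deterministic 2-EXPTIME, so the complement lies in 2-EXPTIME, and since deterministic time classes are closed under complement, $\always p\untilsub{\leq a}r$ is decidable in 2-EXPTIME; for a fixed number of clocks every part runs in NP, hence the problem is in co-NP. I expect the main obstacle to be getting the semantic decomposition exactly right --- in particular the treatment of maximal runs whose total duration never reaches $a$. Such runs (Zeno, or ending in an urgent deadlock) can be genuine counterexamples that no bounded-length search can exhibit, which is precisely the reason the class-graph algorithm of case~3 has to be reused and not merely the finite procedures of cases~1 and~2.
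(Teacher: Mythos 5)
Your route is workable, but it is much heavier than the paper's, which settles case~4 in one line via the identities $\always p \untilsub{\leq a} r = (\always p \untilsub{\geq 0} r) \wedge \neg(\expath \neg r \untilsub{> a} \mathbf{t})$ and $\always p \untilsub{< a} r = (\always p \untilsub{\geq 0} r) \wedge \neg(\expath \neg r \untilsub{\geq a} \mathbf{t})$, and then just reuses Propositions~\ref{lem:mcsubscriptexistsup} and~\ref{lem:mcsubscriptallsup}. Unfolded, your three searches are essentially what those two procedures already do: your case (A) together with the blocked/Zeno search of (B2) re-derives the finite-counterexample plus class-graph-cycle algorithm inside Proposition~\ref{lem:mcsubscriptallsup} (applied to $\always p \untilsub{\geq 0} r$), and (B1) plays the role of the existential disjunct. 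So the substance is close; what the identity buys is that all the delicate run-level reasoning is already packaged and proved in the cited propositions, whereas you have to redo it by hand.

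The step that does not hold as stated is the claim that, for the non-strict formula, ``staying in $\neg r$ until global time $a$'' \emph{is precisely} $\expath(\neg r)\untilsub{\geq a}(\neg r)$. Under the paper's semantics a maximal run can have a $\neg r$-position at duration exactly $a$ with $\neg r$ everywhere before, and still satisfy $p\untilsub{\leq a}r$, because $r$ (with $p$ before it) may appear at a later position of the same duration $a$ after instantaneous discrete steps; hence a witness of your (B1) is not by itself a counterexample to $\always p\untilsub{\leq a}r$, and your algorithm as written could report ``false'' on such a witness. Soundness can be repaired (for $a>0$, lengthen a positive-duration time step taken in a non-urgent $\neg r$ state before the witness and extend to a maximal run that genuinely avoids $r$ up to time $a$), but that pumping argument is absent, and the ``harmless normalisation'' you invoke addresses completeness at the boundary, not this soundness issue. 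The same missing argument is what justifies your stricter (B2) blocking condition (``no discrete step and no time elapse''), since the paper's maximal runs only require that no discrete step be enabled. Note that the paper's identity dodges the boundary exactly by pairing $\untilsub{\leq a}$ with the strict subscript $> a$ and target $\mathbf{t}$ (so the witness position may itself satisfy $r$), and $\untilsub{< a}$ with $\geq a$; your ``adjust $\leq$/$<$ throughout'' should follow that pairing rather than using $\geq a$ in both cases. The remaining sketched steps --- the (A)/(B) split (valid because labels are state-based), the use of Lemma~\ref{lemma:counting} and the class-graph cycle search, and the transfer through Proposition~\ref{proposition:itaitamoins} --- are fine and do give the stated 2-EXPTIME and co-NP bounds.
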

\begin{proof}
  Notice that $\always p \untilsub{\leq a} r = (\always p \untilsub{\geq0} r)
  \wedge \neg (\expath \neg r \untilsub{> a} \mathbf{t})$, and $\always p
  \untilsub{< a} r = (\always p \untilsub{\geq0} r) \wedge \neg (\expath \neg r
  \untilsub{\geq a} \mathbf{t})$.
\qed
\end{proof}

\section{Language properties}\label{sec:exp}

In this section, we compare the expressive power of the previous
models with respect to language acceptance. Recall that TL is strictly
contained in CRTL. We prove that:
\begin{theorem}
The families TL and ITL are incomparable. The families CRTL
and ITL are incomparable.
\end{theorem}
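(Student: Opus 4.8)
The plan is to prove the four non-inclusions by exhibiting two witness languages, using the fact stated just before the theorem that $\mathrm{TL}\subsetneq\mathrm{CRTL}$: a language in $\mathrm{ITL}\setminus\mathrm{CRTL}$ simultaneously witnesses $\mathrm{ITL}\not\subseteq\mathrm{CRTL}$ and $\mathrm{ITL}\not\subseteq\mathrm{TL}$, and a language in $\mathrm{TL}\setminus\mathrm{ITL}$ witnesses $\mathrm{TL}\not\subseteq\mathrm{ITL}$ and $\mathrm{CRTL}\not\subseteq\mathrm{ITL}$. So it suffices to produce one language of each kind, and for each one to give a direct automaton construction on the "easy" side and a pumping/structural argument on the "hard" side.

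For $\mathrm{ITL}\not\subseteq\mathrm{CRTL}$ I would use a language that records a real-valued delay and forces it to reappear later, for instance $L_{\leftarrow}=\{(a,\tau_1)(b,\tau_2)(c,\tau_3)\mid 0<\tau_1<\tau_2\text{ and }\tau_3-\tau_2=\tau_1\}$. Membership in $\mathrm{ITL}$ is immediate by a three-level ITA: at level $1$ one has $x_1=\tau_1$ when $a$ fires; the $a$-transition raises the level, freezing $x_1$ and starting $x_2$ from $0$; the $b$-transition raises the level again, freezing $x_2$ and starting $x_3$ from $0$; the $c$-transition carries the guard $x_3=x_1$ and leads to an accepting state. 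For non-membership in $\mathrm{CRTL}$, suppose a CRTA $\mathcal{B}$ accepts $L_{\leftarrow}$. Using the finite class graph of $\mathcal{B}$ one may restrict to accepting runs of bounded length, hence to finitely many discrete paths $\pi$ reading $abc$. Along a fixed $\pi$ the set $A_\pi$ of triples realised is the image of the polyhedron of delay vectors consistent with the guards of $\pi$ under the linear "event-times" map, hence a polyhedron; as the finitely many $A_\pi$ cover the two-dimensional surface $S=\{\tau_3-\tau_2=\tau_1,\ 0<\tau_1<\tau_2\}$, some $A_\pi$ is two-dimensional. If $A_\pi\not\subseteq S$ then $\mathcal{B}$ accepts a word outside $L_{\leftarrow}$, a contradiction; and if $A_\pi\subseteq S$ then the linear functional "(sum of delays after $b$) $-$ (sum of delays up to $a$)" vanishes identically on $P_\pi$, so it is a combination of the equality guards of $\pi$. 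The main obstacle is to rule this out: every CRTA guard compares a single clock's value to a constant, and that value is a velocity-weighted sum over a \emph{contiguous suffix} of the delay sequence (since the clock's last reset), so no combination of such suffix functionals equals a functional that is $+1$ on a late block, $0$ on a middle block and $-1$ on an initial block; this needs a short but careful argument on the supports of these functionals.

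For $\mathrm{TL}\not\subseteq\mathrm{ITL}$ I would use a language requiring two clocks running in parallel throughout an unbounded computation, namely $L_{\parallel}=\{(a,\tau_1)\cdots(a,\tau_k)\mid k\ge1,\ \tau_1<2,\ \tau_k<3,\ \tau_i-\tau_{i-1}<2\text{ for }2\le i\le k\}$, which is exactly $L_1'\cap L_2'$ with $L_1'$ the set of $a$-words with all consecutive gaps $<2$ and $L_2'$ the set of $a$-words of duration $<3$. Both $L_1'$ and $L_2'$ are in $\mathrm{ITL}$ (a one-level ITA resetting $x_1$ at each $a$, resp. never resetting it), so $L_{\parallel}\in\mathrm{TL}$ as a product of one-clock automata; moreover the same example settles the closure questions, since $L_{\parallel}\notin\mathrm{ITL}$ gives non-closure of $\mathrm{ITL}$ under intersection, while $\overline{L_{\parallel}}$ is a finite union of simple one-level behaviours (some gap $\ge2$, or duration $\ge3$, or the empty word) hence in $\mathrm{ITL}$, giving non-closure under complementation. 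It remains to show $L_{\parallel}\notin\mathrm{ITL}$. Assume an ITA $\mathcal{A}$ with $n$ levels accepts it; for the words $w_N=(a,\frac{2.9}{N})(a,\frac{2\cdot2.9}{N})\cdots(a,2.9)\in L_{\parallel}$, an accepting run has at least $N$ discrete steps while the class graph of $\mathcal{A}$ has bounded size, so for $N$ large two configurations taken just before two distinct $a$-events lie in the same class, yielding a class-graph cycle that encloses at least one $a$-transition and whose traversal in $w_N$ took positive time. Iterating such a cycle produces valid accepting runs of longer and longer words; the only clock whose value can bound the global elapsed time of a run is the level-$1$ clock $x_1$, and $x_1$ is frozen whenever the run is above level $1$ and is reset on every descent (the update rules forbid transferring information up or down between levels), while above level $1$ no other clock carries global-time information either. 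Hence, whether the repeated cycle lies at level $1$ (where $x_1$ returns to the same value after each pass, since the class is the same) or above level $1$ (where $x_1$ is untouched), iterating it drives the total duration beyond $3$ without $\mathcal{A}$ being able to detect it, so $\mathcal{A}$ accepts a word of duration $\ge3$, contradicting $\mathcal{L}(\mathcal{A})=L_{\parallel}$. The main difficulty here is making this pumping precise: one must argue that the relevant cycle can indeed be iterated while forcing positive time (i.e. through genuine time steps $R\tr{succ}Post(R)$ with $Post(R)\ne R$ in the class graph), and, when all available cycles stay at level $1$, formalise that a single-clock behaviour can track the global elapsed time or the local gaps but not both, so that in every case one obtains an accepted word violating either the "gap $<2$" or the "duration $<3$" constraint.
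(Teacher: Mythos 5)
Your top-level decomposition is sound and matches the paper's: since TL $\subseteq$ CRTL, one language in ITL $\setminus$ CRTL and one in TL $\setminus$ ITL give all four non-inclusions (the paper additionally exhibits a separate bounded-length witness $L_3$ for ITL $\not\subseteq$ TL, but that is a refinement). The problems are in your two non-membership arguments, neither of which goes through as sketched.

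For ITL $\not\subseteq$ CRTL, the key structural claim you rely on is false for CRTA as defined here. You argue that a clock value is a velocity-weighted sum over a contiguous suffix of delays and that no combination of such ``suffix functionals'' can equal a functional that is $-1$ on the initial block, $0$ in the middle and $+1$ at the end. But CRTA velocities range over $\Q$ (including $0$ and negative values) and a clock is frozen in states of a different color; hence a \emph{single} never-reset clock, active with velocity $-1$ before the $a$, inactive between $a$ and $b$, and active with velocity $+1$ after $b$, has value exactly $d_3-d_1$ at the $c$-transition, and the guard $x=0$ is a legal $\C_0$ constraint. What actually blocks such a construction is the bounded-memory mechanism of CRTA (the $low/up$ mappings together with the forced resets when entering states of nonzero velocity), which confines the usable clock values and destroys the stored value when $\tau_1$ exceeds $up(x)$; your argument never engages with this, so the non-membership of $L_{\leftarrow}$ in CRTL is not established. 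The paper sidesteps this difficulty entirely by taking $L_4=\{(c,\tau)(c,2\tau)\ldots(c,n\tau)\}$ and invoking the known result of Zielonka that $L_4\notin$ CRTL.

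For TL $\not\subseteq$ ITL, your pumping sketch contains steps that are incorrect or unjustified. Two configurations in the same class of the class graph do not have the same clock values (a class only fixes a preorder over expressions), so ``$x_1$ returns to the same value after each pass'' is wrong; and iterating a class-graph cycle only guarantees, by time-abstract bisimulation, that \emph{some} positive delays can be chosen in each iteration — nothing forces the cumulative duration of the pumped runs past $3$, since the admissible delays may shrink geometrically. This is exactly the hard point, and your candidate language $L_{\parallel}$ makes it harder rather than easier: it is defined by open, ``soft'' constraints, so deleting, duplicating, or time-shifting a repeated segment need not produce a word outside the language, and hence yields no contradiction. The paper's proof is structured precisely to avoid this: it works with the Counting Lemma on an equivalent ITA$_-$ (not with class-graph cycles), and the witness $L_5$ is rigid by design — $a$'s pinned at integer dates and \emph{strictly} decreasing $a$-to-$b$ offsets — so that in every case of the lemma, removing the repeated segment, replacing it by a pure delay, or duplicating it provably produces an accepted word violating $L_5$ (a non-integral $a$, a broken alternation, or a repeated offset). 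To repair your argument you would either need a substantially finer analysis showing that an ITA accepting $L_{\parallel}$ must admit pumped runs of duration $\geq 3$, or switch to a rigid witness of the $L_5$ kind and run the surgery argument of the paper.
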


\subsection{ITL is not contained in TL, nor in CRTL}
The next proposition shows that ITA cannot be reduced to TA or
CRTA. Observe that the automata used in the proof belong to
ITA$_-$. Also, the language given for the first point of the
proposition is very simple since it contains only words of length 2.
\begin{proposition}\label{prop:comp}~\\
1. There exists a language in ITL whose words have bounded length
 which is not in TL.\\
2. There exists a language in ITL which is not in CRTL.
\end{proposition}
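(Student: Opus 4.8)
The plan is, for each part, to exhibit an explicit automaton in ITA$_-$ and then show that its language escapes TL (resp.\ CRTL) by exploiting the restricted shape of the timestamp relations definable by a TA (resp.\ CRTA) over a \emph{fixed} untimed word.

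\emph{Part 1.} I would take $L_1 = \La(\A_1) = \left\{(a,\tau)(b,1+\frac{\tau}{2}) \mid 0 \le \tau < 1\right\}$ from the running example of Figure~\ref{fig:exita1}, observing that $\A_1$ lies in ITA$_-$ (its only update, $x_2:=0$, resets a clock of the new higher level) and that every word of $L_1$ has length $2$. The ingredient is the standard fact that for a TA $\Ba$ and a word $w=a_1\cdots a_m$, the set $T_w(\Ba) = \{(\tau_1,\dots,\tau_m)\in\R^m \mid (a_1,\tau_1)\cdots(a_m,\tau_m)\in\La(\Ba)\}$ is a finite union of zones, i.e.\ convex polyhedra bounded by hyperplanes of the form $\tau_i = c$ or $\tau_i-\tau_j = c$ with $c\in\Q$; this holds even with $\eps$-transitions, since along any run the reachable set of clock valuations stays a finite union of zones whose constants are bounded by those of $\Ba$. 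If $\Ba$ accepted $L_1$ then $T_{ab}(\Ba)$ would be exactly the segment $\left\{(\tau,1+\frac{\tau}{2}) \mid 0\le\tau<1\right\}$, of slope $\frac12$, which lies in no hyperplane of the permitted form; a finite union of zones contained in that segment consists of finitely many points together with one-dimensional pieces lying on lines of direction $e_1$, $e_2$ or $e_1+e_2$, none of slope $\frac12$, so it cannot cover an uncountable segment. Here the bounded word length is what lets the argument take place entirely in $\R^2$.

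\emph{Part 2.} Over a bounded domain a CRTA \emph{can} realise an affine relation of non-unit rational slope between two timestamps, by using a clock with velocity $\neq 1$; in particular $L_1$ itself belongs to CRTL, so part 2 needs an \emph{unbounded} witness. I would take $L' = \{(a,\tau_1)(b,\tau_2) \mid \tau_2 = 2\tau_1\}$, accepted by the two-level ITA$_-$ with state $q_0$ of level $1$, transition $a$ (trivial guard $x_1\ge 0$) resetting $x_2$ and entering a level-$2$ state $q_1$, and transition $b$ with guard $x_2-x_1 = 0$ reaching the accepting state. Suppose a CRTA $\Ba$, with clock-bound maps $up,low$, accepts $L'$, and let $M$ bound all $up(x)$ and all constants of $\Ba$. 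The two structural restrictions of CRTA — guards compare a single clock to a constant that is at most its upper bound, and resets are to $0$ — mean that the only information $\Ba$ keeps about $\tau_1$ after reading $a$ sits in clock values, and by the well-formedness condition of Definition~\ref{def:crta} a clock is informative only while it stays below its upper bound. I would make this quantitative by a pumping argument: for $\tau_1 > M$, an accepting run has a prefix of duration $\tau_1$ and a suffix of duration $\tau_1$, and the portion between the readings of $a$ and $b$ must contain a time step, in some state, whose duration can be increased by a small $\delta>0$ without altering, at any later position where a guard is tested, the value of any clock occurring non-trivially in that guard; shifting that step by $\delta$ yields an accepting run of $(a,\tau_1)(b,\tau_2+\delta)$, and $\tau_2+\delta \neq 2\tau_1$, a contradiction. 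Both witnessing automata are in ITA$_-$, matching the remark after the proposition.

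\emph{Main obstacle.} The delicate point is the pumping/stability lemma for CRTA in part 2: one must genuinely exclude that the automaton tracks an unbounded $\tau_1$, in the presence of $\eps$-transitions (long and possibly looping chains of instantaneous discrete steps), of \emph{saturated} clocks that stay ``alive'' above their upper bound but then satisfy only the trivial guards $x\ge c$, $x>c$, and of several interacting colours and velocities. The argument should rest on two monotonicity facts: (i) between two consecutive resets a clock moves monotonically at the active velocity of its colour, hence crosses its upper bound at most once and is thereafter irrelevant to any non-trivial guard; (ii) consequently only a bounded amount of ``informative'' elapsed time can accumulate, so a run of duration larger than $M$ must contain a perturbable time step. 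The TA case needed for part 1 is the degenerate instance (all velocities $1$, no saturation) and follows directly from the classical zone-reachability construction.
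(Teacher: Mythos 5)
Your part 1 rests on the claim that, for a TA $\Ba$ possibly with $\eps$-transitions, the set $T_w(\Ba)$ of accepted timestamp vectors over a fixed untimed word is a \emph{finite} union of zones ``since the reachable clock valuations stay a finite union of zones whose constants are bounded by those of $\Ba$''. That lemma is false as stated: an $\eps$-loop with guard $x=1$ and reset of $x$, followed by an $a$-transition with guard $x=0$, accepts exactly $\{(a,n)\mid n\in\N\}$, whose timestamp set is not a finite union of zones and involves unbounded constants (this is precisely the extra power of silent transitions studied in~\cite{berard98}, which the paper's own proof invokes to \emph{remove} $\eps$-transitions along the single accepting path before running a granularity argument on the word $(a,1-1/d)(b,1-1/2d)$). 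Your argument is nevertheless repairable without changing its spirit: for each fixed finite accepting path the constraints on all transition dates are difference constraints, and projecting out the $\eps$-dates keeps a zone, so $T_{ab}(\Ba)$ is a \emph{countable} union of zones; any zone contained in the slope-$\frac12$ segment must be a single point (zone faces have slope $0$, $\infty$ or $1$), and countably many points cannot cover the uncountable segment. With that fix, part 1 is correct and genuinely different from the paper's proof (which uses $\A_3$, granularity and $\eps$-removal rather than a dimension/slope argument), and your witness $L_1$ is indeed in ITA$_-$ with words of length $2$.

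Part 2 has a genuine gap. The paper does not prove non-expressibility in CRTA from scratch: it exhibits $L_4=\{(c,\tau)(c,2\tau)\ldots(c,n\tau)\}$ and cites~\cite{Zielonka} for $L_4\notin$ CRTL. You instead take $L'=\{(a,\tau)(b,2\tau)\}$ (a reasonable witness, and you rightly note $L_1\in$ CRTL so a new witness is needed), but the pumping step you rely on is not established and is false as a general claim about runs: it is not true that a run whose $a$-to-$b$ duration exceeds the maximal constant $M$ must contain a time step that can be lengthened by $\delta$ without disturbing any later non-trivial guard. A run may chain bounded measurements -- e.g.\ a loop that tests $x=1$ and resets $x$ -- so that \emph{every} time step lies between a reset and a subsequent equality test; no individual step is perturbable, and your fact (ii) (``only a bounded amount of informative elapsed time can accumulate'') fails for the same reason. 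Worse, in a CRTA a single clock driven with velocities $+v$ before $a$ and $-v$ after $a$ \emph{can} enforce exactly $\tau_2=2\tau_1$ on a nondegenerate (bounded) range of $\tau_1$, so the impossibility must come from a global argument showing that the unbounded pre-$a$ duration cannot be correlated with the post-$a$ duration across arbitrarily long chains of bounded measurements, in the presence of saturation, resets forced by the $low/up$ conditions, colours and $\eps$-transitions. That is the actual mathematical content of part 2, and your proposal defers it rather than proving it; as it stands, the second half of the proposition is not discharged. Either supply that argument in full or, as the paper does, fall back on the published result of~\cite{Zielonka} for a language of the $L_4$ kind.
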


\begin{proof}
  To prove the first point, consider the ITA $\A_3$ in
  Fig.~\ref{fig:ctrex1}.  Suppose, by contradiction, that $L_3 =
  \La(\A_3)$ is accepted by some timed automaton $\Ba$ (possibly
\MS{Reviewer 1, Rq 22. Done.}
  with $\eps$-transitions). 
Note that since we consider timed languages, we cannot assume that the granularity of $\Ba$ is $1$.
\MS{Reviewer 1, Rq 23. Done.}
Let $d$ be the granularity of $\Ba$,
  \textit{i.e.} the gcd of all rational constants appearing in the
  constraints of $\Ba$ (thus each such constant can be written $k/d$
  for some integer $k$). Then the word $w=(a, 1-1/d)(b, 1-1/2d)$ is
  accepted by $\Ba$ through a finite path. Consider now the automaton
  $\Ba'$ in TA, consisting of this single path (where states may have
  been renamed).  We have $w \in \La(\Ba') \subseteq \La(\Ba)=L_3$ and
  $\Ba'$ contains no cycle.  Using the result in~\cite{berard98}, we
  can build a timed automaton $\Ba''$ without $\eps$-transition and
  with same granularity $d$ such that $\La(\Ba'')=\La(\Ba')$, so that $w
  \in \La(\Ba'')$. The accepting path for $w$ in $\Ba''$ contains two
  transitions : $p_0 \tr{\fee_1, a, r_1} p_1 \tr{\fee_2, b, r_2}
  p_2$. After firing the $a$-transition, all clock values are $1- 1/d$
  or $0$, thus all clock values are $1-1/2d$ or $1/2d$ when the
  $b$-transition is fired. Let $x \rel c$ be an atomic proposition
  appearing in $\fee_2$. Since the granularity of $\Ba''$ is $d$, the
  $\rel$ operator cannot be $=$ otherwise the constraint would be $x =
  1/2d$ or $x= 1- 1/2d$. If the constraint is $x<c$, $x\leq c$, $x>c$,
  or $x\geq c$, the path will also accept some word $(a, 1-1/d)(b, t)$
  for some $t \neq 1-1/2d$. This is also the case if the constraint
  $\fee_2$ is true. We thus obtain a contradiction with $\La(\Ba'')
  \subseteq L_3$, which ends the proof.
\begin{figure}
\centering
\begin{tikzpicture}[node distance=4cm,auto]
\node[state,initial] (q0) at (0,0) {$q_0,1$};
\node[state] (q1) [right of=q0] {$q_1,2$};
\node[state,accepting] (q2) [right of=q1] {$q_2,2$};

\path[->] (q0) edge node {$x_1 < 1$, $a$, $(x_2:=0)$} (q1);
\path[->] (q1) edge node (tr) {$x_1 + 2 x_2 = 1$, $b$} (q2);
\end{tikzpicture}
\caption{An ITA $\A_3$ for $L_3$}
\label{fig:ctrex1}
\end{figure}
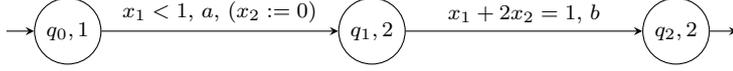

\medskip To prove the second point, consider the language: $$L_4= \{
(c,\tau) (c, 2\tau) \ldots (c, n\tau) \mid n \in \N, \tau > 0 \}$$
accepted by the ITA $\A_4$ in Fig.~\ref{fig:ctrex2}.
This language cannot be accepted by a CRTA (see~\cite{Zielonka}).
\qed
\begin{figure}
\centering
\begin{tikzpicture}[node distance=4cm,auto]
\useasboundingbox (-0.825,0.8) rectangle (8,-0.5); 

\node[state,initial] (p0) at (0,0) {$q_0, 1$};
\node[state,accepting,accepting where=above] (p1) [right of=p0] {$q_1,2$};

\path[->] (p0) edge node {$x_1>0$, $c$, $x_2:=0$} (p1);
\path[->] (p1) edge [out=30,in=-30,min distance=1.5cm,looseness=2] node {$x_2=x_1$, $c$, $x_2:=0$} (p1);
\end{tikzpicture}
\caption{An ITA $\A_4$ for $L_4$}
\label{fig:ctrex2}
\end{figure}
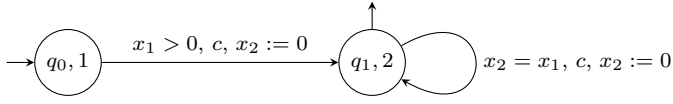
\end{proof}

\subsection{TL is not contained in ITL} 
\label{subsec:incomp}

\MS{\`A relire; traduction depuis la th\`ese.}
We now prove that there exists a language in TL that does not belong
to ITL.
Let $L_5$ be the language defined by
\begin{eqnarray*}
L_5=\big\{ (a,\tau_1)(b,\tau_2) &\ldots& (a,\tau_{2p+1})(b,\tau_{2p+2}) 
\mid p \in \N,\\
&&\forall 0\leq i\leq p,\ \tau_{2i+1}=i+1 \mbox{ and }  i+1<\tau_{2i+2}<i+2, \\
&&\forall 1\leq i\leq p\ \tau_{2i+2} - \tau_{2i+1}< \tau_{2i} - \tau_{2i-1} \big\}
\end{eqnarray*}
Hence, the untimed language of $L_5$ is $(ab)^*$, there is an
occurrence of $a$ at each time unit and the successive occurrences of
$b$ come each time closer to the occurrence of $a$ than previously.
This language is in TL as can be checked on the TA $\A_5$ of
\figurename~\ref{fig:ctrex3} (first proposed in~\cite{alur94a}).

\begin{figure}[ht]
\centering
\begin{tikzpicture}[node distance=4.75cm,auto]
\tikzstyle{every state}+=[scale=0.75]
\node[state,initial] (q0) at (0,0) {};
\node[state] (q1) [right of=q0] {};
\node[state,accepting,accepting where=above] (q2) [right of=q1] {};
\node[state] (q3) [node distance=5cm,right of=q2] {};

\path[->] (q0) edge node {$z=1$, $a$, $z:=0$} (q1);
\path[->] (q1) edge node {$0<z<1$, $b$, $y:=0$} (q2);
\path[->] (q2) edge [bend left,looseness=0.75] node {$z=1$, $a$, $z:=0$} (q3);
\path[->] (q3) edge [bend left,looseness=0.75] node {$0<z \wedge y<1$, $b$, $y:=0$} (q2);
\end{tikzpicture}
\caption{A timed automaton $\A_5$ for $L_5$}
\label{fig:ctrex3}
\end{figure}
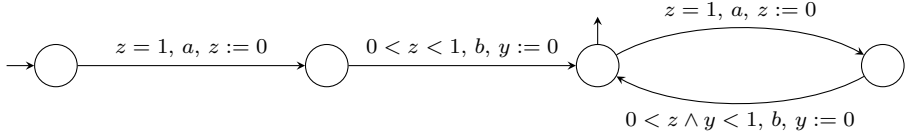

\begin{proposition}
The language $L_5$ does not belong to ITL.
\end{proposition}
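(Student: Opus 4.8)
The plan is to argue by contradiction. Suppose $L_5=\La(\A)$ for some ITA $\A$ with $n$ interrupt clocks. The tension we will exploit is that $L_5$ simultaneously demands (i) an occurrence of $a$ at \emph{every} integer date, indefinitely, which forces some clock to keep measuring global time without interruption, and (ii) that the automaton retains the previous $b$-delay $g_{k-1}=\tau_{2k}-\tau_{2k-1}$ long enough to enforce $g_k<g_{k-1}$ at the next $b$; in an ITA these two demands compete for the level structure. First I would fix a large $p$, pick a word $w_p\in L_5$ together with an accepting run $\rho_p$, and mark the configurations $c_0,\dots,c_p$ reached just after each $a$ (at dates $1,\dots,p+1$). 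Since $\A$ has finitely many states and, by Theorem~\ref{prop:reach}, finitely many classes, a pigeonhole/Ramsey argument over $c_0,\dots,c_p$ and over the sequences of discrete transitions of the sub-runs $\rho_{[c_k,c_{k+1}]}$ (each reads $ba$ and lasts exactly one time unit) produces indices $i<j$ such that $c_i$ and $c_j$ lie in the same class and $\rho_{[c_i,c_j]}$ projects to a cyclic path of the class graph. Let $m$ be the least interrupt level at which time elapses along this loop.

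The core is a structural analysis of this loop. Inside it all time elapses at levels $\ge m$, so the clocks $x_1,\dots,x_{m-1}$ are never active there, and an update of some $x_\ell$ ($\ell<m$) replaces it by a linear expression of $x_1,\dots,x_{\ell-1}$ only; hence the tuple $(x_1,\dots,x_{m-1})$ is transformed around the loop by a \emph{fixed rational affine map} $T$, independent of the timing choices. By time-abstract bisimulation the loop can be iterated from $c_i$ and the tail of $\rho_p$ appended after any number $N$ of iterations, yielding for every $N$ a word $w^{(N)}\in\La(\A)=L_5$ whose first $i$ gaps are the fixed values $g_0>\dots>g_{i-1}$ and whose middle is $N$ copies of the loop's contribution. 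For each $N$ the orbit $(T^t(c_i))_{0\le t\le N}$ must stay inside the extended region defining the class, so the whole orbit of $T$ stays in a region bounded by the finitely many rational hyperplanes of the class graph, hence is eventually periodic on the data that the guards controlling the $b$-transitions can read. I would then identify $x_m$ as the global-time counter inside the loop — it is the only clock both active there and not forced to $0$ by descents — and observe that recording $g_{k-1}$ amounts to copying the \emph{current value of $x_m$} into another clock $x_j$; but updates of $x_j$ may use only clocks of strictly smaller index, which forces $j>m$, and such an $x_j$ is reset to $0$ whenever the run descends to level $m$, which it must do repeatedly to keep $x_m$ counting. Pushing this, after finitely many iterations either the $b$-delay produced by the loop recurs — so two of the gaps in a long $w^{(N)}$ coincide, contradicting that the definition of $L_5$ makes $g_0>g_1>\dots$ pairwise distinct — or $x_m$ fails to accumulate exactly one time unit between two consecutive $a$'s, contradicting $\tau_{2k+1}=k+1$.

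The main obstacle is precisely this last step: turning the informal statement ``$x_m$ cannot simultaneously count global time and be copied into a higher, reusable register'' into a rigorous contradiction. One must argue with \emph{linear} guards and updates rather than comparisons to constants and plain resets, account for the bounded ``scratch'' memory offered by the clocks above level $m$ (always restarted from $0$ on an ascent), and handle the degenerate case in which $T$ is contracting and the gaps converge while remaining strictly decreasing — there one argues in addition that from a configuration arbitrarily close to the fixed point of $T$ the loop produces an essentially constant $b$-delay, so grafting it onto a suitable prefix yields a word in which some $g_k$ is not strictly below its predecessor, again contradicting the definition of $L_5$. Once the window and the behaviour of $T$ are under control, checking that the pumped or shifted timed word violates the defining inequalities of $L_5$ is a routine computation.
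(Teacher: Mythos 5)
Your proposal is a sketch, not a proof: the step you yourself flag as ``the main obstacle'' --- turning ``$x_m$ cannot simultaneously count global time and be copied into a reusable register'' into a contradiction --- is exactly the content of the proposition, and it is left unestablished. Moreover, the pumping mechanism you rely on is too weak to support even the preparatory claims. Finding two configurations $c_i,c_j$ in the same \emph{class} and iterating the corresponding loop of the class graph only gives time-abstract information: from $c_i$ you can fire the same discrete transitions again with \emph{some} delays, but nothing forces those delays to reproduce ``the loop's contribution,'' so your word $w^{(N)}$ is not ``$N$ copies'' of anything, and the orbit-of-$T$ analysis (which concerns only the frozen clocks $x_1,\dots,x_{m-1}$, not the delays actually read off in the $b$-gaps) does not pin down the timing either. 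Since every timing consistent with the guards yields, by hypothesis, a word of $L_5$, you cannot get a contradiction without exhibiting a \emph{specific} accepted run whose timed word provably violates one of the defining constraints of $L_5$; the class-graph loop does not hand you such a run, and the contracting-$T$ degenerate case you mention is only one of the ways the argument escapes.

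The paper avoids all of this by first replacing the ITA by a language-equivalent ITA$_-$ (Proposition~\ref{proposition:itaitamoins}) and then applying the Counting Lemma (Lemma~\ref{lemma:counting}) to a \emph{minimal-length} accepting run of a word of $L_5$ with more than $B=(E+n)^{3n}$ discrete steps. The lemma yields a repeated transition $e$ of some level $k$ whose two occurrences bound a segment $\sigma$ that can be deleted, duplicated, or replaced by a pure time step while remaining a run of the same automaton --- because either the clock valuations at the two occurrences literally coincide ($e$ updates $x_k$, or no time elapsed), or the segment can be traded for an equal-duration delay ($e$ lazy/delayed, $x_k$ untouched). A short case analysis on the subword read by $\sigma e$ (empty, unbalanced in $a$'s and $b$'s, or balanced) then produces an accepted word that is demonstrably not in $L_5$: either an $a$ is shifted off an integer date, or the untimed word leaves $(ab)^*$, or a gap $\tau_{2i+2}-\tau_{2i+1}$ is repeated, contradicting strict decrease; the empty-word cases contradict minimality of the run. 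If you want to salvage your approach, you would need to replace the class-graph pigeonhole by this kind of exact-valuation repetition (or prove the missing ``copy-register'' impossibility directly, with full treatment of linear updates and of the scratch clocks above level $m$), which is essentially reconstructing the paper's argument.
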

%
%
\begin{proof}
Assume, by contradiction, that $L_5$ belongs to ITL.
Then $L_5$ is accepted by an ITA$_-$ $\A$ with $n$ clocks and $E$ transitions.
Let $B = (E+n)^{3n}$ and consider the timed word $w = (a,\tau_1) (b,\tau_2) \cdots (a,\tau_{2B+1}) (b,\tau_{2B+2}) \in L_5$.
Word $w$ is accepted by a run $\rho$ of $\A$, which can be assumed of minimal size.
However, we know that $|\rho| > B$, so one of the three cases of Lemma~\ref{lemma:counting} occurs in the $B$ first transitions.
\begin{itemize}[topsep=-\baselineskip]
\item Suppose a transition $e$ of level $k$ that updates $x_k$ appears twice, separated by a subrun $\sigma$ of level greater than or equal to $k$.
Remark that the valuations after the first and the second occurrence of $e$ are identical.
We distinguish several subcases, depending on the word read along $\sigma e$.
\begin{itemize}
\item If $\sigma e$ reads the empty word $\eps$, we write $\delta$ for the time spent during $\sigma e$.
If $\delta = 0$, then $\sigma e $ can be deleted without affecting neither the remainder of the run nor the accepted word, which contradicts the minimality of $\rho$.
If $\delta \geq 1$, then some interval $[i,i+1]$ does not contain any $b$, which contradicts the definition of $L_5$.
Otherwise, $0 < \delta < 1$.
By deleting $\sigma e$, we obtain an execution $\rho'$ (accepted by $\A$) in which the suffix after $e$ is shifted by $\delta$.
Therefore the following occurrence of letter $a$, which appeared in $\rho$ at date $i \in \N\setminus\{0\}$, appears in $\rho'$ at date $i - \delta$ which is not integral.
So the word accepted by $\rho'$ is not in $L_5$, which is a contradiction.
\item If $\sigma e$ reads more $a$s than $b$s or more $b$s than $a$s, by deleting $\sigma e$ we obtain a run accepting a word whose untiming is not in $(ab)^*$ thus does not belong to $L_5$.
\item If $\sigma e$ reads as many $a$s as $b$s (and both letters at least once), by duplicating $\sigma e$ we obtain a run accepting a word where a same duration separates an $a$ from the following $b$ is repeated, thus violating the definition of $L_5$.
\end{itemize}
\item Suppose a transition $e$ of level $k$ delayed or lazy occurs twice, separated by a subrun $\sigma$ of level greater than or equal to $k$, such that $\sigma e$ does not update $x_k$.
Then we can replace $e \sigma$ by a time step of the same duration and obtain a new run $\rho'$, accepted by $\A$.
\begin{itemize}
\item If $e \sigma$ reads $\eps$, then $\rho'$ contradicts the minimality of $\rho$.
\item If $e \sigma$ reads the word $b$, then $\rho'$ accepts a word where $a$ and $b$ do not alternate, thus not in $L_5$.
\item If $e \sigma$ reads at least an $a$, then $\rho'$ accepts a word with no $a$ at a given integral date, therefore not in $L_5$.
\end{itemize}
\item Otherwise, a transition $e$ of level $k$ appears twice separated by a subrun $\sigma$ of level greater than or equal to $k$, such that $\sigma e$ does not update $x_k$ nor lets time elapse at level $k$.
The same disjunction as in the case of an update of $x_k$ can be applied, since $\sigma e$ can either be deleted or duplicated.
\end{itemize}
\end{proof}

Note that the feature preventing $L_5$ to be in ITL lies in the
decreasing delays between the $a$'s and their immediately following
$b$. A language in ITL can record $k$ different constant delays, using
$k+1$ clocks.  For instance on the alphabet $\Sigma=\{a_1, \ldots,
a_k\}$, the language
\begin{eqnarray*}
M_k &=& \{(a_1,\tau_1) \ldots (a_k,
\tau_k)(a_1,\tau_1+1) \ldots (a_k, \tau_k+1) \ldots (a_1,\tau_1+n)
\ldots (a_k, \tau_k+n) \\&&\qquad\mid n \geq 1, \tau_1 \leq \tau_2 \leq \cdots \leq \tau_k \leq
\tau_1 + 1\}
\end{eqnarray*} is accepted by an ITA$_-$ with $k+1$ clocks. \figurename~\ref{fig:ex4} illustrates the case where $k=3$, with all states lazy.
We conjecture that $M_k$ cannot be accepted by an ITA with $k$ clocks.
\MS{Reviewer 1, Rq 21. Done.}

\begin{figure}[ht]
\centering
\begin{tikzpicture}[auto,node distance=5cm]
\node[state,initial] (q0) {$q_0,1$};
\node[state,right of=q0] (q1) {$q_1,2$};
\node[state,right of=q1] (q2) {$q_2,3$};
\node[state,accepting,node distance =3cm,below of=q2] (q3) {$q_3,4$};
\node[state,left of=q3] (q4) {$q_4,4$};
\node[state,left of=q4] (q5) {$q_5,4$};

\path[->] (q0) edge node {$a_1$, $(x_2:=0)$} (q1);
\path[->] (q1) edge node {$x_1 + x_2 < 1$, $a_2$, $(x_3:=0)$} (q2);
\path[->] (q2) edge node [swap,pos=0.25] {$x_1 + x_2 + x_3 < 1$, $a_3$, $(x_4:=0)$} (q3);
\path[->] (q3) edge node {$x_4 = 1 - x_2 - x_3$, $a_1$} (q4);
\path[->] (q4) edge node {$x_4 = 1 - x_3$, $a_2$} (q5);
\path[->] (q5) edge [bend left,looseness=0.6] node {$x_4 = 1$, $a_3$, $x_4:=0$} (q3);
\end{tikzpicture}
\caption{An interrupt timed automaton for $M_3$}
\label{fig:ex4}
\end{figure}
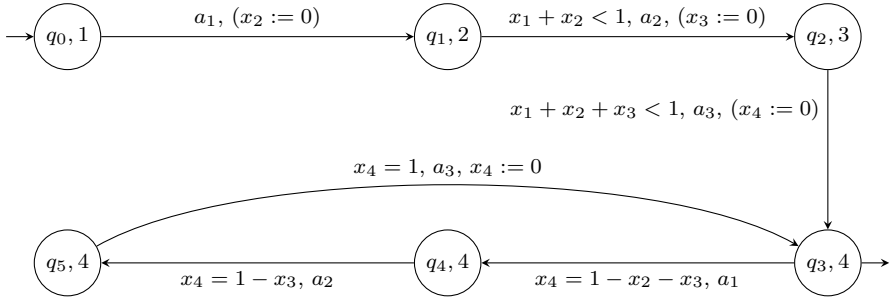

\subsection{Closure under complementation and intersection} 
\label{subsec:complementation}

\begin{proposition}
ITL is not closed under complementation. 
\end{proposition}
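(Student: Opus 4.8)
The plan is to exhibit a single language $L \in \textrm{ITL}$ whose complement $\Sigma^* \setminus L$ (taken over timed words) is provably not in ITL. A natural candidate is the language $L_4 = \{(c,\tau)(c,2\tau)\cdots(c,n\tau) \mid n \in \N,\ \tau > 0\}$ used earlier in Proposition~\ref{prop:comp}, or a close relative of it. We already know $L_4 \in \textrm{ITL}$ (accepted by $\A_4$ of \figurename~\ref{fig:ctrex2}), so it suffices to show that its complement is not accepted by any ITA. The difficulty is that complementing $L_4$ directly gives a messy language (it must accept, among other things, all words whose untimed projection is not in $c^*$, plus all words in $c^*$ that fail the arithmetic progression condition), so a cleaner route is probably to pick $L$ more carefully so that its complement, \emph{restricted to words whose untimed projection is $c^n$ for suitable $n$}, isolates exactly the ``bad arithmetic progression'' behavior.

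First I would fix the concrete witness. Take $L = L_4$ and suppose, for contradiction, that $\overline{L} \in \textrm{ITL}$, accepted by an ITA$_-$ $\A$ (legitimate by Proposition~\ref{proposition:itaitamoins}) with $n$ clocks and $E$ transitions. Set $B = (E+n)^{3n}$ as in Lemma~\ref{lemma:counting}. Now consider, for a fixed $\tau$ with $0 < \tau < 1/(2B+2)$, the timed word
\[
w = (c,\tau)(c,2\tau)\cdots(c,(2B+1)\tau)(c,(2B+1)\tau + \delta)
\]
for a small $\delta > 0$ chosen so that the last timestamp breaks the progression, i.e.\ $\delta \neq \tau$; then $w \notin L_4$, so $w \in \overline{L}$ and $w$ is accepted by a run $\rho$ of $\A$, which we take of minimal length. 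Since $w$ has $2B+2$ letters, $|\rho| > B$, so one of the three cases of Lemma~\ref{lemma:counting} applies within the first $B$ transitions: there is a transition $e$ of some level $k$ occurring at positions $\pi < \pi'$, with everything in between at level $\geq k$, and either $e$ updates $x_k$, or $e$ is lazy/delayed and $x_k$ is untouched, or $x_k$ sees no time elapse and is untouched in the subrun. In all three cases, the subrun $\rho_{[\pi,\pi']}$ can be excised (or replaced by a pure time step of equal duration) to produce a strictly shorter run $\rho'$ of $\A$ accepting a timed word $w'$. The core combinatorial point, as in the proof that $L_5 \notin \textrm{ITL}$, is to derive a contradiction from $w' \in \overline{L}$ together with minimality of $\rho$: I would argue that the surgery either deletes some letters $c$ (changing the number of $c$'s but keeping the remaining timestamps in arithmetic progression — making the resulting word land back in $L_4$, contradicting acceptance by $\A$) or shifts a suffix by a duration $\delta'$, which, because all the ``early'' timestamps $j\tau$ are pinned and rigid, forces the shifted word to \emph{still} be a valid arithmetic progression $j\tau$ plus a displaced tail — again in $L_4$. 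The key is that $L_4$ itself has just enough rigidity that any small local edit of a non-member word produces a member; this is exactly the reverse of the rigidity exploited for $L_5$.

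The main obstacle I anticipate is the case analysis on what word is read along $\sigma e$ when $e$ does \emph{not} update $x_k$ but time elapses only at higher levels: here excision changes timestamps of later letters, and I must rule out that the resulting word is \emph{still} outside $L_4$. To handle this cleanly it may be necessary to choose $\tau$ and the position of the ``defect'' $\delta$ adversarially — e.g.\ putting the arithmetic defect only in the \emph{last} letter (so the first $2B+1$ letters form a genuine progression $\tau,2\tau,\dots$), which guarantees that the Lemma~\ref{lemma:counting} repetition occurs entirely inside the ``good'' prefix, where any deletion or duplication of a block reading some $c$'s visibly either alters the count $n$ while preserving the progression, or duplicates a delay value and hence produces two equal gaps — contradicting strict progression unless the block is empty, which contradicts minimality. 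If even this does not close every subcase, the fallback is to replace $L_4$ by a variant (for instance requiring $\tau$ irrational, or adding a terminal marker letter) engineered so that membership is stable under every admissible pumping operation; the structure of the argument — contradiction via Lemma~\ref{lemma:counting} plus rigidity of the target language — remains the same.
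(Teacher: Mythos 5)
There is a genuine gap, and it is fatal to the chosen witness rather than a missing technical step: the complement of $L_4$ is almost certainly \emph{in} ITL, so no pumping argument via Lemma~\ref{lemma:counting} can establish the non-membership you are aiming for. A timed word over $\{c\}$ fails to be in $L_4$ exactly when either the first timestamp is $0$ or some gap $\tau_{i+1}-\tau_i$ differs from $\tau_1$; an ITA with two interrupt clocks accepts this by storing $\tau_1$ in the level-$1$ clock, nondeterministically guessing the defective position, measuring the offending gap with the level-$2$ clock and comparing it to $x_1$ (this is precisely the pattern of the automaton in \figurename~\ref{fig:cas6}). Your own ``main obstacle'' paragraph is the symptom of this: after any excision or duplication licensed by Lemma~\ref{lemma:counting}, the local defect you planted in the last letter survives the surgery, so the edited word is still outside $L_4$, still legitimately accepted by the hypothetical automaton, and no contradiction ever materializes. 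The ``rigidity'' you invoke (every admissible edit of a non-member lands back in $L_4$) is exactly what fails for languages whose non-membership has a local, guessable witness.

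The paper's proof runs in the opposite direction and avoids proving any new non-membership result. It takes $L_5$, which Section~\ref{subsec:incomp} has already shown is \emph{not} in ITL, and shows that its complement $L_5^c$ \emph{is} in ITL: a word is in $L_5^c$ iff it satisfies one of six simple violation patterns (an $a$ off the integer grid, a missing $a$, a $b$ at an integer time, no $b$ or two $b$'s in some unit interval, or two consecutive $ab$-gaps that fail to decrease), each recognizable by a small ITA --- the last one needing two clocks as in \figurename~\ref{fig:cas6} --- and ITL is closed under union. Hence $L_5^c\in\mathrm{ITL}$ while $(L_5^c)^c=L_5\notin\mathrm{ITL}$, which is the non-closure statement. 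If you want to salvage your plan, the lesson is that the hard inexpressibility proof should be reused where it already exists (for $L_5$), and the new work should be the easy direction (building an ITA for the complement), not the other way around.
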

\begin{proof}
  We prove that the complement $L_5^c$ of $L_5$ belongs to ITL.  A
  timed word belongs to $L_5^c$ iff one of the following assertions
  hold:
\begin{enumerate}
\item An $a$ occurs not at a time unit.
\item An $a$ is missing at some time unit that precedes some letter
 of the word.
\item A $b$ occurs at a time unit.
\item There is no $b$ in an interval $[i,i+1]$ with an $a$ at time 
$i \in \N$.
\item There are two $b$s in an interval $[i,i+1]$ with an $a$ at
 time $i \in \N$.
\item There is an occurrence of $abab$ such that the time difference
 between the two first occurrences is smaller than or equal to the
 time difference between the two last occurrences.
\end{enumerate}
Since ITL is trivially closed under union, it is enough to prove that
each assertion from the set above can be expressed by an ITA. The five
first assertions are straightforwardly modeled by an ITA with a
single clock (and $\eps$-transitions) and we present in
\figurename~\ref{fig:cas6} an ITA with two clocks corresponding to the
last one.\qed

\begin{figure}[ht]
\centering
\begin{tikzpicture}[auto,node distance=3cm]
\node[state,initial] (q0) {$q_0,1$};
\node[state,right of=q0] (q1) {$q_1,1$};
\node[state,node distance=2.25cm,above of=q1] (q2) {$q_2,2$};
\node[state,right of=q2] (q3) {$q_3,2$};
\node[state,right of=q3,accepting] (q4) {$q_4,2$};

\path[->] (q0) edge [loop above,out=120,in=60,min distance=1.5cm] node {$a,b$} (q0);
\path[->] (q0) edge node {$a$, $x_1:=0$} (q1);
\path[->] (q1) edge node [swap] {$b$, $x_2:=0$} (q2);
\path[->] (q2) edge node {$a$, $x_2:=0$} (q3);
\path[->] (q3) edge node {$x_1 \leq x_2$, $b$} (q4);
\path[->] (q4) edge [loop below,out=-60,in=-120,min distance=1.5cm] node {$a,b$} (q4);
\end{tikzpicture}
\caption{An ITA for the language defined by assertion $6$}
\label{fig:cas6}
\end{figure}
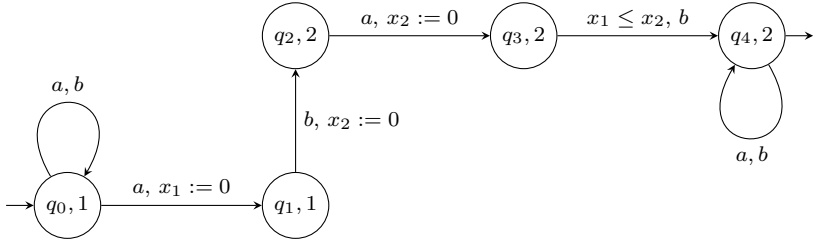
\end{proof}

\begin{proposition}
ITL is not closed under intersection. 
\end{proposition}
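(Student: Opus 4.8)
The plan is to exhibit two languages $L_6$ and $L_7$, both in ITL, whose intersection is exactly the language $L_5$ of Section~\ref{subsec:incomp}. Since $L_5 \notin$ ITL, this immediately gives the result: if ITL were closed under intersection, then $L_5 = L_6 \cap L_7$ would belong to ITL, a contradiction.

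The key remark is that, on timed words whose occurrences of $a$ sit precisely at the integers, the decreasing-delay constraint defining $L_5$ collapses to something an ITA can enforce. Indeed, using $\tau_{2i+1}=i+1$ and $\tau_{2i-1}=i$, the inequality $\tau_{2i+2}-\tau_{2i+1} < \tau_{2i}-\tau_{2i-1}$ is equivalent to $\tau_{2i+2}-\tau_{2i} < 1$, i.e.\ any two consecutive occurrences of $b$ are strictly less than one time unit apart. I therefore take $L_6$ to be the set of timed words with untimed projection in $(ab)^+$ in which the $j$-th occurrence of $a$ happens at time exactly $j$ and the $j$-th occurrence of $b$ happens strictly between times $j$ and $j+1$; and I take $L_7$ to be the set of timed words with untimed projection in $(ab)^+$ in which two consecutive occurrences of $b$ are always strictly less than one time unit apart. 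From the equivalence above one checks directly against the definition of $L_5$ that $L_6 \cap L_7 = L_5$ (inclusion $L_5 \subseteq L_6$ is immediate, $L_5 \subseteq L_7$ uses the equivalence, and the reverse inclusion recombines the three constraints).

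It remains to give ITAs for $L_6$ and $L_7$. For $L_7$ a one-level automaton (essentially a single-clock timed automaton) suffices: it reads $a$, then on reading $b$ resets $x_1$, and thereafter, before every subsequent $b$, it checks $x_1<1$ and resets $x_1$ again. For $L_6$ I use two interrupt levels: at level $1$ the clock $x_1$ counts global time until the first $a$, which is guarded by $x_1=1$; this $a$ raises the level to $2$ and resets $x_2$; a level-$2$ transition reads $b$ under guard $0<x_2<1$; and a level-$2$ transition reads $a$ under guard $x_2=1$, resets $x_2$ and loops back, so that successive $a$'s are exactly one unit apart. Both automata belong in fact to ITA$_-$. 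Putting this together, $L_5 = L_6 \cap L_7$ with $L_6,L_7 \in$ ITL, so closure of ITL under intersection would contradict the previous proposition.

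The only non-routine point is the first step, namely spotting this decomposition; once the reformulation of the decreasing-delay condition is observed, the constructions of $L_6$ and $L_7$ and the verification $L_6\cap L_7 = L_5$ are straightforward. \qed
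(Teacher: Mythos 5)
Your proposal is correct and follows essentially the same route as the paper: the paper also writes $L_5$ as the intersection of exactly these two languages (its $L'_5$, pinning the $a$'s at integer dates with one $b$ strictly in between, and its $L''_5$, requiring consecutive $b$'s to be less than one time unit apart), each accepted by a simple ITA, and concludes via the previous proposition that $L_5\notin$ ITL. The only cosmetic difference is that the paper observes both components are recognizable by one-clock ITA (indeed one-clock TA), whereas you use a two-level ITA for $L_6$; this does not affect the argument.
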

\begin{proof}
$L_5$ is the intersection of $L'_5$ and $L''_5$ defined as follows:
\begin{itemize}
\item The words of $L'_5$ are $(a,1)(b,\tau_1)\ldots(a,n)(b,\tau_n)$, 
with $i<\tau_i<i+1$ for all $i$, $1 \leq i \leq n$. 
\item The words of $L''_5$ are 
$(a,\tau'_1)(b,\tau_1)\ldots(a,\tau'_n)(b,\tau_n)$, 
with $\tau_{i+1}-\tau_i<1$ for all $i$, $1 \leq i \leq n-1$.
\end{itemize}
Both languages are accepted by one-clock ITA (which are also one-clock
TA).  In case of $L'_5$, (1) the clock is reset at every occurrence of
an $a$; (2) an $a$ must occur when the clock is 1 and (3) a single $b$
must occur when the clock is in $(0,1)$.  In case of $L''_5$, (1) the
clock is reset at every occurrence of a $b$ (2) a $b$ must occur when
the clock is less than 1 except for the first $b$ and (3) a single $a$
must occur before every occurrence of a $b$.  \qed
\end{proof}

\section{Combining ITA with CRTA}\label{sec:combination}

In the previous section, we proved that the class of languages defined
by ITA and CRTA are incomparable.  Here we provide a class containing
\MS{Reviewer 2, Rq 10. Done.}
both ITL and CRTL.  In order to do so, we combine the models of ITA
with CRTA.

\subsection{An undecidable product}\label{subsec:mctctlundec}


The first kind of combination possible is through synchronized product
between an ITA and a CRTA.  However, this turns out to be a too
powerful model, since combining even a TA with an ITA yields the
undecidability of the reachability problem.

\begin{definition}
If $\I=\langle\Sigma, Q_\I, q_0^\I, F_\I,pol_\I, X, \lambda_\I, \Delta_\I\rangle$ is an ITA (propositional variables and labeling are omitted) and $\T=\langle\Sigma, Q_\T, q_0^T, F_\T, Y,\Delta_\T\rangle$ is a TA, then $\I\times\T = \langle\Sigma, Q_\I\times Q_\T, (q_0^\I,q_0^\T), F,pol, X, Y, \lambda, \Delta\rangle$ is an ITA$\times$TA where:
\begin{itemize}
\item $pol(q_\I,q_\T) =  pol_\I(q_\I)$ and $\lambda(q_\I,q_\T) =  \lambda_\I(q_\I)$ are lifted from the ITA
\item if $q_\I \xrightarrow{\varphi,a,u} q_\I' \in \Delta_\I$ and $q_\T \xrightarrow{\psi,a,v} q_\T' \in\Delta_\T$, then \[(q_\I,q_\T) \xrightarrow{\varphi\wedge\psi,a,u\wedge v} (q_\I',q_\T') \in \Delta.\]
\end{itemize}
\end{definition}
The semantics of an ITA$\times$TA is a transition system over configurations \[\left\{(q,v,w,\beta) \mid q\in Q, v \in \R^X, w \in \R^Y, \beta \in\{\top,\bot\}\right\}.\]
Discrete steps are defined analogously as in ITA (see Definition~\ref{def:semantics}).
In time steps, clocks of $X$ evolve as in an ITA and clocks of $Y$ as in a TA.
More precisely, a time step of duration $d>0$ is defined by $(q,v,w,\beta) \xrightarrow{d} (q,v',w',\top)$ where $v'(x_{\lambda(q)})=v(x_{\lambda(q)})+ d$ and $v'(x)=v(x)$ for any other clock $x \in X$, and $w'(y)=w(y)+d$ for $y\in Y$.

\begin{theorem}\label{prop:reachundec}
  Reachability is undecidable in the class ITA$\times$TA.
\end{theorem}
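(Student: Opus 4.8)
The plan is to reduce the halting problem of a two-counter (Minsky) machine to reachability in ITA$\times$TA, exploiting the fact that the ITA part can store and transform clock values multiplicatively via linear updates, while the TA part can enforce the exact "one time unit has elapsed" synchronization that is needed to commit these transformations. The standard encoding represents the pair of counter values $(c,d)$ by a single rational of the form $\frac{1}{2^c 3^d}$ held in a clock (or, as in the \scl proof of Theorem~\ref{thm:mcsclundec}, in two clocks $x_1 = \frac{1}{2^c}$, $x_2 = \frac{1}{3^d}$). Incrementing $c$ means halving $x_1$, decrementing means doubling it, and testing $c=0$ means comparing $x_1$ to $1$. The ITA has exactly the right operations (updates of the form $x := \frac12 x$, guards $x = 1$) to perform these at its working level; the only thing it cannot do by itself is "transmit" a freshly computed value at a high level down to the low-level clock that must store it.

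In the \scl proof this downward transmission was enforced by an external temporal formula (the $Span_i$ constraints forcing certain sojourn durations to be exactly $1$ or $2$). Here, instead, I would let the companion TA play that role: whenever the ITA enters its working level to perform an increment/decrement module, the TA resets one of its clocks $y$, and the ITA module is built so that the return transition to the low level is guarded by (among other things) a constraint that is only satisfiable when the TA-clock $y$ has value exactly $1$ (or $0$). Concretely, the ITA module computes in its working clock $x_3$ a value such as $x_3 := \tfrac12 x_2$, lets time elapse so that $x_3$ grows, and the exit transition carries the guard $x_3 = 1 \wedge y = 1$; the TA side has the matching transition with guard $y=1$, reset $y:=0$. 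This forces the sojourn time to be simultaneously "whatever makes $x_3$ reach $1$" and "exactly $1$ t.u.", which pins the low-level clock to the intended new value, exactly mimicking the $Span_i$ trick but now the constraint is enforced inside the product automaton rather than by a logic formula. The instruction sequence of $\M$ is then laid out as a concatenation of such modules (with the swapping module of Figure~\ref{fig:modswap} reused whenever the active counter must be exchanged), and a final control state $q_{\mathit{halt}}$ is reached iff $\M$ halts.

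The key steps, in order, are: (1) fix the encoding of a configuration $\langle \ell, c, d\rangle$ of $\M$ by a control state $(q_\ell, q_{\mathrm{TA}})$ together with clock values $x_1 = 2^{-c}$, $x_2 = 3^{-d}$ (all TA-clocks $0$); (2) build the increment, decrement-and-test, and swap gadgets as small ITA$\times$TA fragments, each using one TA-clock to force the critical "$=1$" synchronization, and verify that a gadget entered in an encoding configuration is left precisely in the encoding configuration of the successor instruction — and, crucially, that there is no other reachable exit, so determinism of $\M$ is respected; (3) concatenate the gadgets according to the program of $\M$, add an initial transition setting $x_1:=1$, $x_2:=1$ (encoding $c=d=0$), and designate the state corresponding to the \emph{Halt} label as the target; (4) conclude that the target is reachable in ITA$\times$TA iff $\M$ halts, whence undecidability by~\cite{minsky67}.

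The main obstacle — and the place where care is genuinely needed — is step (2): showing that the synchronization guards are tight enough that \emph{no} spurious run can reach a wrong successor configuration or the halt state. In the ITA-only setting one has full control over sojourn times through urgent/delayed policies and a single clock per level, but once a TA is glued on, time steps move an ITA clock and all TA clocks together, so one must check that the combined guards (ITA guard like $x_3 = 1$ AND TA guard like $y=1$) are simultaneously satisfiable at exactly one instant and that intermediate states do not leak alternative behaviors — for instance that the "counter is zero" branch and the "counter is positive" branch of a decrement are mutually exclusive (guard $x_1 = 1$ versus $x_1 < 1$, which is fine because the encoding keeps $x_1 \le 1$ always). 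This bookkeeping is essentially the same as in the proof of Theorem~\ref{thm:mcsclundec}, and I would point to that proof for the module-level correctness arguments, noting only the one change: the role of the $Span_i$ \scl constraints is now taken over by TA-clock guards internal to the product. The number of ITA levels needed is small (the working level plus the two counter levels, so three as in the \scl construction), and the TA needs only one or two clocks.
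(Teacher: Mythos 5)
Your reduction is correct, but it is not the construction the paper uses; the two proofs distribute the work between the two components in opposite ways. You keep the counters in the \emph{interrupt} clocks (as $1/2^{n}$, exactly as in the proof of Theorem~\ref{thm:mcsclundec}) and use a TA clock only to force each module to last exactly $1$ (resp.\ $2$) time units, i.e.\ you internalize the $Span_i$ constraints of the \scl proof as guards $y=1$, $y=2$ on the TA side of the synchronized transitions. The paper does the reverse: the counters live in the \emph{classical} clocks $y_c,y_d$ (value $1-\frac1{2^n}$ encodes $n$), and the three interrupt clocks are scratch space — the value $1-y$ is copied into an interrupt clock by letting it run until $y$ hits $1$, the halving is done by a linear update/guard at the upper levels, and the result is written back by resetting $y_c$ or $y_d$ at the instant an interrupt-clock guard such as $x_3=x_2-\frac{x_1}{2}$ becomes true (\figurename~\ref{fig:modincremc}, \tablename~\ref{tab:valhorloges}). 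Your route buys economy: the swap/increment/decrement modules and their correctness analysis are literally those of Theorem~\ref{thm:mcsclundec}, and the only new point to check is that the equality guards $x_3=1$ on the ITA side and $y=1$ on the TA side pin the sojourn times uniquely, so no spurious run reaches \emph{Halt} — which your sketch addresses (including the mutually exclusive zero-test guards $x_2=1$ vs $x_2<1$, and the fact that runs violating the timing simply get stuck, which is harmless for a reachability reduction). The paper's route is self-contained and uses the TA clocks as genuine memory rather than as a single timer. Two small points of care on your side: the product only admits guards over $X$ in the ITA component and over $Y$ in the TA component, so the combined guard $x_3=1\wedge y=1$ must indeed be split over the two synchronized transitions with a common (unique) label, as you indicate and as the paper also remarks at the end of its proof; and note that the \scl construction encodes \emph{both} counters as $1/2^{n}$ and uses the swap module rather than a $2^{-c},3^{-d}$ encoding — with your levels, modifying a level-1 clock directly would force a descent that resets the level-2 clock, so the swap is not optional. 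Neither point invalidates the argument.
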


\begin{proof}[Sketch]
  The proof consists in encoding a two counter machine into an
  ITA$\times$TA.  Two classical clocks $\{y_c, y_d\}$ will keep the
  value of the counters by retaining a value $1-\frac1{2^n}$ to encode
  $n$.  Three interrupt clocks are used to change the value of the
  classical clocks through appropriate resets.  The ITA$\times$TA
  is defined through basic modules, corresponding to the four possible
  actions (incrementation or decrementation of $c$ or $d$).  Each
  module is itself composed of submodules: the first one compares the
  value of $c$ to the one of $d$.  The other one performs the action,
  but depends on the order between $c$ and $d$.

  For example, the submodule incrementing $c$ when $c \geq d$ is
  depicted in \figurename~\ref{fig:modincremc}.  In this module, the
  value\footnote{Or rather the complement to $1$ of the value.} of
  classical clocks is copied into interrupt clocks, updated thanks to
  linear updates allowed by ITA.  the new values are copied into
  classical clocks by resetting them at the appropriate moment.  The
  valuations of clocks during an execution of this module are given in
  \tablename~\ref{tab:valhorloges}.

Note that the policies are used in this product but they could be replaced by classical clocks.
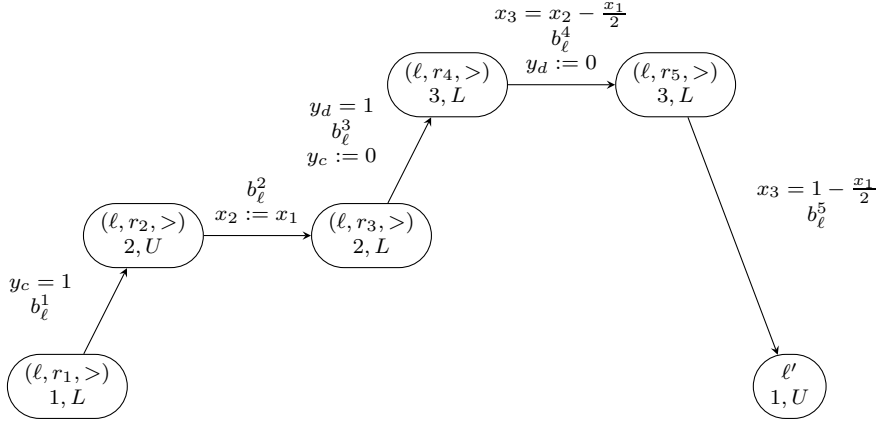
\begin{figure}
\centering
\begin{tikzpicture}[node distance=4.5cm,initial text=,auto,scale=0.5]
\tikzstyle{every state}=[draw=black,fill=white,inner xsep=-4pt,font=\small,shape=rounded rectangle]




\node[state] (q1) at (0,1)            {\etat{(\ell,r_1,>)}{1,L}};
\node[state] (q2) at (2,5) {\etat{(\ell,r_2,>)}{2,U}};
\node[state] (q3) at (8,5)       {\etat{(\ell,r_3,>)}{2,L}};
\node[state] (q4) at (10,9) {\etat{(\ell,r_4,>)}{3,L}};
\node[state] (q5) at (16,9)       {\etat{(\ell,r_5,>)}{3,L}};
\node[state] (q6) at (19,1)           {\etat{\ell'}{1,U}};

\path[->] (q1) edge node [pos=0.3] {\timedtransnoreset{y_c = 1}{b_\ell^1}} (q2);
\path[->] (q2) edge node {\timedtrans{}{b_\ell^2}{x_2:=x_1}} (q3);
\path[->] (q3) edge node [pos=0.3] {\timedtrans{y_d = 1}{b_\ell^3}{y_c := 0}} (q4);
\path[->] (q4) edge node {\timedtrans{x_3 = x_2 - \frac{x_1}{2}}{b_\ell^4}{y_d:= 0}} (q5);
\path[->] (q5) edge node {\timedtransnoreset{x_3 = 1 - \frac{x_1}{2}}{b_\ell^5}} (q6);
\end{tikzpicture}
\caption{Module $\mathcal{A}^{c++}_{c \geq d}(\ell,\ell')$
  incrementing the value of $c$ when $c \geq d$.}
\label{fig:modincremc}
\end{figure}
\begin{table}
\[
\begin{array}{|c|}
\hline
\mathbf{(\ell,r_1,>)}\\
\hline
y_c = 1-\frac{1}{2^n}\\
y_d = 1-\frac{1}{2^p}\\
x_1 = 0\\
\color{black!50}{x_2 = 0}\\
\color{black!50}{x_3 = 0}\\
\hline
\end{array}
\stackrel{\frac{1}{2^n}}{\longrightarrow}
\begin{array}{|c|}
\hline
\mathbf{(\ell,r_1,>)}\\
\hline
y_c = 1\\
y_d = 1-\frac{1}{2^p}+\frac{1}{2^n}\\
x_1 = \frac{1}{2^n}\\
\color{black!50}{x_2 = 0}\\
\color{black!50}{x_3 = 0}\\
\hline
\end{array}
\stackrel{b_\ell^1}{\longrightarrow}
\begin{array}{|c|}
\hline
\mathbf{(\ell,r_2,>)}\\
\hline
y_c = 1\\
y_d = 1-\frac{1}{2^p}+\frac{1}{2^n}\\
x_1 = \frac{1}{2^n}\\
x_2 = 0\\
\color{black!50}{x_3 = 0}\\
\hline
\end{array}
\stackrel{b_\ell^2}{\longrightarrow}
\begin{array}{|c|}
\hline
\mathbf{(\ell,r_3,>)}\\
\hline
y_c = 1\\
y_d = 1-\frac{1}{2^p}+\frac{1}{2^n}\\
x_1 = \frac{1}{2^n}\\
x_2 = \frac{1}{2^n}\\
\color{black!50}{x_3 = 0}\\
\hline
\end{array}
\]\[
\xrightarrow{\frac{1}{2^p} - \frac{1}{2^n}}
\begin{array}{|c|}
\hline
\mathbf{(\ell,r_3,>)}\\
\hline
y_c = 1 + \frac{1}{2^p} - \frac{1}{2^n}\\
y_d = 1\\
x_1 = \frac{1}{2^n}\\
x_2 = \frac{1}{2^p}\\
\color{black!50}{x_3 = 0}\\
\hline
\end{array}
\stackrel{b_\ell^3}{\longrightarrow}
\begin{array}{|c|}
\hline
\mathbf{(\ell,r_4,>)}\\
\hline
y_c = 0\\
y_d = 1\\
x_1 = \frac{1}{2^n}\\
x_2 = \frac{1}{2^p}\\
x_3 = 0\\
\hline
\end{array}
\xrightarrow{\frac{1}{2^p} - \frac{1}{2^{n+1}}}
\begin{array}{|c|}
\hline
\mathbf{(\ell,r_4,>)}\\
\hline
y_c = \frac{1}{2^p} - \frac{1}{2^{n+1}}\\
y_d = 1 + \frac{1}{2^p} - \frac{1}{2^{n+1}}\\
x_1 = \frac{1}{2^n}\\
x_2 = \frac{1}{2^p}\\
x_3 = \frac{1}{2^p} - \frac{1}{2^{n+1}}\\
\hline
\end{array}
\]\[
\stackrel{b_\ell^4}{\longrightarrow}
\begin{array}{|c|}
\hline
\mathbf{(\ell,r_5,>)}\\
\hline
y_c = \frac{1}{2^p} - \frac{1}{2^{n+1}}\\
y_d = 0\\
x_1 = \frac{1}{2^n}\\
x_2 = \frac{1}{2^p}\\
x_3 = \frac{1}{2^p} - \frac{1}{2^{n+1}}\\
\hline
\end{array}
\stackrel{1 - \frac{1}{2^p}}{\longrightarrow}
\begin{array}{|c|}
\hline
\mathbf{(\ell,r_5,>)}\\
\hline
y_c = 1 -\frac{1}{2^{n+1}}\\
y_d = 1 - \frac{1}{2^p}\\
x_1 = \frac{1}{2^n}\\
x_2 = \frac{1}{2^p}\\
x_3 = 1 - \frac{1}{2^{n+1}}\\
\hline
\end{array}
\stackrel{b_\ell^5}{\longrightarrow}
\begin{array}{|c|}
\hline
\mathbf{\ell'}\\
\hline
y_c = 1 -\frac{1}{2^{n+1}}\\
y_d = 1 - \frac{1}{2^p}\\
x_1 = \frac{1}{2^n}\\
\color{black!50}{x_2 = 0}\\
\color{black!50}{x_3 = 0}\\
\hline
\end{array}
\]
\caption[Clock values in the unique run of $\mathcal{A}^{c++}_{c \geq
d}(\ell,\ell')$]{Clock values in the unique run of $\mathcal{A}^{c++}_{c \geq
d}(\ell,\ell')$. Irrelevant values of interrupt clocks are greyed.}
\label{tab:valhorloges}
\end{table}

The detailed proof can be found in Appendix~\ref{app:proofreachundec}.
\end{proof}

Other proofs of undecidability for hybrid systems mixing clocks and
stopwatches have been developed (see for
instance~\cite[Theorem 4.1]{hen98} for a construction with a single
stopwatch and $5$ clocks). While this construction could have been
adapted to our setting, this would have led to an ITA$\times$TA with $5$
classical clocks and $2$ interrupt clocks.

\subsection{A decidable product of ITA and CRTA: ITA$^+$}
\label{sec:comb}
We define another synchronized product between ITA and CRTA, in the
spirit of multi-level systems, for which reachability is decidable.
This class, denoted by ITA$^+$, includes a set of clocks at an
implicit additional level $0$, corresponding to a basic task described
as in a CRTA. In the definition below, since no confusion can occur,
we aggregate the coloring function of CRTA and the level function of
ITA, into a single function $\lambda$.

\begin{definition}[ITA$^+$]
An \emph{extended interrupt timed automaton} is a tuple $\A=\langle Q, q_0,$
$F, pol, X\uplus Y, \Sigma, \Omega, \lambda, up, low, vel, \Delta\rangle$,
where:
\begin{itemize}
\item $Q$ is a finite set of states, $q_0$ is the initial state and $F
\subseteq Q$ is the set of final states.
\item $pol: Q \mapsto \{L,U,D\}$ is the timing policy of states.
\item   $X=\{x_1, \ldots, x_n\}$ consists of $n$ interrupt clocks and  
$Y$ is a set of basic clocks,
\item $\Sigma$ is a finite alphabet,
\item $\Omega$ is a set of colors, the mapping $\lambda : Q \uplus Y
  \mapsto \{1, \ldots, n\} \uplus \Omega$ associates with each state
  its level or its color, with $x_{\lambda(q)}$ the active clock in
  state $q$ for $\lambda(q)\in \N$ and $\lambda(y) \in \Omega$ for $y
  \in Y$. For every state $q \in \lambda^{-1}(\Omega)$, the policy is
  $pol(q)=L$.
\item $up$ and $low$ are mappings from $Y$ to $\Q$ with the same
constraints as CRTA (see Definition~\ref{def:crta}), and $vel:
Q \mapsto \Q$ is the clock rate with $\lambda(q) \notin \Omega
\Rightarrow vel(q)=1$
\item $\Delta \subseteq Q \times [\C(X\cup Y) \times (\Sigma \cup
\{\eps\}) \times \U(X\cup Y)] \times Q$ is the set of transitions.
Let $q \tr{\fee, a, u} q'$ in $\Delta$ be a transition.
\begin{enumerate}
\item The guard $\fee$ is of the form $\fee_1 \wedge\fee_2$ with the
 following conditions.  If $\lambda(q)\in \N$, $\fee_1$ is an ITA
 guard on $X$ and otherwise $\fee_1=true$. Constraint $\fee_2$ is a
 CRTA guard on $Y$ (also possibly equal to $true$).
\item The update $u$ is of the form $u_1 \wedge u_2$ fullfilling the
 following conditions. Assignments from $u_1$ update the clocks in
 $X$ with the constraints of ITA when $\lambda(q)$ and $\lambda(q')$
 belong to $\N$.  Otherwise it is a global reset of clocks in $X$.
 Assignments from $u_2$ update clocks from $Y$, like in CRTA.
\end{enumerate}

\end{itemize}
\end{definition}

Any ITA can be viewed as an ITA$^+$ with $Y$ empty and $\lambda(Q)
\subseteq \{1, \ldots, n\}$, and any CRTA can be viewed as an ITA$^+$
with $X$ empty and $\lambda(Q) \subseteq \Omega$. Class ITA$^+$
combines both models in the following sense. When the current state
$q$ is such that $\lambda(q) \in \Omega$, the ITA part is inactive.
Otherwise, it behaves as an ITA but with additional constraints about
clocks of the CRTA involved by the extended guards and updates.  The
semantics of ITA$^+$ is defined as usual but now takes into account
the velocity of CRTA clocks.

\begin{definition}[Semantics of ITA$^+$]
The semantics of an automaton $\A$ in ITA$^+$ is defined by the
transition system $\T_{\A}= (S, s_0, \rightarrow)$. The set $S$ of
configurations is $\left\{ (q,v) \mid q \in Q, \ v \in \R^{X\cup Y}, \ \beta \in
    \{\top,\bot\} \right\}$,
with initial configuration $(q_0, \vect{0}, \bot)$. An accepting
configuration of $\T_{\A}$ is a pair $(q,v)$ with $q$ in $F$. The
relation $\rightarrow$ on $S$ consists of time steps and discrete
steps, the definition of the latter being the same as before:
\begin{description}
\item[Time steps:] Only the active clocks in a state can evolve, all
other clocks are suspended.  For a state $q$ with $\lambda(q) \in
\N$ (the active clock is $x_{\lambda(q)}$), a time step of duration
$d>0$ is defined by $(q,v,\beta) \tr{d} (q, v',\top)$ with
$v'(x_{\lambda(q)})=v(x_{\lambda(q)})+ d$ and $v'(x)=v(x)$ for any
other clock $x$.  For a state $q$ with $\lambda(q) \in \Omega$ (the
active clocks are $Y'=Y \cap \lambda^{-1}(\lambda(q))$), a time step
of duration $d>0$ is defined by $(q,v,\beta) \tr{d} (q, v',\top)$ with
$v'(y)=v(y)+ vel(q)d$ for $y \in Y'$ and $v'(x)=v(x)$ for any other
clock $x$. In all states, time steps of duration $d=0$ leave the system $\T_{\A}$ in the same configuration. When $pol(q)=U$, only time steps of duration $0$ $q$ are allowed.
\item[Discrete steps:] A discrete step $(q, v) \tr{a} (q', v')$ occurs
if there exists a transition $q \tr {\fee, a, u} q'$ in $\Delta$
such that $v \models \fee$ and $v' = v[u]$. When $pol(q)=D$ and $\beta=\bot$, discrete
steps are forbidden.
\end{description}
\end{definition}

In order to illustrate the interest of the combined models, an example
of a (simple) login procedure is described in \figurename~\ref{fig:excomb}
as a TA with interruptions at a single level.


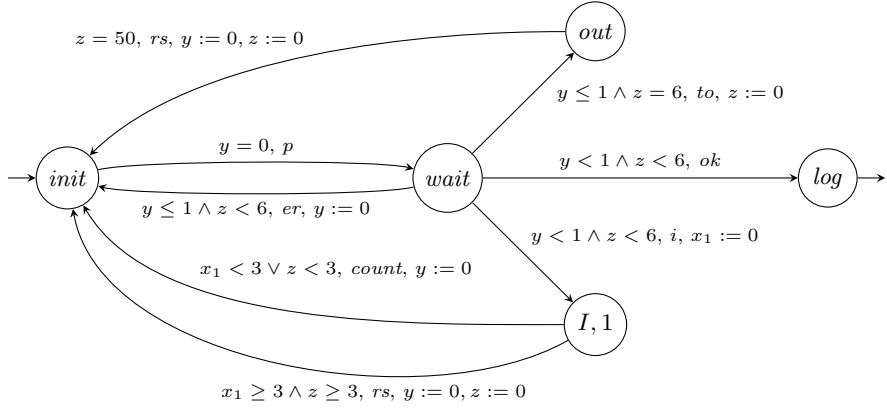
\begin{figure}
\centering
\begin{tikzpicture}[auto,node distance=2.75cm]
\tikzstyle{every node}+=[font=\scriptsize]
\tikzstyle{every state}+=[font=\normalsize]

\node[state] (wait) at (0,0) {\emph{wait}};
\node[state,initial,node distance=5cm,left of=wait] (init) {\emph{init}};
\node[state,accepting,node distance=5cm,right of=wait] (log) {\emph{log}};
\node[state,above right of=wait] (out) {\emph{out}};
\node[state,below right of=wait] (i1) {$I,1$};

\path[->] (init) edge [out=15,in=165,looseness=0.3] node {$y=0$, $p$} (wait);
\path[->] (wait) edge [out=-165,in=-15,looseness=0.3] node {$y \leq 1 \wedge z < 6$, \emph{er}, $y:=0$} (init);
\path[->] (wait) edge node {$y < 1 \wedge z < 6$, \emph{ok}} (log);
\path[->] (wait) edge node [swap,pos=0.75] {$y \leq 1 \wedge z=6$, \emph{to}, $z:=0$} (out);
\path[->] (wait) edge node {$y < 1 \wedge z < 6$, $i$, $x_1:=0$} (i1);
\path[->] (out) edge [bend right,looseness=0.75,out=-15] node [swap] {$z=50$, \emph{rs}, $y:=0, z:=0$} (init);
\path[->] (i1) edge [out=180,in=-60,looseness=0.75] node [swap,pos=0.7] {$x_1 < 3 \vee z < 3$, \emph{count}, $y:=0$} (init);
\path[->] (i1) edge [out=-150,in=-80,looseness=0.8] node [pos=0.35] {$x_1 \geq 3 \wedge z \geq 3$, \emph{rs}, $y:=0, z:=0$} (init);
\end{tikzpicture}
\caption{An automaton for login in ITA$^+$}
\label{fig:excomb}
\end{figure}

First it immediately displays a prompt and arms a time-out of $1$
t.u. handled by clock $y$ (transition $init \tr{p} wait$).  Then
either the user answers correctly within this delay (transition $wait
\tr{ok} log$) or he answers incorrectly or let time elapse, both
cases with transition $wait \tr{er} init$, and the system prompts
again.  The whole process is controlled by a global time-out of $6$
t.u. (transition $wait \tr{to} out$) followed by a long suspension
($50$ t.u.) before reinitializing the process (transition $out \tr{rs}
init$).  Both delays are handled by clock $z$.  At any time during the
process (in fact in state $wait$), a system interrupt may occur
(transition $wait \tr{i} I$).  If the time spent (measured by clock
$x_1$) during the interrupt is less than $3$ t.u. or the time already
spent by the user is less than $3$ t.u., the login process resumes
(transition $I \tr{cont} init$). Otherwise the login process is
reinitialized allowing again the $6$ t.u. (transition $I \tr{rs}
init$).  In both cases, the prompt will be displayed again.  Since
invariants are irrelevant for the reachability problem we did not
include them in the models. Of course, in this example state $wait$
should have invariant $y \leq 1 \wedge z \leq 6$ and state $out$
should have invariant $z\leq 50$.

\bigskip
We extend the decidability and complexity results of the previous
models when combining them with CRTA. Class ITA$^+_-$ is obtained in a
similar way by combining ITA$_-$ with CRTA.
\begin{proposition}\label{prop:itaplusmoins}
\begin{enumerate}[label=\arabic*.]
\item The reachability problem for ITA$^+_-$ is decidable in
  \emph{NEXPTIME} and is \emph{PSPACE}-complete when the number of
  interrupt clocks is fixed.
\item The reachability problem for ITA$^+$ is decidable in
  \emph{NEXPTIME} and is \emph{PSPACE}-complete when the number of
  interrupt clocks is fixed.
\end{enumerate}
\end{proposition}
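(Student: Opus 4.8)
The plan is to reduce both statements to the machinery already developed for ITA$_-$ in Section~\ref{sec:reachcomplexity}, combined with the region construction for CRTA of~\cite{Zielonka}; claim~2 will follow from claim~1 once ITA$^+$ has been reduced to ITA$^+_-$. For that reduction I would prove an ITA$^+$ analogue of Proposition~\ref{proposition:itaitamoins}: given $\A$ in ITA$^+$, build $\A'$ in ITA$^+_-$ with the same clocks by applying the construction of Section~\ref{sec:relexp} to the interrupt part only and leaving the CRTA part untouched (its updates already lie in $\U_0$). Since every transition crossing the border between an interrupt level and a CRTA colour performs a global reset of the interrupt clocks, the memorised expressions never need to be carried through CRTA states and are re-initialised to $x_1,\dots,x_{i-1}$ whenever the interrupt part is (re)entered; hence the blow-up is exactly that of Proposition~\ref{proposition:itaitamoins} --- exponential in general, polynomial when the number $n$ of interrupt clocks is fixed --- and $n$ is preserved. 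As in Theorem~\ref{thm:optreach}, this exponential blow-up is harmless for a NEXPTIME bound, so it suffices to treat ITA$^+_-$.

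A run of an ITA$^+_-$ decomposes into maximal \emph{interrupt phases} (maximal factors of transitions between states whose level lies in $\N$) and \emph{CRTA phases}, separated by border transitions. The key observations are: during an interrupt phase the CRTA clocks are frozen and may only be reset, each at most once per phase; during a CRTA phase the interrupt clocks stay equal to $\vect 0$; and every border crossing sets all interrupt clocks to $\vect 0$. Let $\Rl_Y$ be the region graph of the CRTA sub-automaton (finite, of size exponential in $|Y|$ and in the constants, time-abstract bisimilar to the CRTA part), and along any run keep track of the region of the CRTA valuation, which inside an interrupt phase changes only at the CRTA resets. Now take a reachability run $\rho$ of minimal length and bound it. Cut each interrupt phase at its at most $|Y|$ CRTA resets; between two consecutive such resets the CRTA clocks are constant, so the corresponding factor behaves as a run of an ITA$_-$ and, by Lemma~\ref{lemma:counting} and the shortening arguments of Proposition~\ref{proposition:reachitamoins}, has length at most $(E+n)^{3n}$ --- these shortenings only delete or replace factors of level at least that of the repeated transition, hence leave the frozen CRTA clocks and the exit configuration of the phase untouched. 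Thus each interrupt phase of $\rho$ has length at most $(|Y|{+}1)(E+n)^{3n}$ and each CRTA phase has length at most $|\Rl_Y|$. Finally, since all interrupt clocks equal $\vect 0$ at the start of every phase, the ``interface'' state there is merely a pair (automaton state, region of the CRTA valuation); two phases of $\rho$ cannot start from the same interface state, for otherwise the infix between them could be removed (using time-abstract bisimilarity of $\Rl_Y$), contradicting minimality. Hence $\rho$ has at most $2\,|Q|\,|\Rl_Y|$ phases and length exponential in the size of $\A'$.

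For claim~1 in NEXPTIME, guess a path of length at most this bound together with the positions of the time steps, and solve the associated linear program over the clock values after each step and the delays of the time steps; the CRTA velocities being rational constants, the program stays linear, has exponential size and is solved in polynomial time~\cite{RoTeVi97}. When $n$ is fixed, interrupt phases have polynomial length but the number of phases remains exponential, so the product is explored on the fly, storing only the current state and the current CRTA region (polynomially many bits): one alternates CRTA steps (updating the region) with entire interrupt phases, each guessed within polynomial length and validated by a polynomial-size linear program in which the CRTA guards of the phase are evaluated through the region and the CRTA resets of the phase are reflected in the region, until a final state is reached --- an NPSPACE $=$ PSPACE procedure. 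PSPACE-hardness already holds for reachability in timed automata~\cite{courcoubetis92}, which are precisely the ITA$^+$ (and ITA$^+_-$) with no interrupt clock, whence PSPACE-completeness. Claim~2 then follows from claim~1 via the reduction above, the exponential blow-up being harmless for NEXPTIME and the reduction being polynomial when $n$ is fixed.

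The main obstacle is to make the shortening arguments of Section~\ref{sec:reachcomplexity} cohabit with the CRTA part across the two time scales: one has to check that deleting or replacing a factor inside an interrupt phase disturbs neither the frozen CRTA clocks (dealt with by cutting at the CRTA resets) nor the subsequent CRTA phases, and that the phase-level repetition argument is sound modulo the region abstraction (replaying the suffix from an earlier, region-equivalent interface state via time-abstract bisimilarity). Getting these invariants right, together with the precise ITA$^+$ version of Proposition~\ref{proposition:itaitamoins}, is where the real work lies.
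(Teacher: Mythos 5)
Your proposal is correct and follows essentially the same route as the paper: reduce ITA$^+$ to ITA$^+_-$ by applying the construction of Proposition~\ref{proposition:itaitamoins} to the interrupt part only (the blow-up being harmless for NEXPTIME and polynomial for fixed $n$), and for ITA$^+_-$ combine the CRTA class/region abstraction on $Y$ with Lemma~\ref{lemma:counting} applied between resets of $Y$-clocks to bound interrupt excursions by $O(|Y|(E+2n)^{3n})$, check them by linear programs, use an on-the-fly region search for the fixed-$n$ PSPACE upper bound, and get hardness from TA. The only divergences are cosmetic: the paper packages the NEXPTIME procedure as an elementary-path search in the CRTA class graph with one linear program per excursion, its correctness resting on the observation that excursion feasibility depends only on the class because guards never mix $X$ and $Y$ (so it never needs your phase-counting/minimality argument), and since a $Y$-clock may syntactically be reset several times inside a phase one should cut at the paper's ``useful'' (first) resets, which is what your ``clocks constant between cuts'' argument in fact uses.
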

\begin{proof}~
\vspace{-\baselineskip}
\paragraph{Case of ITA$^+_-$.}
Let $\A= \langle Q, q_0, F, pol, X\uplus Y, \Sigma, \Omega, \lambda,
up, low, vel, \Delta \rangle$ be an ITA$^+_-$, with $n = |X|$ the
number of ITA clocks, $p= |Y|$ the number of CRTA clocks and
$E=|\Delta|$ the number of transitions.

We first consider the reachability problem for two states $q_i$ and
$q_f$ on the CRTA level (with $\lambda(q_i) \in \Omega$ and
$\lambda(q_f) \in \Omega$).  The procedure consists in performing a
non deterministic search along an elementary path where the vertices
are graph classes of the CRTA. Let $(q,Z)$ be the current class, the
procedure chooses non deterministically the next class $(q',Z')$ and
checks that there exists a configuration of $(q,Z)$ and an execution
only through states $q''$ with $\lambda(q'') \in \N$ that leads to a
configuration of $(q',Z')$.  This is solved as previously by non
deterministically choosing an execution path, building a linear
program related to the path (of exponential size) and solving it. Let
us prove that such a path can be chosen whose length is in
$O(p(E+2n)^{3n})$.

\bigskip Assume that there is a run $\pi$ from $(q,v) \in (q,Z)$ to
some configuration $(q',v') \in (q',Z')$ such that all intermediate
states $q''$ are such that $\lambda(q'')\in \N$.  We say that a
transition $e$ of $\pi$ \emph{usefully resets} a clock $y \in Y$ if it
is the first transition of $\pi$ that resets $y$. Observe that there
are at most $p$ useful resetting transitions and that between two such
successive transitions (or before the first one or after the last one)
the value of the clocks of $Y$ are unchanged when transitions are
fired.

We consider a subrun $\rho$ between two such successive transitions
(or before the first one or after the last one) from $(q_1,v_1)$ to
$(q_2,v_2)$, with $m_k$ the number of transitions of level $k$. 

Using Lemma~\ref{lemma:counting}, 
we build a subrun $\rho'$ from $(q_1,v_1)$ to $(q_2,v_2)$ 
of length smaller than $(E+2n)^{3n}$.
Concatenating the subruns, the useful resetting transitions
and the initial transition,
one obtains a run $\pi'$
from $(q,v)$ to $(q',v')$ of length in $O(p(E+2n)^{3n})$. 

The key point ensuring correctness of the procedure is that the
existence of a solution depends only on the starting class $(q,Z)$
and not on the configuration inside this class.  This is due to the
separation of guards and updates between the two kinds of clocks on
the transitions.

When state $q_i$ (resp. $q_f$) is not at the basis level, the
procedure adds an initial (resp. final) guess also checked by a
linear program. When the number of clocks is fixed the dominant
factor is the path search in the class graph 
and PSPACE hardness follows from the result in TA.

\paragraph{Case of ITA$^+$.}
We transform the ITA part of the automaton in ITA$_-$ \emph{via} the
procedure of proposition~\ref{proposition:itaitamoins} and apply the
procedure for ITA$^+_-$.  \qed
\end{proof}

It is also possible to build a class graph for $ITA^+$, combining a class graph for ITA and a region graph for TA.
This yields the regularity of the untimed language of an ITA$^+$, hence the strict inclusion in the languages accepted by a stopwatch automaton.

Let ITL$^+$ be the family of timed languages defined by ITA$^+$.
The class ITL$^+$ syntactically contains ITL$\,\cup\,$CRTL.
We can however have a stronger result:
\begin{proposition}
The class ITL$^+$ \emph{strictly} contains ITL$\,\cup\,$CRTL.
\end{proposition}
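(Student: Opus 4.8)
The plan is to exhibit a single language that lies in $\textrm{ITL}^+$ but in neither $\textrm{ITL}$ nor $\textrm{CRTL}$, reusing the two separating languages already at hand. Take $L_4\subseteq(\{c\}\times\Rp)^*$ from Proposition~\ref{prop:comp}, which belongs to $\textrm{ITL}$ but not to $\textrm{CRTL}$, and $L_5\subseteq(\{a,b\}\times\Rp)^*$ from Section~\ref{subsec:incomp}, which belongs to $\textrm{TL}\subseteq\textrm{CRTL}$ but, as shown above, not to $\textrm{ITL}$. Their alphabets are disjoint, so the candidate witness is $L=L_4\cup L_5$. For convenience one may assume $\eps\notin L_4$, e.g.\ by requiring $n\geq 1$ in the definition of $L_4$, which changes nothing to the arguments below; then $\eps\notin L$, and every word of $L$ uses either only the letter $c$ or only the letters $a$ and $b$.

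First I would check that $L\in\textrm{ITL}^+$. Since $\textrm{ITL}^+$ already contains $\textrm{ITL}\cup\textrm{CRTL}$ syntactically, it suffices to show that this family is closed under union: starting from an ITA (seen as an ITA$^+$) for $L_4$ and a CRTA (seen as an ITA$^+$ with $X=\emptyset$) for $L_5$, add a fresh urgent initial state $q_{\textrm{init}}$ at interrupt level $1$ together with two $\eps$-transitions to disjoint copies of the two original initial states. A transition from $q_{\textrm{init}}$ towards an ITA state is a plain ITA transition, and a transition towards a CRTA state is allowed to perform a global reset of the interrupt clocks; in both cases all clocks still carry value $0$, so the transition is vacuous and the two sub-automata are faithfully glued. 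This yields an ITA$^+$ accepting $L_4\cup L_5$.

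The remaining two steps are symmetric and rely on the observation that both $\textrm{ITL}$ and $\textrm{CRTL}$ are closed under restriction to a sub-alphabet, obtained by deleting from an accepting automaton every transition whose label lies outside the chosen sub-alphabet (an ITA stays an ITA, a CRTA stays a CRTA). If $L\in\textrm{ITL}$, restricting to $\{a,b\}$ would give $L_5\in\textrm{ITL}$, contradicting the earlier result; if $L\in\textrm{CRTL}$, restricting to $\{c\}$ would give $L_4\in\textrm{CRTL}$, contradicting Proposition~\ref{prop:comp}. Hence $L\in\textrm{ITL}^+\setminus(\textrm{ITL}\cup\textrm{CRTL})$, and the inclusion is strict. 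The only mildly delicate point is in the first step: one must make sure that the glueing transitions, in particular those crossing from an interrupt level to a CRTA colour, respect the syntactic restrictions on guards and updates in ITA$^+$ — which they do, being $\eps$-moves issued from the all-zero configuration. \qed
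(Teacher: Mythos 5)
Your proof is correct, but it takes a different route from the paper's. The paper builds a single ITA$^+$, $\A_4\otimes\A_5$, by placing the TA $\A_5$ (for $L_5$) at the implicit level $0$ and the ITA $\A_4$ (for $L_4$) at the interrupt levels, linked by one unlabeled transition, so its witness is essentially the \emph{concatenation} of the two behaviours; non-membership in ITL and CRTL is then argued via closure of both families under projection (relabeling letters to $\eps$). You instead take the \emph{union} $L_4\cup L_5$ over disjoint alphabets, realize it in ITA$^+$ with a fresh urgent initial state at level $1$ and two $\eps$-moves (the move into the CRTA colour with the mandatory global reset of interrupt clocks is indeed legal, and the one into the ITA copy is an ordinary ITA transition), and rule out ITL and CRTL membership by closure under restriction to a sub-alphabet, i.e.\ deletion of all transitions labeled outside it. Both arguments are sound; yours is arguably tighter on the non-membership side, since restricting $L_4\cup L_5$ to $\{c\}$ or $\{a,b\}$ returns \emph{exactly} $L_4$ or $L_5$ (your remark discarding $\eps$ from $L_4$ takes care of the only corner case), whereas the paper's projection of the concatenated language onto $\{c\}$ yields time-shifted variants of $L_4$ and so leans a bit more heavily on the cited non-expressibility results; the price you pay is the explicit verification that the union glueing respects the ITA$^+$ syntax, which you correctly identify and discharge, while the paper's construction needs no fresh state and directly showcases the intended multi-level structure of ITA$^+$.
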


\begin{proof}
  Recall ITA $\A_4$ of \figurename~\ref{fig:ctrex2}, whose language
  $L_4$ is not in CRTL, and let $Q_4$ be its set of states.  Also
  recall TA $\A_5$ of \figurename~\ref{fig:ctrex3}, whose language
  $L_5$ is not in ITL, with set of states $Q_5$.  Let
  $\A_4\otimes\A_5$ be the ITA$^+$ having $\A_5$ at level $0$ and
  $\A_4$ at levels $1$ and $2$.

  Formally, $\A_4\otimes\A_5$ has set of states $Q_4 \cup Q_5$, which
  are all lazy.  Interrupt clocks of $\A_4\otimes\A_5$ are
  $\{x_1,x_2\}$ (active according to $\A_4$).  Its basic clocks are
  $\{z,y\}$ of velocity $1$.  Both have the same color as states of
  $Q_5$. The bounding functions $up$ (resp. $low$) map both $z$ and
  $y$ to $1$ (resp. $0$).  Transitions of $\A_4\otimes\A_5$ are the
  ones of $\A_4$ and $\A_5$, adding an unguarded, unlabeled transition
  from $\A_5$'s final state to $\A_4$'s initial one.

  $\A_4\otimes\A_5$ accepts timed words which start with an
  alternation of $a$s and $b$s, with the $b$ drawing always closer to
  its preceding $a$ (as in $\A_5$), and then contains only $c$s
  separated by the same amount of time (as in $\A_4$).  Since both
  CRTL and ITL are closed under projection, $\La(\A_4\otimes\A_5)$
  cannot be accepted by a CRTA nor an ITA.  \qed
\end{proof}

\section{Conclusion}

In this paper, we introduced and studied the model of Interrupt Timed
Automata.  This model is useful to represent timed systems with tasks
organized over priority levels.

\begin{figure}[h]
\centering
\begin{tikzpicture}
\path[thick,draw,fill=black!10] (0,-0.2) ellipse (4.125cm and 2.5cm);
\node at (0,1.9) {LHA = SWA};
\path[thick,draw,fill=white] (0,-0.5) ellipse (3.75cm and 2cm);
\node at (0,-2) {ITA$^+$ = ITA$^+_-$};
\path[thick,draw,xshift=-1cm,yshift=-0.35cm,rotate=-30] (0,0) ellipse (2cm and 1cm);
\node at (-2,0.5) {CRTA};
\path[thick, draw, xshift=-0.5cm,yshift=-0.6cm,rotate=-30] (0,0) ellipse (1.25cm and 0.625cm);
\node at (-1,-0.25) {TA};
\path[thick,draw,xshift=1cm,yshift=-0.35cm,rotate=30] (0,0) ellipse (2cm and 1cm);
\node at (1.75,0.25) {ITA = ITA$_-$};

\node[anchor=west,draw,dashed,inner sep=5pt] at (4.45,-0.25) {\parbox{2.675cm}{%
\raggedright
\begin{description}[itemsep=0.5em,topsep=0pt,parsep=0pt,leftmargin=1.3cm,style=nextline,font=\bfseries]
\item[LHA] Linear Hybrid Automata
\item[SWA] Stopwatch Automata
\item[CRTA] Controlled Real-Time Automata
\item[TA] Timed Automata
\item[ITA] Interrupt Timed Automata
\item[ITA$^+$] Extended ITA
\end{description}
}};
\end{tikzpicture}
\caption{Expressiveness of several timed formalisms with respect to timed languages.}
\label{fig:expressiveness}
\end{figure}
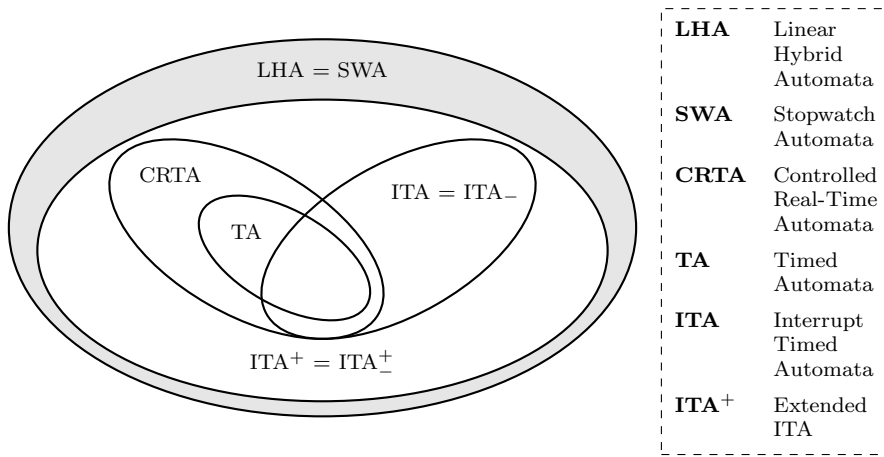

While ITA fall into the more general class of hybrid systems, the
reachability problem is proved decidable for this subclass.  For ITA,
the reachability is in NEXPTIME, and PTIME when the number of clocks
is fixed by building a class graph.  Similar constructions yield
decidability of the reachability problem on an extension of ITA where
the lowest priority level can behave as a Controlled Real-Timed
Automata.  It also yields procedure for model checking \ctlstar
formulas and timed \ctl formulas constraining only the clocks of the
system. Another fragment of interest was identified in timed \ctl as
decidable: the one where the only time constraints concern global
earliest or latest execution times. On the other hand, model checking
the linear time logic \scl is proved undecidable on ITA, implying that
this is also the case for \mitl.

On the expressiveness point of view, the class ITL is proved
incomparable with both TL and CRTL, and is neither closed under
complementation nor intersection. 
\MS{Reviewer 2, Rq 11. Done.}The expressiveness results are
summed up in \figurename~\ref{fig:expressiveness}, where the grey zone
represents undecidability of the reachability problem.

\MS{Reviewer 1, Rqs 3, 4, 7, 8, 13, 18; reviewer 2 long commentaires 2
  et 3. \`A traiter dans la conclu. + open issue \scl without policy +
  hierarchie sur le nb d'horloges.}  Several problems remain open on
the class of ITA.  First of all, the effect of having both (limited)
stopwatches and linear expressions in guards is combined in ITA, and
it is not known which is the cause of the undecidability results
presented in this paper.  For instance, the undecidability of \scl may
not hold without the possibility of complex updates. More generally,
the expressive power of the subclass of ITA restricted with
rectangular guards ($x + b \rel 0$) and only resets ($x:= 0$) should be
investigated. Also, it is conjectured that the class of ITA with $n+1$
clocks is strictly more expressive than the class of ITA with $n$
clocks.  Regarding model-checking, the undecidability of full \tctl
remains to be established. Finally, complexity bounds presented in
this paper are only upper-bounds, and matching lower-bounds are still
missing.

\paragraph{Acknowledgments}
The authors would like to thank the anonymous reviewers for their insightful comments.  This work was supported by
projects \textsc{Dots} (ANR-06-SETI-003, French government), \textsc{ImpRo} (ANR-2010-BLAN-0317, French government) and
\textsc{CoChaT} (Digiteo 2009-27HD, R\'egion \^Ile de France).

\bibliographystyle{splncs2}
\bibliography{journalITA}

\begin{thebibliography}{10}

\bibitem{aceto98}
Aceto, L., Burgue\~{n}o, A., Larsen, K.G.:
\newblock Model checking via reachability testing for timed automata.
\newblock In: Proceedings of the 4th International Conference on Tools and
  Algorithms for Construction and Analysis of Systems (TACAS'98). Volume 1384
  of Lecture Notes in Computer Science, London, UK, Springer-Verlag (1998)
  263--280.

\bibitem{alur93}
Alur, R., Courcoubetis, C., Dill, D.L.:
\newblock Model-checking in dense real-time.
\newblock Information and Computation \textbf{104} (1993)  2--34.

\bibitem{alur95}
Alur, R., Courcoubetis, C., Halbwachs, N., Henzinger, T.A., Ho, P.H., Nicollin,
  X., Olivero, A., Sifakis, J., Yovine, S.:
\newblock The algorithmic analysis of hybrid systems.
\newblock Theoretical Computer Science \textbf{138} (1995)  3--34.

\bibitem{alur94a}
Alur, R., Dill, D.L.:
\newblock A theory of timed automata.
\newblock Theoretical Computer Science \textbf{126} (1994)  183--235.

\bibitem{alur96}
Alur, R., Feder, T., Henzinger, T.A.:
\newblock The benefits of relaxing punctuality.
\newblock Journal of the ACM \textbf{43}(1) (janvier 1996)  116--146.

\bibitem{asarin95}
Asarin, E., Maler, O., Pnueli, A.:
\newblock Reachability analysis of dynamical systems having piecewise-constant
  derivatives.
\newblock Theoretical Computer Science \textbf{138}(1) (1995)  35--65.

\bibitem{asarin07}
Asarin, E., Schneider, G., Yovine, S.:
\newblock Algorithmic analysis of polygonal hybrid systems, part {I}:
  Reachability.
\newblock Theoretical Computer Science \textbf{379}(1-2) (2007)  231--265.

\bibitem{uppaal04}
Behrmann, G., David, A., Larsen, K.G.:
\newblock A tutorial on {\sc uppaal}.
\newblock In: Formal methods for the design of real-time systems
  ({SFM}-{RT}'04). Volume 3185 of Lecture Notes in Computer Science, Springer
  (2004)  200--236.

\bibitem{berard09}
B\'erard, B., Haddad, S.:
\newblock {Interrupt Timed Automata}.
\newblock In: {P}roceedings of the 12th {I}nternational {C}onference on
  {F}oundations of {S}oftware {S}cience and {C}omputation {S}tructures
  ({FoSSaCS}'09). Volume 5504 of Lecture Notes in Computer Science, York, GB,
  Springer (March 2009)  197--211.

\bibitem{berard10}
B{\'e}rard, B., Haddad, S., Sassolas, M.:
\newblock Real time properties for interrupt timed automata.
\newblock In Markey, N., Wisjen, J., eds.: Proceedings of the 17th
  International Symposium on Temporal Representation and Reasoning (TIME'10),
  IEEE Computer Society (September 2010)  69--76.

\bibitem{berard98}
B\'{e}rard, B., Petit, A., Diekert, V., Gastin, P.:
\newblock Characterization of the expressive power of silent transitions in
  timed automata.
\newblock Fundamenta Informaticae \textbf{36}(2-3) (1998)  145--182.

\bibitem{bouyer04}
Bouyer, P.:
\newblock Forward analysis of updatable timed automata.
\newblock Formal Methods in System Design \textbf{24}(3) (2004)  281--320.

\bibitem{bouyer08}
Bouyer, P., Brihaye, {\relax Th}., Jurdzi{\'n}ski, M., Lazi{\'c}, R.,
  Rutkowski, M.:
\newblock Average-price and reachability-price games on hybrid automata with
  strong resets.
\newblock In Cassez, F., Jard, C., eds.: {P}roceedings of the 6th
  {I}nternational {C}onference on {F}ormal {M}odelling and {A}nalysis of
  {T}imed {S}ystems ({FORMATS}'08). Volume 5215 of Lecture Notes in Computer
  Science, Saint-Malo, France, Springer (September 2008)  63--77.

\bibitem{bouyer05}
Bouyer, P., Chevalier, F., Markey, N.:
\newblock On the expressiveness of {TPTL} and~{MTL}.
\newblock In: Proceedings of the 25th {C}onference on {F}oundations of
  {S}oftware {T}echnology and {T}heoretical {C}omputer {S}cience ({FSTTCS}'05).
  Volume 3821 of Lecture Notes in Computer Science, Springer (December 2005)
  432--443.

\bibitem{kronos98}
Bozga, M., Daws, C., Maler, O., Olivero, A., Tripakis, S., Yovine, S.:
\newblock {KRONOS: A Model-Checking Tool for Real-Time Systems}.
\newblock In: FTRTFT. (1998)  298--302.

\bibitem{brihaye06}
Brihaye, T., Bruy\`{e}re, V., Raskin, J.F.:
\newblock On model-checking timed automata with stopwatch observers.
\newblock Information and Computation \textbf{204}(3) (2006)  408--433.

\bibitem{brihaye11}
Brihaye, T., Doyen, L., Geeraerts, G., Ouaknine, J., Raskin, J.F., Worrell, J.:
\newblock On reachability for hybrid automata over bounded time.
\newblock In Aceto, L., Henzinger, M., Sgall, J., eds.: Proceedings (part {II})
  of the 38th International Colloquium on Automata, Languages and Programming
  (ICALP'11). Volume 6756 of Lecture Notes in Computer Science, Springer (July
  2011)  416--427.

\bibitem{cassez00}
Cassez, F., Larsen, K.G.:
\newblock The impressive power of stopwatches.
\newblock In Palamidessi, C., ed.: Proceedeings of the 11th International
  Conference on Concurrency Theory (CONCUR'00). Volume 1877 of Lecture Notes in
  Computer Science, Springer (2000)  138--152.

\bibitem{Cim02}
Cimatti, A., Clarke, E., Giunchiglia, E., Giunchiglia, F., Pistore, M., Roveri,
  M., Sebastiani, R., Tacchella, A.:
\newblock {NuSMV} version 2: An opensource tool for symbolic model checking.
\newblock In: Proceedings of the 14th International Conference on Computer
  Aided Verification (CAV'02). Volume 2404 of Lecture Notes in Computer
  Science, Springer (2002)  241--268.

\bibitem{courcoubetis92}
Courcoubetis, C., Yannakakis, M.:
\newblock Minimum and maximum delay problems in real-time systems.
\newblock Formal Methods in System Design \textbf{1}(4) (1992)  385--415.

\bibitem{Zielonka}
Demichelis, F., Zielonka, W.:
\newblock Controlled timed automata.
\newblock In Sangiorgi, D., de~Simone~Robert, eds.: Proceedings of the 9th
  International Conference on Concurrency Theory (CONCUR'98). Volume 1466 of
  Lecture Notes in Computer Science, London, UK, Springer-Verlag (1998)
  455--469.

\bibitem{emerson82}
Emerson, E.A., Halpern, J.Y.:
\newblock Decision procedures and expressiveness in the temporal logic of
  branching time.
\newblock In: Proc. 14th annual {ACM} {S}ymp. on {T}heory of {C}omputing
  ({S}toc'82), ACM (1982)  169--180.

\bibitem{fersman07}
Fersman, E., Krcal, P., Pettersson, P., Yi, W.:
\newblock Task automata: Schedulability, decidability and undecidability.
\newblock Information and Computation \textbf{205}(8) (2007)  1149--1172.

\bibitem{hen98}
Henzinger, T.A., Kopke, P.W., Puri, A., Varaiya, P.:
\newblock What's decidable about hybrid automata?
\newblock Journal of Computer and System Sciences \textbf{57}(1) (1998)
  94--124.

\bibitem{HNSY94}
Henzinger, T.A., Nicollin, X., Sifakis, J., Yovine, S.:
\newblock Symbolic model checking for real-time systems.
\newblock Information and Computation \textbf{111(2)} (1994)  193--244.

\bibitem{kesten99}
Kesten, Y., Pnueli, A., Sifakis, J., Yovine, S.:
\newblock Decidable integration graphs.
\newblock Information and Computation \textbf{150}(2) (1999)  209--243.

\bibitem{koymans90}
Koymans, R.:
\newblock Specifying real-time properties with metric temporal logic.
\newblock Real-Time Systems \textbf{2} (1990)  255--299.

\bibitem{lafferriere99}
Lafferriere, G., Pappas, G.J., Yovine, S.:
\newblock A new class of decidable hybrid systems.
\newblock In: Proceedings of Hybrid Systems : Computation and Control. Volume
  1569 of Lecture Notes in Computer Science, Springer (1999)  137--151.

\bibitem{lafferriere01}
Lafferriere, G., Pappas, G.J., Yovine, S.:
\newblock Symbolic reachability computation for families of linear vector
  fields.
\newblock Journal of Symbolic Computation \textbf{32}(3) (2001)  231--253.

\bibitem{maler92}
Maler, O., Manna, Z., Pnueli, A.:
\newblock From timed to hybrid systems.
\newblock In Rozenberg, G., de~Roever, W.P., Huizing, C., de~Bakker, J.W.,
  eds.: Real-time: theory in practice, REX workshop. Volume 600 of Lecture
  Notes in Computer Science, Springer-Verlag (1992)  447--484.

\bibitem{mcmanis94}
McManis, J., Varaiya, P.:
\newblock Suspension automata: A decidable class of hybrid automata.
\newblock In: Proceedings of the 6th International Conference on Computer Aided
  Verification (CAV'94), London, UK, Springer-Verlag (1994)  105--117.

\bibitem{minsky67}
Minsky, M.L.:
\newblock Computation: finite and infinite machines.
\newblock Prentice-Hall, Inc., Upper Saddle River, NJ, USA (1967).

\bibitem{pnueli77}
Pnueli, A.:
\newblock The temporal logic of programs.
\newblock In: Proceedings of the 18th Annual Symposium on Foundations of
  Computer Science (FoCS'77), Washington, DC, USA, IEEE Computer Society (1977)
   46--57.

\bibitem{queille82}
Queille, J.P., Sifakis, J.:
\newblock Specification and verification of concurrent systems in {CESAR}.
\newblock In Dezani-Ciancaglini, M., Montanari, U., eds.: Proceedings of the
  5th International Symposium on Programming. Volume 137 of Lecture Notes in
  Computer Science, London, UK, Springer-Verlag (1982)  337--351.

\bibitem{raskin97}
Raskin, J.F., Schobbens, P.Y.:
\newblock State clock logic: A decidable real-time logic.
\newblock In Maler, O., ed.: Hybrid and Real-Time Systems. Volume 1201 of
  Lecture Notes in Computer Science.
\newblock Springer Berlin / Heidelberg (1997)  33--47.

\bibitem{RoTeVi97}
Roos, C., Terlaky, T., Vial, J.P.:
\newblock Theory and {A}lgorithms for {L}inear {O}ptimization. {A}n {I}nterior
  {P}oint {A}pproach.
\newblock Wiley-Interscience, John Wiley \& Sons Ltd (1997).

\bibitem{silberschatz08}
Silberschatz, A., Galvin, P.B., Gagne, G.:
\newblock Operating Systems Concepts. 8th edn.
\newblock John Wiley \& Sons, Inc. (jul 2008).

\bibitem{sistla85}
Sistla, A.P., Clarke, E.M.M.:
\newblock The complexity of propositional linear temporal logics.
\newblock Journal of the ACM \textbf{32} (jul 1985)  733--749.

\bibitem{zas75}
Zaslavsky, T.:
\newblock Facing up to arrangements: Face-count formulas for partitions of
  space by hyperplanes.
\newblock AMS Memoirs \textbf{1}(154) (1975).

\end{thebibliography}

\newpage
\appendix
\section{Proof of Theorem~\ref{thm:mctctlintdec}}\label{app:proofmctctlindec}

Let $\varphi$ be a formula in
\textsf{TCTL}$_c^{\textrm{int}}$ and $\mathcal{A}$ an ITA with $n$
levels and $E$ transitions. Like in Section~\ref{sec:regular}, the proof relies on the
construction of a finite class graph. The main difference is in the
computation of the $n$ sets of expressions $E_1, \dots, E_n$. Like
before, each set $E_k$ is initialized to $\{x_k, 0\}$ and expressions
in this set are those which are relevant for comparisons with the
current clock at level $k$. In this case, they include not only guards
but also comparisons with the constraints from the formula. Recall
that the sets are computed top down from $n$ to $1$, using the
normalization operation.
\begin{itemize}
\item At level $k$, we may assume that expressions in guards of an
  edge leaving a state are of the form $\alpha x_k+\sum_{i<k}a_ix_i+b$
  with $\alpha \in \{0,1\}$. We add $-\sum_{i<k}a_ix_i-b$ to $E_k$.
\item To take into account the constraints of formula $\varphi$, we
  add the following step: For each comparison $C \rel 0$ in $\varphi$,
  and for each $k$, with $\mathtt{norm}(C,k) = \alpha
  x_k+\sum_{i<k}a_ix_i+b$ ($\alpha \in \{0,1\}$), we also add expression
  $-\sum_{i<k} a_ix_i - b$ to $E_k$.
\item Then we iterate the following procedure until no new term is
added to any $E_i$ for $1\leq i\leq k$.
	\begin{enumerate}
      \item Let $q \tr{\fee,a,u} q'$ with $\lambda(q')\geq k$ and
        $\lambda(q)\geq k$. If $C \in E_{k}$, then we add $C[u]$ to
        $E_{k}$.
      \item Let $q \tr{\fee,a,u} q'$ with $\lambda(q') \geq k$ and
        $\lambda(q) < k$. For $C,C' \in E_{k}$, we compute
        $C''=\mathtt{norm}(C[u]-C'[u],\lambda(q))$. If $C''= \alpha
        x_{\lambda(q)}+\sum_{i<\lambda(q)}a_ix_i+b$ with $\alpha \in
        \{0,1\}$, then we add $-\sum_{i<\lambda(q)}a_ix_i-b$ to
        $E_{\lambda(q)}$.
	\end{enumerate}
\end{itemize}
The proof of termination for this construction is similar to the one
in Section~\ref{sec:regular}.

We now consider the transition system $\G_{\A}$ whose set of
configurations are the classes $R = (q,\{\preceq_k\}_{1 \leq k \leq
  \lambda(q)})$, where $q$ is a state and $\preceq_k$ is a total
preorder over $E_k$.  The class $R$ describes the set of valuations
$\sem{R}=\{(q,v) \mid \forall k \leq \lambda(q)\ \forall (g,h) \in
E_k,\ g[v] \leq h[v]$ iff $g \preceq_k h\}$.  The set of transitions
is defined as in Section~\ref{sec:regular}. The transition system
$\G_{\A}$ is again finite and time abstract bisimilar to $\T_{\A}$.
Moreover, the truth value of each comparison $C = \sum_{i \geq 1} a_i
\cdot x_i + b \rel 0$ appearing in $\varphi$ can be set for each class
$R$.  Indeed, since for every $k$, both $0$ and $\sum_{i \geq 1}^{k-1}
a_i \cdot x_i + b$ are in the set of expressions $E_k$, the truth
value of $C \rel 0$ does not change inside a class.  Therefore,
introducing a fresh propositional variable $q_C$ for the constraint $C
\rel 0$, each class $R$ can be labelled with a truth value for each
$q_C$.  Deciding the truth value of $\varphi$ can then be done by a
classical \textsf{CTL} model-checking algorithm on $\G_{\A}$.

The complexity of the procedure is obtained by bounding the number of
expressions for each level $k$ by
$(E+|\varphi|+2)^{2^{n(n-k+1)}+1}$, and applying the same
reasoning as for proposition~\ref{prop:reachita}.

\section{Proof of Theorem~\ref{prop:reachundec}}\label{app:proofreachundec}

  We build an automaton in ITA$\times$TA which simulates a
  deterministic two counter machine $\M$ (as in proof of Theorem~\ref{thm:mcsclundec}).

Let $L_\M$ be the set of labels of $\M$.
The automaton $\mathcal{A}_{\mathcal{M}} =
\langle\Sigma,Q,q_0,F,pol,X \cup Y,\lambda,\Delta\rangle$ is built
to reach its final location $Halt$ if and only if $\mathcal{M}$
stops. It is defined as follows:
\begin{itemize}
\item $\Sigma$ consists of one letter per transition.
\item \(Q = L_\M \cup (L _\M\times \{k_0\}) \cup (L_\M \times
  \{k_1,k_2,r_1,\dots,r_5\} \times \{>,<\})\), \(q_0 = \ell_0\) (the
  initial instruction of $\mathcal{M}$) and \(F = \{Halt\}\).
  \item \(pol: Q \rightarrow \{Urgent,Lazy,Delayed\}\) is such that
    \(pol(q) = Urgent\) iff either $q \in L_\M$ or \(q =
    (\ell,q_2,\bowtie)\), and \(pol(q) = Lazy\) in most other cases:
    some states \((\ell,k_i,\bowtie)\) are \emph{Delayed}, as shown on
    \figurename~\ref{fig:incremc} and~\ref{fig:decremc}.
  \item \(X = \{x_1,x_2,x_3\}\) is the set of interrupt clocks and \(Y
    = \{y_c,y_d\}\) is the set of standard clocks with rate $1$.
  \item \(\lambda: Q \rightarrow \{1,2,3\}\) is the interrupt level of
    each state.  All states in $L_\M$ and $L_\M \times \{k_0,k_1,k_2\}$ are
    at level $1$; so do all states corresponding to $r_1$.  States
    corresponding to $r_2$ and $r_3$ are in level $2$, while the ones
    corresponding to $r_4$ and $r_5$ are in level $3$.
  \item $\Delta$ is defined through basic modules in the sequel.
\end{itemize}

The transitions of $\mathcal{A}_{\mathcal{M}}$ are built within small
modules, each one corresponding to one instruction of $\mathcal{M}$.
The value $n$ of $c$ (resp. $p$ of $d$) in a state of $L_\M$ is encoded
by the value \(1 - \frac{1}{2^n}\) of clock $y_c$ (resp. \(1 -
\frac{1}{2^p}\) of $y_d$).

The idea behind this construction is that for any standard clock $y$,
it is possible to ``copy'' the value of $k-y$ in an interrupt clock
$x_i$, for some constant $k$, provided the value of $y$ never exceeds
$k$.  To achieve this, we start and reset the interrupt clock, then
stop it when \(y = k\).  Note that by the end of the copy, the value
of $y$ has changed. Conversely, in order to copy the content of an
interrupt clock $x_i$ into a clock $y$, we switch from level $i$ to
level $i+1$ and reset $y$ at the same time.  When $x_{i+1} = x_i$, the
value of $y$ is equal to the value of $x_i$. Remark that the form of
the guards on $x_{i+1}$ allows us to copy the value of a linear
expression on \(\{x_1,\dots,x_i\}\) in $y$.

For instance, consider an instruction labeled by $\ell$ incrementing
$c$ then going to $\ell'$, with the respective values $n$ of $c$ and
$p$ of $d$, from a configuration where $n \geq p$.  The corresponding
module \(\mathcal{A}_{c\geq d}^{c++}(\ell,\ell')\) is depicted on
\figurename~\ref{fig:modincremc} (see main text). In this module, interrupt clock
$x_1$ is used to record the value $\frac{1}{2^n}$ while $x_2$ keeps
the value $\frac{1}{2^p}$.  Assuming that \(y_c = 1 - \frac{1}{2^n}\),
\(y_d = 1 - \frac{1}{2^p}\) and \(x_1 = 0\) in state \((\ell,r_1,>)\),
the unique run in \(\mathcal{A}_{c\geq d}^{c++}(\ell,\ell')\) will end
in state $\ell'$ with \(y_c = 1 - \frac{1}{2^{n+1}}\) and \(y_d = 1 -
\frac{1}{2^p}\).  The intermediate clock values are shown in
\tablename~\ref{tab:valhorloges} (see main text).

The module on \figurename~\ref{fig:modincremc} can be adapted for the
case of decrementing $c$ by just changing the linear expressions in
guards for $x_3$, provided that the final value of $c$ is still greater
than the one of $d$.  It is however also quite easy to adapt the same
module when \(n < p\): in that case we store $\frac{1}{2^p}$ in $x_1$
and $\frac{1}{2^n}$ in $x_2$, since $y_d$ will reach $1$ before $y_c$.
We also need to start $y_d$ before $y_c$ when copying the adequate
values in the clocks. The case of decrementing $c$ while \(n \leq p\)
is handled similarly.  In order to choose which module to use
according to the ordering between the values of the counters, we use
the modules of \figurename~\ref{fig:incremc} and~\ref{fig:decremc}.
\figurename~\ref{fig:incremc}
represents the case when at label $\ell$ we
have an increment of $c$ whereas \figurename~\ref{fig:decremc}
represents the case when $\ell$ corresponds to decrementing $c$.  In
that last case the value of $c$ is compared not only to the one of
$d$, but also to $0$, in order to know which branch of the \emph{if}
instruction is taken.
Note that only one of the branches can be taken until the end\footnote{State policies are used to treat the special cases, \textit{e.g.} $y_c = y_d = 0$.}.
Instructions involving $d$ are handled in a symmetrical way.

\begin{figure}
\centering
\begin{tikzpicture}[node distance=4cm,initial text=,auto,scale=0.5]
\tikzstyle{every node}=[font=\scriptsize]
\tikzstyle{every state}=[draw=black,fill=white,inner xsep=-5pt,font=\small,shape=rounded rectangle]


\node[state] (q0) at (-6,0)  {\etat{\ell}{1,U}};
\node[state] (q7) at (-0.5,0)  {\etat{(\ell,k_0)}{1,L}};
\node[state] (q1) at (16,2.75)  {\etat{(\ell,r_1,>)}{1,L}};
\node[state] (q2) at (16,-2.75) {\etat{(\ell,r_1,<)}{1,L}};
\node[state] (q3) at (4,2.75) {\etat{(\ell,k_1,>)}{1,L}};
\node[state] (q4) at (4,-2.75) {\etat{(\ell,k_1,<)}{1,D}};
\node[state] (q5) at (10,2.75) {\etat{(\ell,k_2,>)}{1,L}};
\node[state] (q6) at (10,-2.75) {\etat{(\ell,k_2,<)}{1,L}};

\path[->] (q0) edge node {\timedtransnoreset{a_{\ell}^0}{x_1:=0}} (q7);
\path[->] (q7) edge node [pos=0.7] {\timedtrans{y_c=1}{a_{\ell,>}^1}{y_c:=0}} (q3);
\path[->] (q7) edge node [pos=0.7,swap] {\timedtrans{y_d=1}{a_{\ell,<}^1}{y_d:=0}} (q4);
\path[->] (q3) edge node {\timedtrans{y_d=1}{a_{\ell,>}^2}{y_d:=0}} (q5);
\path[->] (q4) edge node {\timedtrans{y_c=1}{a_{\ell,<}^2}{y_c:=0}} (q6);
\path[->] (q5) edge node {\timedtrans{x_1=1}{a_{\ell,>}^3}{x_1:=0}} (q1);
\path[->] (q6) edge node {\timedtrans{x_1=1}{a_{\ell,<}^3}{x_1:=0}} (q2);
\end{tikzpicture}
\caption{Module taking into account the order between the values of $c$ and $d$ when incrementing $c$.}
\label{fig:incremc}
\end{figure}
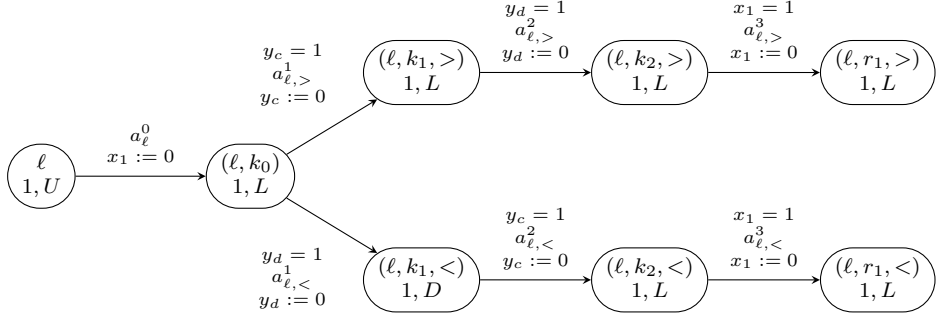
\begin{figure}
\centering
\begin{tikzpicture}[node distance=4cm,initial text=,auto,scale=0.5]
\tikzstyle{every node}=[font=\scriptsize]
\tikzstyle{every state}=[draw=black,fill=white,inner xsep=-5pt,font=\small,shape=rounded rectangle]


\node[state] (q0) at (-6,0)  {\etat{\ell}{1,U}};
\node[state] (q7) at (-0.5,0)  {\etat{(\ell,k_0)}{1,L}};
\node[state] (q1) at (16,2.75)  {\etat{(\ell,r_1,>)}{1,L}};
\node[state] (q2) at (16,-2.75) {\etat{(\ell,r_1,<)}{1,L}};
\node[state] (q3) at (4,2.75) {\etat{(\ell,k_1,>)}{1,D}};
\node[state] (q4) at (4,-2.75) {\etat{(\ell,k_1,<)}{1,L}};
\node[state] (q5) at (10,2.75) {\etat{(\ell,k_2,>)}{1,L}};
\node[state] (q6) at (10,-2.75) {\etat{(\ell,k_2,<)}{1,D}};
\node[state] (q8) at (-6,-3.5)     {\etat{\ell''}{1,U}};

\path[->] (q0) edge node {\timedtrans{y_c>0}{a_{\ell}^0}{x_1:=0}} (q7);
\path[->] (q7) edge node [pos=0.7] {\timedtrans{y_c=1}{a_{\ell,>}^1}{y_c:=0}} (q3);
\path[->] (q7) edge node [pos=0.7,swap] {\timedtrans{y_d=1}{a_{\ell,<}^1}{y_d:=0}} (q4);
\path[->] (q3) edge node {\timedtrans{y_d=1}{a_{\ell,>}^2}{y_d:=0}} (q5);
\path[->] (q4) edge node {\timedtrans{y_c=1}{a_{\ell,<}^2}{y_c:=0}} (q6);
\path[->] (q5) edge node {\timedtrans{x_1=1}{a_{\ell,>}^3}{x_1:=0}} (q1);
\path[->] (q6) edge node {\timedtrans{x_1=1}{a_{\ell,<}^3}{x_1:=0}} (q2);
\path[->] (q0) edge node {\timedtransnoreset{y_c = 0}{a_{\ell,0}}} (q8);
\end{tikzpicture}
\caption{Module taking into account the order between the values of $c$ and $d$ when decrementing $c$.}
\label{fig:decremc}
\end{figure}

Automaton $\mathcal{A}_\mathcal{M}$ is obtained by joining the modules
described above through the states of $L_\M$.  Let us prove that
automaton \(\mathcal{A}_\mathcal{M}\) simulates the two counter
machine $\mathcal{M}$, so that $\mathcal{M}$ halts iff
\(\mathcal{A}_\mathcal{M}\) reaches the \emph{Halt} state.

Let \(\langle\ell_0,0,0\rangle
\langle\ell_1,n_1,p_1\rangle \dots \langle\ell_i,n_i,p_i\rangle\dots\)
be a run of $\mathcal M$.  We show that this run is simulated in
\(\mathcal{A}_\mathcal{M}\) by the run \(\langle
l_0,\mathbf{0}\rangle\rho_0\langle l_1,v_1\rangle\rho_1\dots\) where
$\rho_i$ is either empty or a subrun through states in
\(\{(\ell_i,r_j,\bowtie) \,|\, j \in \{1,\dots,5\}, \bowtie \in
\{>,<\}\}\) (\emph{i.e.} subruns in modules like
\(\mathcal{A}^{c++}_{c \geq d}\) of \figurename~\ref{fig:modincremc}).
Moreover, it will be the case that \[\forall i,\quad v_i(y_c) = 1 -
\frac{1}{2^{n_i}} \quad \textrm{and} \quad v_i(y_d) = 1 -
\frac{1}{2^{p_i}}\] This holds at the beginning of the execution of
\(\mathcal{A}_\mathcal{M}\).  Suppose that we have simulated the
subrun up to \(\langle\ell_i,n_i,p_i\rangle\).  Then we are in state
$\ell_i$, with clock $y_c$ being \(1 - \frac{1}{2^{n_i}}\) and $y_d$
being \(1 - \frac{1}{2^{p_i}}\).  The next configuration of
$\mathcal{M}$, \(\langle\ell_{i+1},n_{i+1},p_{i+1}\rangle\), depends on
the content of instruction $\ell_i$, and so does the outgoing
transitions of state $\ell_i$ in \(\mathcal{A}_\mathcal{M}\).  We
consider the case where $\ell_i$ decrements $c$ and goes to $\ell'$ if
$c$ is greater than 0 and goes to $\ell''$ otherwise, the other ones
being similar.  We are therefore in the case of
\figurename~\ref{fig:decremc}.  If $n_i = 0$, the next configuration
of $\mathcal{M}$ will be \(\langle\ell'',n_i,p_i\rangle\).
Conversely, in \(\mathcal{A}_\mathcal{M}\), if $n_i = 0$ then $y_c =
0$, and there is no choice but to enter $\ell''$, leaving all clock
values unchanged (because $\ell_i$ is an \emph{Urgent} state).  The
configuration of \(\mathcal{A}_\mathcal{M}\) thus satisfies the
property.  If $n_i > 0$, the next configuration of $\mathcal{M}$
will be \(\langle\ell',n_i -1,p_i\rangle\).  In
\(\mathcal{A}_\mathcal{M}\), the transition chosen is the one that
corresponds to the ordering between $n_i$ and $p_i$.  In both cases,
similarly to the example of $\mathcal{A}^{c++}_{c \geq
  d}(\ell,\ell')$, the run reaches state $\ell'$ with \(y_c =1 -
\frac{1}{2^{n_i-1}}\) and $y_d$ as before, thus preserving the
property.  Hence $\mathcal{M}$ halts iff \(\mathcal{A}_\mathcal{M}\) reaches the
\emph{Halt} state.

The automaton \(\mathcal{A}_\mathcal{M}\) is indeed
the product of an ITA $\mathcal{I}$ and a TA $\mathcal{T}$,
synchronized on actions. Observe that in all the modules described
above, guards never mix a standard clock with an interrupt one.  Since
each transition has a unique label, keeping only guards and resets on
either the clocks of $X$ or on those of $Y$ yields an ITA and a TA
whose product is \(\mathcal{A}_\mathcal{M}\).\qed

\end{document}